\documentclass{article}

\usepackage{arxiv}

\usepackage[utf8]{inputenc}
\usepackage[T1]{fontenc}
\usepackage{hyperref}
\usepackage{url}
\usepackage{booktabs}
\usepackage{amsfonts}
\usepackage{amsmath}
\usepackage{amssymb}
\usepackage{amsthm}
\usepackage{nicefrac}
\usepackage{microtype}
\usepackage{cleveref}
\usepackage{graphicx}
\usepackage[numbers]{natbib}
\usepackage{doi}

\usepackage{algorithm}
\usepackage{algpseudocode}
\usepackage{xspace}
\usepackage{float}
\usepackage{placeins}
\usepackage{array}
\usepackage{multirow}
\usepackage{pifont}

\newtheorem{definition}{Definition}
\newtheorem{proposition}{Proposition}

\providecommand{\tsc}[1]{\expandafter\providecommand\csname #1\endcsname{\textsc{\lowercase{#1}}\xspace}}
\tsc{WGM}
\tsc{QE}

\title{SCARA: A Semantics-Constrained Autonomous Remediation Agent
for Opaque Industrial Software Vulnerabilities}

\usepackage{authblk}

\author[1,3]{Bowei Ning\thanks{\texttt{2020183@stu.syuct.edu.cn}}}
\author[2,3]{Xuejun Zong\thanks{Corresponding author. \texttt{xuejun\_zong@syuct.edu.cn}}}
\author[2,3]{Lian Lian\thanks{Corresponding author. \texttt{lianlian@syuct.edu.cn}}}
\author[2,3]{Kan He\thanks{\texttt{hekan1978@syuct.edu.cn}}}
\author[2,3]{Guogang Wang}
\author[2,3]{Yifei Sun\thanks{\texttt{sunyifei@syuct.edu.cn}}}
\author[1,3]{Jinyang Liu\thanks{\texttt{z2022310@stu.syuct.edu.cn}}}

\affil[1]{Shenyang University of Technology, Shenyang 110870, Liaoning, China}
\affil[2]{Shenyang University of Chemical Technology, Shenyang 110142, Liaoning, China}
\affil[3]{Key Laboratory of Information Security for Petrochemical Industry in Liaoning Province, Shenyang 110142, China}

\renewcommand{\shorttitle}{SCARA}

\hypersetup{
pdftitle={SCARA: A Semantics-Constrained Autonomous Remediation Agent for Opaque Industrial Software Vulnerabilities},
pdfauthor={Bowei Ning, Xuejun Zong, Lian Lian, Kan He, Guogang Wang, Yifei Sun, Jinyang Liu},
pdfkeywords={Opaque industrial software, Binary vulnerability analysis, Automated vulnerability remediation, Symbolic execution, Industrial control systems},
}

\begin{document}
\maketitle

% Float placement parameters preserved from the source manuscript.
\renewcommand{\topfraction}{0.95}
\renewcommand{\bottomfraction}{0.85}
\renewcommand{\textfraction}{0.05}
\renewcommand{\floatpagefraction}{0.85}
\renewcommand{\dbltopfraction}{0.95}
\renewcommand{\dblfloatpagefraction}{0.85}
\setcounter{topnumber}{4}
\setcounter{bottomnumber}{3}
\setcounter{totalnumber}{6}
\setcounter{dbltopnumber}{4}

\begin{abstract}
Critical-infrastructure operators are increasingly expected, under policy and
standardization regimes such as Executive Order~14028 and IEC~62443, to assess
and remediate vulnerabilities in deployed software. However, a substantial
fraction of deployed industrial software is delivered as \emph{opaque industrial software} (OIS): stripped firmware, proprietary protocol handlers, and compiled
control logic for which source code, debug symbols, build environments, and
hardware interfaces are unavailable to defenders. Although binary analysis can
flag vulnerability candidates in such artifacts, existing automated vulnerability
repair systems overwhelmingly assume source code, a compilable artifact,
sanitizer feedback, or an instrumentable build. These assumptions do not hold
for opaque industrial binaries, leaving a gap between binary-level vulnerability
discovery and validated remediation.

This paper presents SCARA, a Semantics-Constrained Autonomous Remediation Agent
for OIS. SCARA operates under a source-unavailable defender model in which
upstream binary detectors or analysts provide vulnerability candidates. Its
four-stage pipeline combines operational-state-aware verification (OSVA),
remediation synthesis (RSA), and correctness validation (CVA). OSVA prunes
operationally infeasible candidates under a nine-component industrial state
model. RSA emits the strongest remedy permitted by artifact availability across
three tiers: protocol mitigation, binary hardening, and SSCKG-constrained source
patches. CVA provides conditional correctness evidence through behavioral
coverage preservation, independent replay, and a typed rejection-feedback loop
from CVA to RSA. The key technical novelty is the joint treatment of the
operational-state envelope as a first-class correctness condition and a tiered
remedy selector that degrades gracefully as artifact opacity increases.

On OIS-RemedBench, a 15-case benchmark spanning firmware, protocol
handlers, and ICS/PLC artifacts, SCARA achieves observed 100\% precision with
no false positives in the evaluated partitions, refutes 20.0\% (3/15) of cases
as operationally infeasible, and reaches 88.9\% (8/9) remediation success after
targeted reruns. The UNKNOWN rate drops from 13.3\% to 6.7\%, Tier-2
overblocking from 80.0\% to 25.0\%, and OIS-ICS analyst burden from 17.04 to
2.42 hours per case, corresponding to a $7.0\times$ throughput gain. To our
knowledge, SCARA is the first end-to-end framework that connects binary
vulnerability candidates to conditionally validated remediation for opaque
industrial software.
\end{abstract}

\keywords{Opaque industrial software \and Binary vulnerability analysis \and Automated vulnerability remediation \and Symbolic execution \and Industrial control systems}

%% Highlights (Elsevier-specific) preserved here as an itemised list.
\paragraph{Highlights.}
\begin{itemize}
\item SCARA: first end-to-end framework for validated OIS vulnerability remediation.
\item Tiered remediation synthesis provides non-zero remediation without source code.
\item OIS-RemedBench: first benchmark for opaque industrial software remediation.
\end{itemize}

%% ============================================================
\section{Introduction}\label{sec:intro}
%% ============================================================
Industrial software supply chains are a growing attack surface for critical infrastructure. Executive Order 14028~\cite{eo14028}, NIST SP~800-82r3~\cite{nist800_82}, and IEC~62443~\cite{iec62443} collectively obligate operators of power, chemical, water, and manufacturing systems to assess and remediate vulnerabilities in their deployed software assets, yet that software is routinely delivered as stripped, symbol-free binaries from which source, debug information, and build environments have been permanently removed. We term this class of artifact \emph{Opaque Industrial Software} (OIS); Table~\ref{tab:ois_classes} catalogues its representative classes.

Binary analysis has made identifying vulnerability candidates within OIS increasingly tractable --- graph-neural detectors~\cite{ji2021buggraph}, binary taint analysis~\cite{nilo2020karonte,chen2021satc}, and knowledge-graph behavioural analysis~\cite{ning2026scaa} now produce structured alert sets on stripped firmware --- but alert generation is not remediation. No existing research framework provides an automated, end-to-end path from binary vulnerability candidates to validated remediation for OIS artifacts.

The reason this gap persists is structural. USENIX~2025 SoKs~\cite{hu2025sok,li2025sok_avr} show that contemporary automated vulnerability repair (AVR) systems --- LLM-based, program-analysis-based, and agentic --- achieve 44--90\% repair success across evaluated benchmarks, but every such system requires source code, a compilable artifact, a test suite, or a sanitizer execution trace as its primary input. Systems that operate ``at the binary level'' (e.g.\ VulShield~\cite{li2025vulshield}, CrashRepair~\cite{crashrepair2025}) still require the target to be instrumented, compiled, and executed in a controllable environment --- preconditions that OIS cannot satisfy. The class of software for which automated remediation is most urgently required is precisely the class for which no existing AVR system can operate.

Four specific structural deficits in the existing literature, summarised in Table~\ref{tab:gaps}, account for this persistent gap: source dependency of AVR systems (G1), binary analysis terminating at alert generation (G2), ICS/PLC tooling targeting safety rather than security repair (G3), and the absence of a domain-appropriate correctness criterion for OIS remediation (G4).

\begin{table}[!htbp]
\caption{Structural deficits in prior work that motivate SCARA. Each row corresponds to one of the four gaps surveyed; the right-hand column lists representative systems exhibiting the deficit.}\label{tab:gaps}
\small
\centering
\begin{tabular}{p{0.3cm}p{3.0cm}p{8.0cm}}
\toprule
\# & Structural deficit & Representative systems \\
\midrule
G1 & Source dependency of AVR systems & VRepair~\cite{chen2022vrepair}, VulRepair~\cite{fu2022vulrepair}, VulMaster~\cite{zhou2024vulmaster}, SAN2PATCH~\cite{jiang2025san2patch}, APPATCH~\cite{zhang2025appatch}, PatchAgent~\cite{patchagent2025}, Vul-R2~\cite{vulr22025}; CrashRepair~\cite{crashrepair2025}, CONCH~\cite{conch2024} \\
G2 & Binary analysis stops at alert generation & Firmadyne~\cite{chen2016firmadyne}, FirmAE~\cite{kim2020firmae}, HALucinator~\cite{halucinator2020}, Fuzzware~\cite{fuzzware2022}, FirmSolo~\cite{firmsolo2023}, SAFIREFUZZ~\cite{safirefuzz2023}; KARONTE~\cite{nilo2020karonte}, SaTC~\cite{chen2021satc} \\
G3 & ICS/PLC tools target safety, not security repair & VetPLC~\cite{ahmed2019vetplc}, SymPLC~\cite{symplc2017}, STAutoTester~\cite{stautotester2021}, ICSQuartz~\cite{zheng2025icsquartz}, PLCverif~\cite{darvas2017plcverif} \\
G4 & Appropriate correctness criterion for OIS remediation & --- (negative observation; no system addresses) \\
\bottomrule
\end{tabular}
\end{table}

Together, these four deficits create a precondition mismatch: existing repair systems assume source code, sanitizer traces, or executable test harnesses, whereas OIS remediation often begins from opaque artifacts and incomplete operational documentation.

To address the four gaps identified above, this paper presents SCARA (Semantics-Constrained
Autonomous Remediation Agent), a four-stage framework organized around a tiered remediation
model: Tier~1 mitigation (protocol and configuration protection policies), Tier~2 binary
hardening (runtime guards via binary instrumentation), and Tier~3 source repair
(SSCKG-constrained LLM-synthesized patches).  The five principal contributions, of which
C1--C4 close Gaps~1--4 respectively and C5 supplies the empirical substrate, are as follows.

\textbf{C1 --- Formal OIS remediation problem statement.}  We provide the first formal definition of the opaque industrial software remediation problem, comprising a nine-component operational state model, an availability-based tier mapping $T(A(B))$, a four-class reachability label taxonomy, and the correctness condition for conditional correctness evidence $\varepsilon(v, R_v)$.

\textbf{C2 --- Operational-state-aware reachability verification (OSVA).} OSVA provides reachability evidence under the full operational state model without source or instrumented execution, eliminating 40.0\% of binary alerts as infeasible.

\textbf{C3 --- Tiered SSCKG-constrained remediation synthesis (RSA).} RSA produces the strongest feasible remedy at the applicable tier, constrained in all tiers by the Software Supply Chain Knowledge Graph (SSCKG) behavioural specification.

\textbf{C4 --- Correctness validation (CVA).} CVA produces conditional correctness evidence $\varepsilon(v, R_v)$ combining post-remedy UNSAT confirmation, SSCKG behavioural-coverage preservation, domain-invariant compliance, and independent replay across emulated, protocol-harness, and soft-PLC environments.

\textbf{C5 --- OIS-RemedBench.} The first benchmark for opaque industrial software vulnerability remediation, spanning OIS-Binary, OIS-Protocol, and OIS-ICS partitions with stratified L1--L4 labels.

%% ============================================================
\section{Background and Problem Formulation}\label{sec:background}
%% ============================================================

\subsection{Opaque Industrial Software: Definition and Characteristics}\label{sec:bg:ois}

\begin{definition}[Opaque Industrial Software]
An artifact $B$ is an \emph{Opaque Industrial Software} (OIS) artifact if $B$ is an
industrial software component delivered without (i)~source code, (ii)~build
environment, (iii)~hardware interfaces, (iv)~runtime dependencies, and (v)~operational
configuration.  As a consequence, vulnerability verification and remediation using
conventional source-based methods are infeasible on $B$.
\end{definition}

Three features characterize OIS in practice.  First, \emph{stripping}: OIS binaries
are systematically stripped of debug symbols, function names, type information, and
section metadata before delivery.  Second, \emph{closed toolchains}: the compilers,
linkers, and build configurations used to produce OIS are typically proprietary and
inaccessible, precluding recompilation from any reconstructed source representation.
Third, \emph{physical execution dependencies}: OIS execution depends on physical
hardware --- sensor inputs, actuator outputs, NVRAM state, proprietary I/O modules, and
licensed fieldbus stacks --- that cannot be fully replicated in generic software
emulation.

OIS artifacts can be classified by their availability class $A(B)$, which determines
which remediation tiers SCARA can apply:

\begin{table}[!htbp]
\caption{OIS artifact classes grouped by availability class $A(B) \in \{\text{policy-only},\ \text{binary-rewritable},\ \text{source-available}\}$, with the applicable remediation tier set $T(A)$ (Eq.~\eqref{eq:tier_map}) and the success-counting rule used in Section~\ref{sec:results}. \textsc{Sat-relaxed} cases are counted only after replay confirmation; \textsc{Unsat} resolves the candidate as a false positive (no remedy issued); \textsc{Unknown} yields a Tier-1 advisory only.}\label{tab:ois_classes}
\small
\centering
\begin{tabular}{p{4.5cm}p{3.5cm}p{4.5cm}}
\toprule
Artifact class (example) & Tier set $T(A)$ & Success counted when \\
\midrule
\multicolumn{3}{l}{\textit{Policy-only ($A = \text{policy-only}$)}} \\
Binary firmware / closed Modbus--DNP3 libs / vendor PLC image & Tier~1 only & CVA accepts T1 policy \\
\midrule
\multicolumn{3}{l}{\textit{Binary-rewritable ($A = \text{binary-rewritable}$)}} \\
PIE ELF daemons; re-flashable ARM firmware & Tier~2; T1 fallback & CVA accepts T2 (or T1 fallback) \\
\midrule
\multicolumn{3}{l}{\textit{Source-available ($A = \text{source-available}$)}} \\
libmodbus, open62541; OpenPLC / MATIEC-C~\cite{matiec} & Tier~3; T2/T1 fallback & CVA accepts T3 (or T2/T1 fallback) \\
\bottomrule
\end{tabular}
\end{table}

\subsection{Code Property Graphs and Software Supply Chain Knowledge Graphs}\label{sec:bg:ssckg}

SCARA operates on representations derived from prior binary analysis, specifically the
Software Supply Chain Knowledge Graph (SSCKG) produced by~\cite{ning2026scaa}.

\textbf{Code Property Graph (CPG).}  A CPG~\cite{yamaguchi2014cpg} is a graph $G_{\text{cpg}} = (V_{\text{cpg}}, E_{\text{cpg}})$ that unifies the abstract syntax tree (AST), control-flow graph (CFG), and program dependence graph (PDG) of a program into a single queryable structure.  SCARA consumes a CPG produced by the upstream binary-analysis pipeline as the lowest-level structural input to the SSCKG transformation defined below.

\textbf{Abstract Domain $\mathcal{A}$.}  The abstract domain $\mathcal{A}$ is a finite,
hierarchical lattice of behavioral security labels instantiated from the MITRE ATT\&CK
for ICS taxonomy~\cite{mitre_ics}.  It comprises five macro-behavior categories,
27~action labels, and 43~risk labels.

\textbf{SSCKG and risk-relevant relations.}  The SSCKG is a behavioural graph $G_{\text{ssckg}} = (\mathcal{E}, \mathcal{R})$ derived from $G_{\text{cpg}}$ via a surjective transformation $\Phi: G_{\text{cpg}} \to G_{\text{ssckg}}$. The entity set $\mathcal{E}$ consists of behavioural clusters of semantically related CPG nodes labelled in $\mathcal{A}$; the relation set $\mathcal{R}$ encodes a typed vulnerability-relation alphabet $\Sigma_{\mathcal{R}}$. SCARA operates on the \emph{risk-relevant} subset $\Lambda_{\text{risk}} \subseteq \Sigma_{\mathcal{R}}$ comprising those relation types that can transmit attacker influence, vulnerability preconditions, or remediation side effects across entities --- namely data flow, control dependency, shared memory access, inter-process communication, protocol interaction, and cross-component call. The construction of $G_{\text{ssckg}}$, the full relation taxonomy, the empirical compression ratio against the raw CPG, and the per-entity composite risk score $\rho_r(v) \in [0, 1]$ used by SCARA are properties of the upstream representation and are reported in the SCAA paper.

\subsection{Operational State Model}\label{sec:bg:osm}

Existing symbolic execution frameworks model execution state as a triple $(pc, mem, \varphi)$.  This representation is insufficient for OIS, where vulnerability reachability critically depends on the external operational context.  We extend the standard execution state to a nine-component operational state model:

\begin{equation}\label{eq:state_model}
S = (pc,\ mem,\ \varphi,\ env,\ io,\ q_{\text{proto}},\ q_{\text{runtime}},\ q_{\text{component}},\ t)
\end{equation}

Table~\ref{tab:state_model} defines the semantics of each component.  As a concrete
instance, consider a buffer-overflow candidate in a Modbus coil-write handler: $pc$
locates the basic block performing the unbounded copy; $q_{\text{proto}}$ is the
Modbus FSM state \texttt{PostUnitID}; $C_{\text{io}}$ requires the addressed coil to
lie within the configured coil map of the target device; $q_{\text{runtime}}$
restricts execution to the cyclic-scan phase rather than the boot phase; and
$C_{\text{time}}$ bounds the handler response within the protocol's reply-window
deadline.  A candidate that is satisfiable over $\varphi_{\text{path}}$ alone may
become UNSAT once these external constraints are imposed, which is precisely the
class of false positives that OSVA eliminates.

\begin{table}[!htbp]
\caption{Nine-component operational state model $S$.}\label{tab:state_model}
\small
\centering
\begin{tabular}{lp{4.5cm}p{5.0cm}}
\toprule
Component & Semantics & Industrial example \\
\midrule
$pc$ & Program counter & Current block in Modbus handler \\
$mem$ & Memory and register state & NVRAM register map, I/O table \\
$\varphi$ & Path constraint & SMT formula over inputs \\
$env$ & Config., NVRAM, environment & PLC project config., NVRAM \\
$io$ & Device I/O, fieldbus input & Sensor readings, coil states \\
$q_{\text{proto}}$ & Protocol FSM state & Modbus session state \\
$q_{\text{runtime}}$ & Runtime lifecycle phase & Boot phase, scan cycle $N$ \\
$q_{\text{component}}$ & Cross-binary comm.\ state & IPC queue depth \\
$t$ & Timing constraint & Scan-cycle deadline \\
\bottomrule
\end{tabular}
\end{table}

The \emph{verification condition} for a candidate vulnerability path from source entity
$src$ to sink entity $snk$ is satisfiable if:

\begin{equation}\label{eq:vc}
\varphi_{\text{path}} \wedge C_{\text{env}} \wedge C_{\text{io}} \wedge C_{\text{proto}}
\wedge C_{\text{runtime}} \wedge C_{\text{component}} \wedge C_{\text{time}} \text{ is SAT}
\end{equation}

Each $C_x$ denotes the conjunction of constraints derived from operational-state component $x$, with the extraction procedure detailed in Section~\ref{sec:scara}. SCARA employs a four-class reachability label taxonomy --- \textsc{Sat-strict} (satisfiable under all constraints with witness $(I^*, S^*)$; passed to RSA), \textsc{Sat-relaxed} (satisfiable only after relaxing an under-documented constraint; passed to RSA but requires replay confirmation), \textsc{Unsat} (infeasible under all constraints; recorded as a refutation certificate, no remedy issued), and \textsc{Unknown} (budget exhausted or modelling gap; Tier-1 advisory only) --- which corresponds to the L2 row of the OIS-RemedBench evidence ladder in Table~\ref{tab:label_levels}.

The \emph{correctness condition for conditional correctness evidence $\varepsilon(v, R_v)$}
requires that all four of the following hold:
\begin{enumerate}
\item Re-running OSVA on the post-remedy artifact $B'$ returns UNSAT for the original
  candidate path.
\item SSCKG behavioral coverage is preserved:
  $\text{BCP}(G_{\text{ssckg}}, G'_{\text{ssckg}}) \geq \tau_{\text{cov}}$, where
  $\text{BCP}$ is the Behavioral Coverage Preservation ratio measuring the fraction of
  non-vulnerable SSCKG relations retained after remediation (formally defined in
  Section~\ref{sec:scara}) and $\tau_{\text{cov}}$ is its acceptance threshold.
\item No domain invariant is violated.
\item Where a replay harness is available (L3-labeled cases), independent replay
  confirms the UNSAT outcome.
\end{enumerate}

\subsection{Formal Problem Statement}\label{sec:bg:problem}

\textbf{Input:}
\begin{itemize}
\item Binary artifact $B$ (stripped; no source code available);
\item SSCKG $G_{\text{ssckg}} = (\mathcal{E}, \mathcal{R})$ with behavioral annotations and composite risk scores $\rho_r(v)$;
\item Vulnerable entity set $V_r = \{v \in \mathcal{E} \mid \rho_r(v) \geq \tau\}$;
\item Artifact availability class $A(B) \in \{\text{policy-only}, \text{binary rewritable}, \text{source-available}\}$, determining the tier set
\begin{equation}\label{eq:tier_map}
T(A) = \begin{cases}
\{1\} & \text{if } A = \text{policy-only} \\
\{1, 2\} & \text{if } A = \text{binary-rewritable} \\
\{1, 2, 3\} & \text{if } A = \text{source-available.}
\end{cases}
\end{equation}
\end{itemize}

\textbf{Output:} For each $v \in V_r$: a reachability label $\ell_v$, with witness $(I^*, S^*)$ for SAT labels or refutation constraint $\neg C_j$ for UNSAT; for each $v$ with $\ell_v \in \{\text{SAT-strict}, \text{SAT-relaxed}\}$: a remediation artifact $R_v$ at the strongest tier in $T(A)$ satisfying the correctness condition; and conditional correctness evidence $\varepsilon(v, R_v)$ for each produced remedy.

\textbf{Non-goals.}  SCARA does not discover new vulnerabilities; that function is performed by the upstream binary analysis tool.  SCARA does not provide formal proofs of global safety properties; the correctness evidence it produces is conditional on the completeness of the modeled operational state $S$.

%% ============================================================
\section{The SCARA Framework}\label{sec:scara}
%% ============================================================
\subsection{System Overview}\label{sec:scara:overview}

SCARA decomposes the formal problem across four agents, shown as the pipeline of Figure~\ref{fig:architecture}: CACA normalises heterogeneous vulnerability alerts and estimates the operational context (ablations A8, A9); OSVA performs budgeted reachability verification with controlled relaxation (A1--A3); RSA synthesises the strongest feasible remedy with $\delta$-feedback ingestion (A5, A6); and CVA either issues conditional correctness evidence $\varepsilon(v, R)$ or returns a typed rejection constraint $\delta$ to RSA for at most $K=3$ resynthesis iterations per tier (A4, A7). The four-stage decomposition is motivated by two properties of the OIS-remediation problem: each stage has a distinct failure mode that monolithic systems cannot ablate, and the closed-loop CVA$\to$RSA path with typed $\delta$ enables the iterative refinement that single-pass pipelines lack.

\begin{figure}[!htbp]
  \centering
  \includegraphics[width=0.95\textwidth]{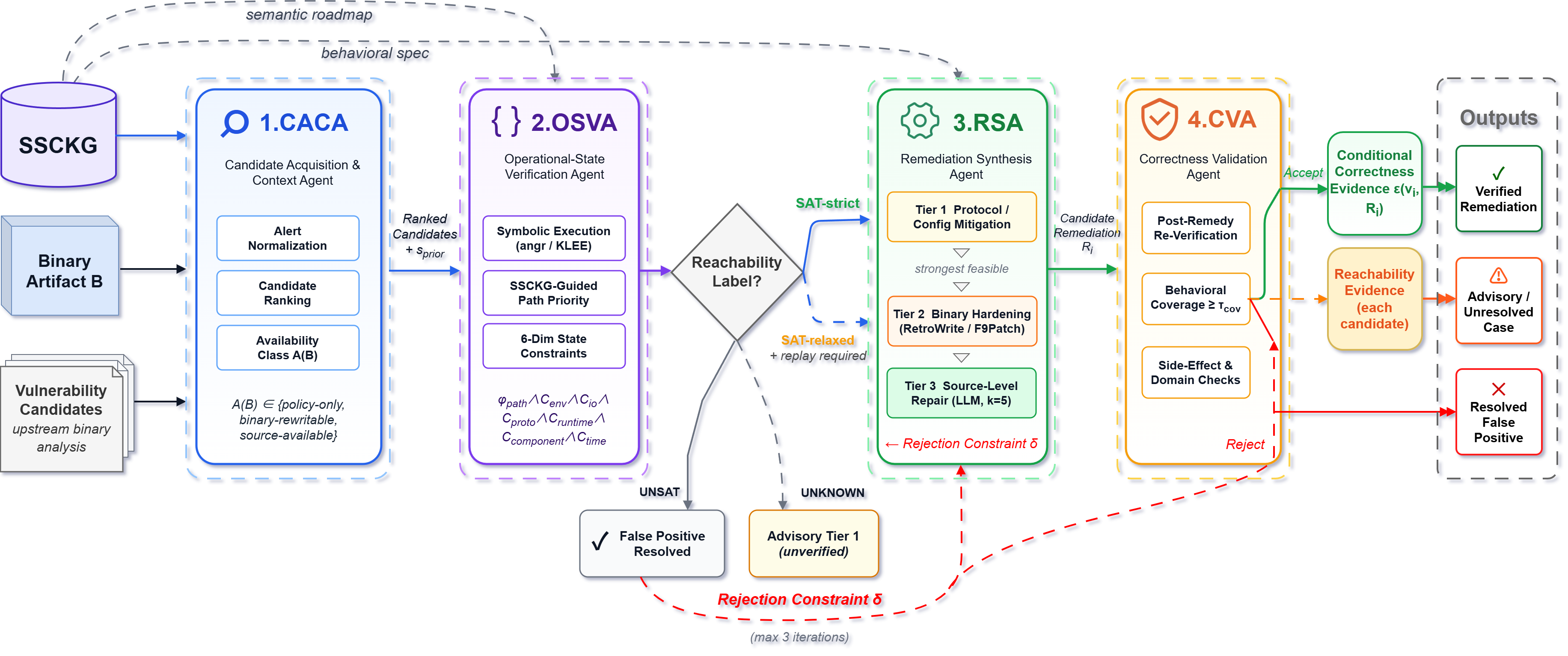}
  \caption{SCARA four-stage pipeline. CACA normalises heterogeneous candidate evidence; OSVA verifies operational reachability under the nine-component state model; RSA synthesises the strongest feasible remedy at the artifact's availability tier; and CVA closes the loop by either issuing conditional correctness evidence $\varepsilon(v, R)$ or returning a typed rejection constraint $\delta$ to RSA for at most $K=3$ resynthesis iterations per tier.}
  \label{fig:architecture}
\end{figure}

The operational context $\Omega$ supplied to CACA is the optional external evidence used to construct $S_i^{\text{prior}}$, to select the relevant constraint families, and to infer the availability class; it is a tuple of per-dimension hints
\begin{equation}\label{eq:omega}
\Omega = (\Omega_{\text{art}},\ \Omega_{\text{env}},\ \Omega_{\text{io}},\ \Omega_{\text{proto}},\ \Omega_{\text{runtime}},\ \Omega_{\text{component}},\ \Omega_{\text{time}},\ \Omega_{\text{replay}})
\end{equation}
covering artifact metadata, environment/NVRAM/configuration, MMIO/fieldbus, protocol FSM, runtime lifecycle, cross-binary topology, timing budgets, and replay-harness availability respectively (full schema in the supplementary material).

\begin{algorithm}[!htbp]
\caption{SCARA Controller}\label{alg:controller}
\begin{algorithmic}[1]
\Require Binary artifact $B$, SSCKG $G_{\text{ssckg}}$, candidate alert set $\mathcal{V}_r$, operational context $\Omega$ (Eq.~\eqref{eq:omega})
\Ensure Per-candidate outcomes: verified remediations, resolved FPs, unconfirmed records, advisories
\State $C \gets \text{CACA.normalize\_and\_rank}(\mathcal{V}_r, G_{\text{ssckg}}, B, \Omega)$
\For{each candidate $c$ in $C$}
    \State $\text{label}, \text{witness}, \text{reason} \gets \text{OSVA.verify}(B, G_{\text{ssckg}}, c)$ \Comment{Alg.~\ref{alg:osva}}
    \If{$\text{label} = \textsc{Unsat}$}
        \State $\text{record\_resolved\_false\_positive}(c, \text{reason})$ \Comment{Prop.~\ref{prop:unsat}}
        \State \textbf{continue}
    \EndIf
    \If{$\text{label} = \textsc{Unknown}$}
        \State $\text{record\_unresolved\_or\_advisory}(c, \Omega)$ \Comment{see below; not routed through Alg.~\ref{alg:rsacva}}
        \State \textbf{continue}
    \EndIf
    \If{$\text{label} = \textsc{Sat-relaxed}$ \textbf{and not} $\text{replay\_available}(\text{witness})$}
        \State $\text{record\_unconfirmed\_candidate}(c, \text{witness})$
        \State \textbf{continue}
    \EndIf
    \State $\text{tiers} \gets \text{feasible\_tiers}(\text{CACA.availability\_class}(B))$
    \State $\text{result} \gets \text{RSA\_CVA\_loop} (B, G_{\text{ssckg}}, \text{witness}, \text{label}, \text{tiers})$ \Comment{Alg.~\ref{alg:rsacva}; $K=3$}
    \State $\text{record}(\text{result})$
\EndFor
\end{algorithmic}
\end{algorithm}

\subsection{Candidate Acquisition and Context Agent}\label{sec:scara:caca}

CACA transforms heterogeneous vulnerability evidence into verification tasks.  Its inputs are an SSCKG $G_{\text{ssckg}}$, node-level risk scores $\rho_r(v)$, the raw binary artifact $B$, and optional context.  Its output is a candidate set, ranked downstream by Eq.~\eqref{eq:rank}:

\begin{equation}\label{eq:caca_set}
C = \{(v_i,\ \text{src}_i,\ \text{snk}_i,\ \rho_i,\ S_i^{\text{prior}})\}
\end{equation}

where $v_i$ is the SSCKG entity associated with the candidate, $\text{src}_i$ and $\text{snk}_i$ denote the suspected source and sink, $\rho_i$ is the candidate risk score, and $S_i^{\text{prior}}$ is the initial estimate of the operational state required for verification.

CACA accepts vulnerability candidates from multiple upstream generators because OIS artifacts differ in available evidence. For source-available or intermediate-representation cases, candidates may originate from static analyzers or source-to-sink rules. For binary-only firmware, candidates may derive from multi-binary taint analysis, keyword-guided firmware analysis, CVE-linked advisory matching against the NVD~\cite{nvd}, or fuzzing crashes mapped back to SSCKG entities. CACA normalizes these inputs by mapping each candidate to the corresponding SSCKG entity and relation type. This mapping is necessary because OSVA's constraint encoding depends on the semantic relation associated with the candidate: a protocol-interaction relation suggests constraints over $q_{\text{proto}}$, whereas a memory-mapped I/O relation suggests constraints over $io$ and $q_{\text{runtime}}$. Unlike upstream generators such as KARONTE~\cite{nilo2020karonte} and SaTC~\cite{chen2021satc}, which produce taint-derived alerts without operational context, CACA enriches each normalized candidate with the state dimensions required for OSVA's constraint encoding, thereby converting a raw alert into a verification task with a defined scope.

\begin{figure}[!htbp]
	\centering
	\includegraphics[width=0.7\textwidth]{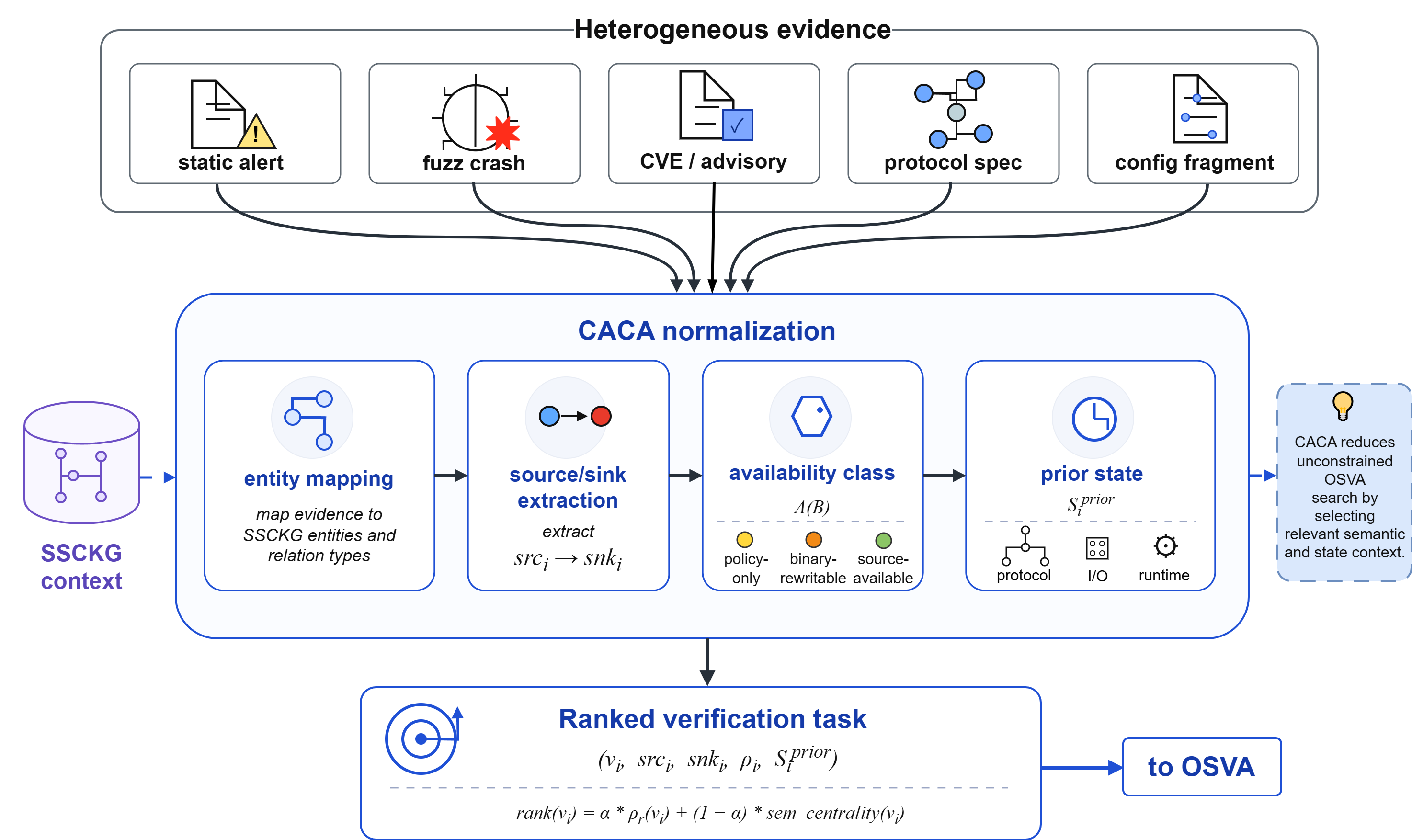}
	\caption{CACA normalisation funnel. Heterogeneous vulnerability evidence (static-analyser candidates, taint-derived alerts, fuzzing crashes, CVE-linked advisories) is mapped to a uniform $(v, \text{src}, \text{snk}, S^{\text{prior}})$ tuple keyed to the SSCKG, producing a single verification task that OSVA can consume.}
	\label{fig:caca}
\end{figure}

CACA ranks candidates using both the risk score inherited from upstream analysis and the semantic centrality of the candidate entity in the SSCKG:

\begin{equation}\label{eq:rank}
\text{rank}(v_i) = \alpha \cdot \rho_r(v_i) + (1 - \alpha) \cdot \text{sem\_centrality}(v_i),\qquad \alpha \in [0, 1].
\end{equation}

The parameter $\alpha$ balances local risk against graph-level importance and is fixed at $\alpha = 0.6$ in the primary experiments, calibrated on the OIS-RemedBench validation subset; the value gives slightly more weight to the upstream risk score than to SSCKG semantic centrality. This design is adopted because high-risk nodes that are semantically isolated may be less urgent than moderately risky nodes that lie on many source-to-sink paths. Semantic centrality is formally defined as the normalized betweenness centrality of $v_i$ in the SSCKG $G_{\text{ssckg}} = (\mathcal{E}, \mathcal{R})$ of Section~\ref{sec:bg:ssckg}, restricted to the risk-relevant relation subset $\Lambda_{\text{risk}} \subseteq \Sigma_{\mathcal{R}}$:

\begin{equation}\label{eq:sem_centrality}
\text{sem\_centrality}(v_i) = \frac{1}{|\mathcal{E}|(|\mathcal{E}|-1)} \sum_{\substack{s, t \in \mathcal{E},\ s \neq t \\ s, t \neq v_i}} \frac{\sigma_{st}(v_i)}{\sigma_{st}}
\end{equation}

where $\sigma_{st}$ is the number of shortest paths between entities $s$ and $t$ over edges in $\Lambda_{\text{risk}}$, and $\sigma_{st}(v_i)$ is the subset passing through $v_i$. Betweenness centrality over the SSCKG is adopted rather than over the raw binary CFG because a node with high SSCKG betweenness lies on many source-to-sink behavioral paths and therefore has disproportionate operational impact at the industrial protocol or control level.

CACA also determines the artifact availability class $A(B) \in \{\text{policy-only},\ \text{binary-rewritable},\ \text{source-available}\}$, which selects the applicable tier set $T(A)$ defined in Section~\ref{sec:bg:problem} (Eq.~\eqref{eq:tier_map}): policy-only artifacts admit only Tier~1; binary-rewritable artifacts admit Tier~1 and Tier~2; source-available artifacts admit all three tiers.

\subsection{Operational-State Verification Agent}\label{sec:scara:osva}

OSVA operates over the operational state model $S$ defined in Section~\ref{sec:bg:osm} (Eq.~\eqref{eq:state_model}) and the six constraint families $C_{\text{env}}, C_{\text{io}}, C_{\text{proto}}, C_{\text{runtime}}, C_{\text{component}}, C_{\text{time}}$ introduced therein.  The prior state estimate $S_i^{\text{prior}}$ supplied by CACA initialises these components before OSVA's constraint encoding begins.

OSVA verifies whether a candidate path is reachable under the modeled operational state. Given a ranked candidate $(v_i, \text{src}_i, \text{snk}_i, S_i^{\text{prior}})$ and binary artifact $B$, OSVA returns a reachability label $\ell_i \in \{\textsc{Sat-strict},\ \textsc{Sat-relaxed},\ \textsc{Unsat}, \textsc{Unknown}\}$.  For satisfiable candidates, OSVA also returns a witness $(I^*, S^*)$, where $I^*$ is a concrete or partially concrete input vector and $S^*$ is the corresponding operational state. For infeasible candidates, OSVA returns the violated constraint or constraint family responsible for the refutation. The verification condition that OSVA discharges is the operational reachability condition of Eq.~\eqref{eq:vc}: a candidate is reachable only if $\varphi_{\text{path}} \wedge \bigwedge_{C_k \in \Sigma_S} C_k$ is satisfiable under the encoded constraints.

\subsubsection{Symbolic Execution Setup}

OSVA selects the symbolic execution engine according to the artifact type.  For binary-only firmware and stripped services, OSVA uses angr~\cite{shoshitaishvili2016angr}, because angr's SimProcedure framework provides behavioral abstractions for unresolved peripheral calls, vendor runtime services, and MMIO operations that KLEE cannot model without source-level stubs. For artifacts that can be reliably lifted to LLVM IR using RetDec or McSema, OSVA operates at the IR level; this path is taken only when lifting quality can be verified, because unreliable lifting introduces spurious paths. For the OIS-ICS partition, where IEC 61131-3 Structured Text programs may be translated into C through MATIEC, OSVA compiles the generated C into LLVM IR and executes it with KLEE~\cite{cadar2008klee}, because KLEE provides path-complete symbolic exploration under LLVM IR that is reliable when the source-to-IR translation faithfully preserves the structured-text semantics. SCARA does not assume that every OIS artifact can be reduced to LLVM IR; the angr path is the fallback for all cases where IR lifting is unavailable or unreliable.  Within the angr path, for protocol-handler cases, OSVA initialises symbolic execution from the state implied by $q_{\text{proto}}^{\text{prior}}$, avoiding the exploration of protocol states that cannot precede the candidate sink; for binary firmware, OSVA represents peripheral inputs, configuration values, and external service responses as symbolic variables constrained by available documentation or inferred SSCKG relation types. These models are deliberately conservative: under-documented dimensions are recorded and may lead to \textsc{Unknown} or \textsc{Sat-relaxed}, rather than being silently treated as complete.

\subsubsection{SSCKG-Guided Path Prioritization}

OSVA employs the SSCKG as a semantic roadmap for symbolic exploration. The substantive insight here is the use of the SSCKG itself as the semantic prior over binary-level path exploration; the SBERT-based encoding and cosine similarity below are off-the-shelf instruments for operationalising it. The SSCKG path from $\text{src}_i$ to $\text{snk}_i$ encodes behaviorally meaningful steps that should be reflected in the binary-level path if the candidate is genuine. OSVA first projects this SSCKG path onto the binary CFG using CACA's entity-to-function mapping, then scores each candidate CFG path according to its semantic alignment with the SSCKG path:

\begin{equation}\label{eq:score}
\text{score}(p_j) = \text{sim}\!\left(\text{SBERT}(\text{labels}(p_j)),\ \text{SBERT}(\text{labels}(P_{\text{ssckg}}))\right)
\end{equation}

Formally, let $\text{labels}(p_j) = (l_1, \ldots, l_m)$ be the ordered sequence of SSCKG entity labels assigned to basic blocks along binary path $p_j$, obtained via CACA's entity-to-function mapping, and let $\text{labels}(P_{\text{ssckg}}) = (l'_1, \ldots, l'_k)$ be the entity label sequence of the SSCKG source-to-sink path. Both sequences are encoded using the SBERT~\cite{reimers2019sbert} sentence-transformer model (all-mpnet-base-v2, 768-dimensional embeddings). The similarity is the standard cosine similarity:

\begin{equation}\label{eq:sim}
\text{sim}(\mathbf{u}, \mathbf{v}) = \frac{\mathbf{u} \cdot \mathbf{v}}{\|\mathbf{u}\| \cdot \|\mathbf{v}\|} \in [-1,\ 1].
\end{equation}

Because cosine similarity is bounded in $[-1, 1]$, scores are mapped to $[0, 1]$ by an affine transformation faithful to Eq.~\eqref{eq:sim}, $\tilde{s}_j = (\text{score}(p_j) + 1) / 2$, rather than by min--max normalisation over the candidate path set; the latter would make the budget allocation depend on the worst-aligned explored path and could exaggerate small score differences when all paths are similarly aligned. OSVA then allocates solver budget proportionally to semantic alignment via an explicit softmax over the explored binary-level path set $P$:

\begin{equation}\label{eq:budget}
\text{budget}(p_j) = T_{\text{total}} \cdot \frac{\exp\!\left(\tilde{s}_j / \tau_p\right)}{\sum_{p_k \in P} \exp\!\left(\tilde{s}_k / \tau_p\right)},
\end{equation}

where $T_{\text{total}}$ is the total per-candidate strict-pass solver budget and $\tau_p$ is the softmax temperature. In the primary experiments $T_{\text{total}} = 300$~s per candidate and $\tau_p = 0.5$, calibrated on the held-out validation subset; the relaxed pass uses a separate budget $T_{\text{relaxed}} = 150$~s. High-scoring paths therefore receive more solver time earlier, while low-scoring paths remain available if the high-priority budget is exhausted. This distinction matters because hard pruning would reduce the false-positive rate at the cost of missing true positives. SCARA employs prioritization rather than pruning so that false-negative rates can be measured and recall can be reported alongside false-positive reduction.

\begin{figure}[!htbp]
	\centering
	\includegraphics[width=0.95\textwidth]{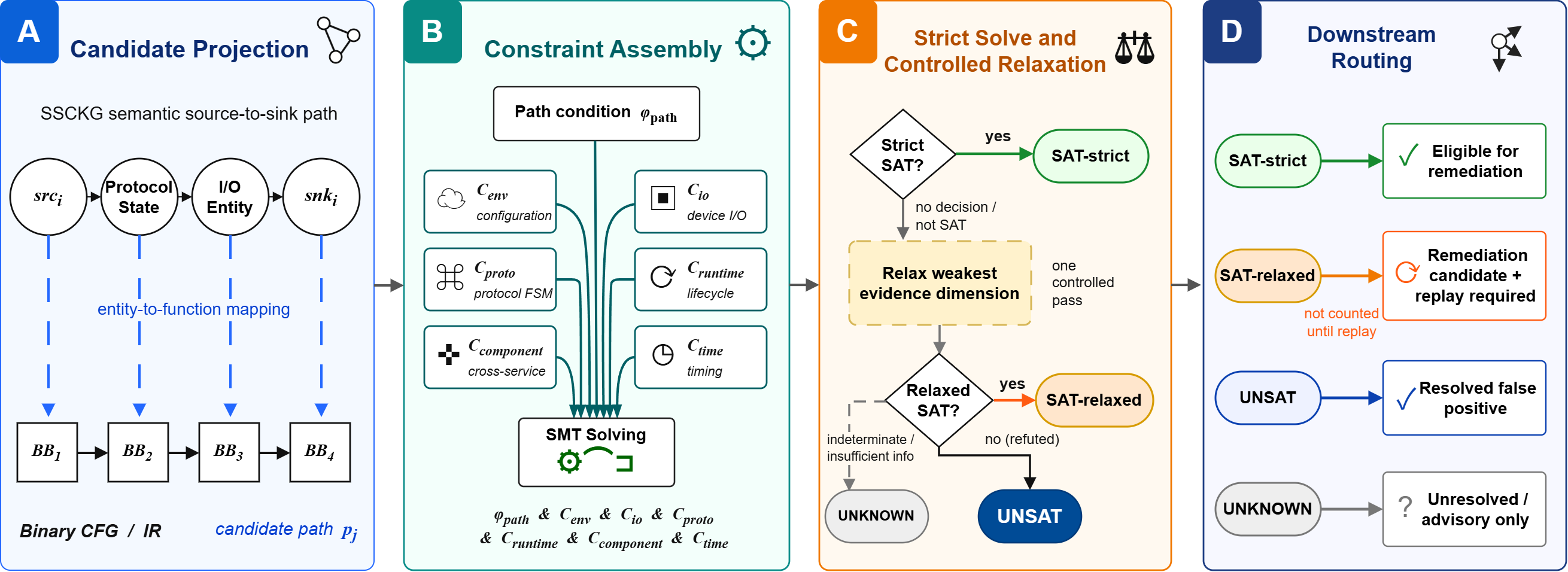}
	\caption{Operational-state reachability analysis in OSVA. A candidate source-to-sink path is projected from the SSCKG to the executable representation and checked against path, environment, I/O, protocol, runtime, component, and timing constraints.
	}\label{fig:osva}
\end{figure}

\subsubsection{Operational-State Constraint Encoding}

OSVA translates the prior estimate $S_i^{\text{prior}}$ into SMT constraints over symbolic variables, one per family in $\Sigma_S$ (Table~\ref{tab:state_model}). Concrete encodings include: $C_{\text{env}}$ over NVRAM ranges and PLC project parameters; $C_{\text{io}}$ over MMIO/fieldbus value ranges (e.g., a Modbus register address restricted to $[0, 65535]$, a coil to a Boolean domain); $C_{\text{proto}}$ over protocol FSM transitions (e.g., session establishment must precede a write); $C_{\text{runtime}}$ over boot/scan-cycle phases and initialisation status; $C_{\text{component}}$ over IPC, shared-memory, or file-mediated transfers; and $C_{\text{time}}$ over watchdogs, scan-cycle deadlines, and guard-overhead budgets.

If OSVA cannot decide a candidate under the strict constraint set within the solver budget, it performs one controlled relaxation pass targeting the constraint dimension with the weakest evidence --- for example, a configuration variable inferred from incomplete documentation. A satisfiable result from the first (strict) pass is labeled \textsc{Sat-strict}. A satisfiable result obtained only after relaxation is labeled \textsc{Sat-relaxed} and is not counted as confirmed until replay validates the witness. If the second pass also fails to establish satisfiability or unsatisfiability, OSVA returns \textsc{Unknown} and records the missing or uncertain dimension to guide manual follow-up.

\subsubsection{Reachability Labels and Downstream Routing}

The four OSVA labels carry different downstream consequences. \textsc{Sat-strict} candidates are eligible for remediation synthesis. \textsc{Sat-relaxed} candidates may be forwarded to RSA, but their remediation is counted as successful only after independent replay confirms the relaxed witness. \textsc{Unsat} candidates are reported as false positives relative to the encoded operational-state model; no remedy is generated, because there is no reachable vulnerability to remediate. \textsc{Unknown} candidates are reported as unresolved; SCARA may produce an advisory Tier 1 policy if an enforcement point exists, but this advisory is not counted as validated remediation.

This routing rule prevents two common errors in alert-driven repair. First, it avoids converting infeasible alerts into unnecessary operational restrictions. Second, it avoids treating conservative policies generated for uncertain cases as evidence of repair success. The distinction is central to SCARA's evaluation, where false-positive refutation, verified remediation, and advisory mitigation are reported as separate outcome categories.

\begin{algorithm}[!htbp]
\caption{OSVA.verify. The strict pass is per-path under budget $T_{\text{total}}$ allocated by Eq.~\eqref{eq:budget}; the relaxed pass is a single global pass over the disjunction of explored path conditions under budget $T_{\text{relaxed}}$, relaxing the weakest-evidence constraint family.}\label{alg:osva}
\begin{algorithmic}[1]
\Require Artifact $B$, SSCKG $G_{\text{ssckg}}$, candidate $c = (v, \text{src}, \text{snk}, S^{\text{prior}})$
\Ensure Label $\ell$, witness $(I^*, S^*)$ or $\varnothing$, reason
\State $P \gets \text{project\_ssckg\_path\_to\_executable\_paths}(G_{\text{ssckg}}, B, c)$
\State $P_{\text{explored}} \gets \emptyset$;\quad $P_{\text{refuted}} \gets \emptyset$;\quad $P_{\text{inconclusive}} \gets \emptyset$
\For{each path $p$ in $\text{prioritize\_by\_semantic\_alignment}(P, G_{\text{ssckg}}, c)$}
    \State $\varphi_p \gets \text{path\_condition}(p)$ \Comment{local; distinct from SSCKG morphism $\Phi$}
    \State $\mathcal{C} \gets \text{encode\_constraints}(c.S^{\text{prior}}, B, p)$
    \State $\text{status}, \text{model} \gets \text{solve}(\varphi_p \wedge \mathcal{C},\ \text{budget}(p))$ \Comment{Eq.~\eqref{eq:budget}}
    \State $P_{\text{explored}} \gets P_{\text{explored}} \cup \{p\}$
    \If{$\text{status} = \textsc{Sat}$}
        \State \Return \textsc{Sat-strict}, $\text{model}$, $\varnothing$ \Comment{Prop.~\ref{prop:sat}}
    \ElsIf{$\text{status} = \textsc{Unsat}$}
        \State $P_{\text{refuted}} \gets P_{\text{refuted}} \cup \{p\}$;\quad $\text{collect\_refuting\_constraint}(p, \mathcal{C})$
    \Else \Comment{$\textsc{Unknown}$ / solver timeout on this path}
        \State $P_{\text{inconclusive}} \gets P_{\text{inconclusive}} \cup \{p\}$;\quad $\text{mark\_path\_inconclusive}(p)$
    \EndIf
\EndFor
\State $\Phi_P \gets \bigvee_{p \in P_{\text{explored}}} \varphi_p$ \Comment{disjunction over explored path conditions}
\State $\mathcal{C}_{\text{relaxed}} \gets \text{relax\_weakest\_evidence\_dimension}(\mathcal{C})$
\State $\text{status}, \text{model} \gets \text{solve}(\Phi_P \wedge \mathcal{C}_{\text{relaxed}},\ T_{\text{relaxed}})$
\If{$\text{status} = \textsc{Sat}$}
    \State \Return \textsc{Sat-relaxed}, $\text{model}$, $\text{relaxed\_dimension}(\mathcal{C}, \mathcal{C}_{\text{relaxed}})$
\EndIf
\If{$P_{\text{refuted}} = P_{\text{explored}}$ \textbf{and} $\text{status} = \textsc{Unsat}$}
    \State \Return \textsc{Unsat}, $\varnothing$, $\text{collected\_refutation}$ \Comment{Prop.~\ref{prop:unsat}}
\EndIf
\State \Return \textsc{Unknown}, $\varnothing$, $\text{timeout\_or\_model\_gap}$
\end{algorithmic}
\end{algorithm}

\subsection{Remediation Synthesis Agent}\label{sec:scara:rsa}

RSA synthesizes the strongest feasible remediation artifact for candidates that OSVA has labeled \textsc{Sat-strict} or, conditionally, \textsc{Sat-relaxed}. Its inputs are the OSVA result, the candidate entity $v_i$, the SSCKG, the prior state model, and the tier set $T(A)$ associated with the artifact availability class. RSA outputs a candidate remediation $R_i$ at one of three tiers.

RSA follows a strongest-feasible-tier policy, where tier strength is ordered as Tier 3 $\succ$ Tier 2 $\succ$ Tier 1, reflecting increasing remediation depth: Tier 3 removes the vulnerable code path at the source level, Tier 2 inserts a runtime guard into the binary, and Tier 1 enforces a policy restriction at an external enforcement boundary. For a policy-only artifact, RSA attempts only Tier 1 mitigation. For a binary-rewritable artifact, RSA attempts Tier 2 binary hardening with Tier 1 as fallback. For a source-available artifact, RSA attempts Tier 3 source repair with Tier 2 and Tier 1 as fallbacks. If CVA rejects a remedy, RSA incorporates the rejection constraint and regenerates a remedy, up to a fixed iteration budget. If all feasible tiers fail validation, the candidate is marked \textsc{Remediation-Failed}.

\begin{figure}[!htbp]
	\centering
	\includegraphics[width=0.95\textwidth]{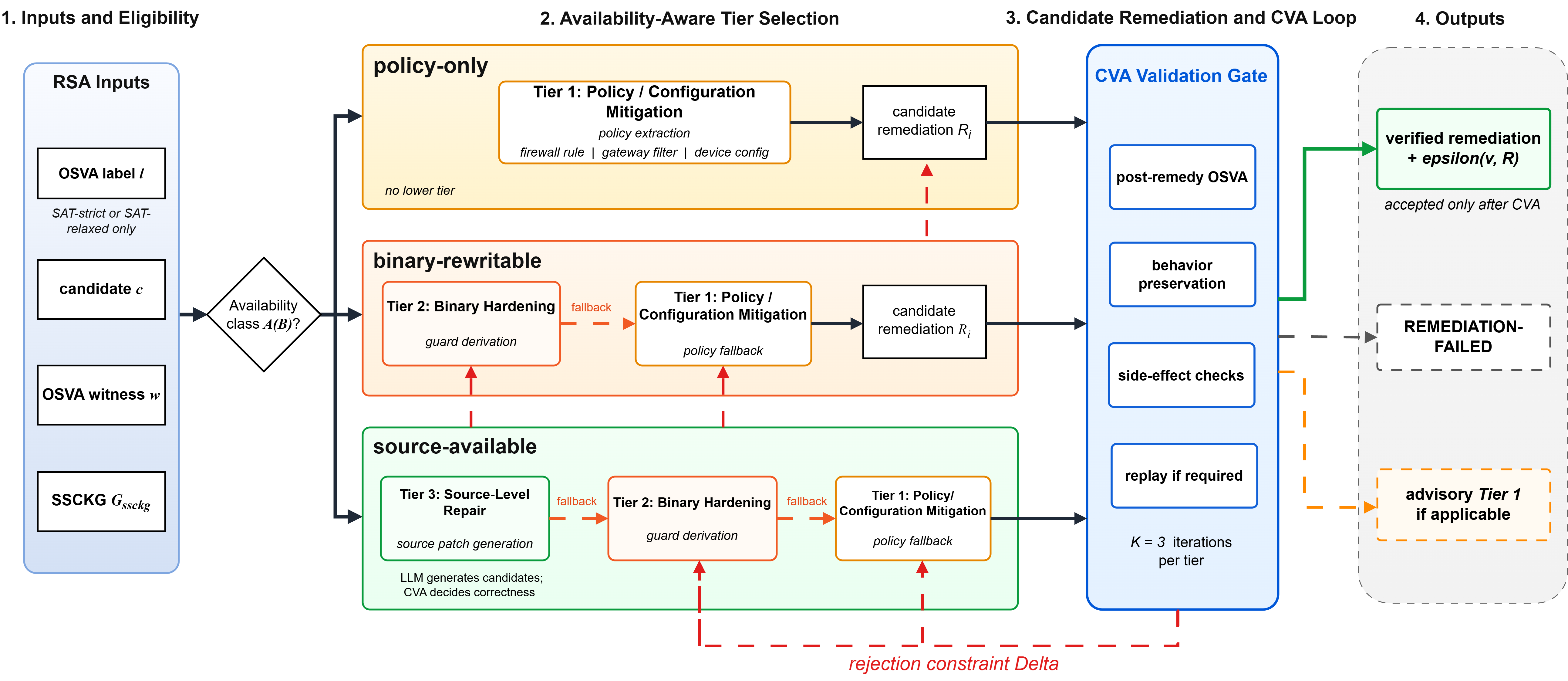}
	\caption{Availability-aware remediation synthesis in RSA. SCARA selects the strongest feasible remediation tier according to artifact availability.
	}\label{fig:rsa}
\end{figure}

\begin{algorithm}[!htbp]
\caption{RSA\_CVA\_loop. Per-tier resynthesis budget $K$ defaults to $K=3$ in the primary experiments. Tier-3 is attempted at most once per candidate; lower tiers are tried only after the higher tier exhausts $K$ iterations.}\label{alg:rsacva}
\begin{algorithmic}[1]
\Require Artifact $B$, SSCKG $G_{\text{ssckg}}$, candidate $c$, witness $w$, label $\ell$, tier set $T$, per-tier resynthesis budget $K$ (default $K=3$), enforcement-point context $\Omega_{\text{replay}}$
\Ensure Accepted remedy with $\varepsilon(v, R)$, or \textsc{Remediation-Failed}, or advisory
\For{tier in $\text{strongest\_to\_weakest}(T)$}
    \State $\Delta \gets \{\}$ \Comment{rejection constraints are tier-local; cross-tier sharing not assumed}
    \For{$\text{iter} = 1$ to $K$}
        \State $R \gets \text{synthesize\_remediation}(\text{tier}, B, G_{\text{ssckg}}, c, w, \Delta)$
        \State $\text{verdict}, \text{ev} \gets \text{CVA.validate}(B, G_{\text{ssckg}}, c, R, \ell)$ \Comment{Alg.~\ref{alg:cva}}
        \If{$\text{verdict} = \textsc{Accept}$}
            \State \Return $\text{verified\_result}(c, R, \text{tier}, \text{ev})$ \Comment{Prop.~\ref{prop:cva}}
        \EndIf
        \State $\Delta \gets \Delta \cup \{\text{ev}\}$
    \EndFor
\EndFor
\If{$\text{has\_enforcement\_point}(\Omega_{\text{replay}})$}
    \State \Return $\text{advisory\_unverified\_policy}(c)$
\EndIf
\State \Return $\text{remediation\_failed}(c)$
\end{algorithmic}
\end{algorithm}

\subsubsection{Tier~1: Protocol and Configuration Mitigation}

Tier~1 generates protocol or configuration policies.  It applies to any OIS artifact for which an enforcement point exists --- a network firewall, protocol gateway, device configuration interface, PLC project parameter, or access-control layer.  Given a reachable path, RSA extracts the protocol states or configuration conditions necessary to traverse the path and identifies the minimal gate operation.

RSA generates a policy that restricts the gate condition while preserving benign operations. Depending on the artifact, the policy may be expressed as a Modbus TCP firewall rule, an OPC-UA role restriction, an IEC 60870-5-104 filter, an \texttt{iptables} or \texttt{nftables} rule, or a device configuration constraint. The generated policy is then forwarded to CVA, which tests protocol conformance, benign trace replay, and false-blocking behavior. A Tier 1 policy is accepted only if it blocks the vulnerability trigger without disrupting required normal operations beyond a configured threshold.

\subsubsection{Tier~2: Binary Hardening}

Tier~2 applies when the artifact is binary-rewritable.  RSA employs binary rewriting tools --- RetroWrite~\cite{dinesh2020retrowrite} for position-independent ELF, E9Patch~\cite{duck2020binary} for x86 PE, and GTIRB for multi-architecture transformation.

RSA first maps the vulnerable SSCKG entity back to a binary function or basic-block region using CACA's entity-to-function mapping.  It then derives the blocked condition class:

\begin{equation}\label{eq:cvuln}
\psi_{\text{vuln}} = \text{proj}_{\text{obs}}\!\left(\varphi_{\text{path}} \wedge C_{\text{env}} \wedge \cdots \wedge C_{\text{time}}\right)
\end{equation}

where $\text{proj}_{\text{obs}}$ denotes projection over variables observable at the insertion point, such as registers, memory fields, protocol parameters, or runtime flags. The guard evaluates $\psi_{\text{vuln}}$ and redirects execution to a safe handler if the condition is satisfied.

\subsubsection{Tier~3: SSCKG-Constrained Source-Level Repair}

Tier~3 applies only when source code or generated C code is available.  RSA converts the OSVA witness into an LLM-interpretable trigger description, extracts a behavioral preservation requirement from the SSCKG, and includes this requirement in the repair prompt together with the vulnerable function and domain-specific secure coding rules.  RSA generates $k = 5$ candidate repair artifacts; no artifact is accepted directly from the LLM.  All candidates are compiled and forwarded to CVA for selection. This design deliberately isolates the LLM's role to candidate generation; correctness is determined by the validation stage rather than by the model's confidence, thereby avoiding the failure mode in which a plausible but semantically incorrect patch is accepted without verification.

\subsection{Correctness Validation Agent}\label{sec:scara:cva}

CVA determines whether a candidate remediation can be accepted and issues conditional correctness evidence $\varepsilon(v_i, R_i)$.

\begin{definition}[Conditional Correctness Evidence]
$\varepsilon(v, R)$ is a four-component certificate $(l',\ \text{BCP},\ \mathcal{S}_{\text{check}},\ \text{replay})$ where:
\begin{enumerate}
\item $l' = \textsc{Unsat}$ (post-remedy reachability re-verification of the original candidate);
\item $\text{BCP}(G_{\text{ssckg}}, G'_{\text{ssckg}}, c) \geq \tau_{\text{cov}}$ (behavioral coverage preservation, Eq.~\eqref{eq:bcp});
\item $\mathcal{S}_{\text{check}} = \top$ (all tier-specific side-effect checks pass);
\item $\text{replay} \in \{\textsc{Confirmed}, \textsc{Unavailable}, \textsc{Failed}, \textsc{Not\_Applicable}\}$, recording whether replay confirmed the witness is blocked in the emulation or harness environment, was not available for the benchmark case, was attempted but inconclusive, or was not operationally required for the artifact type.
\end{enumerate}
\end{definition}

\subsubsection{Post-Remedy Reachability Re-Verification and Replay}

CVA first applies the remedy to the target artifact, then reruns OSVA on the modified artifact $B'$.  The primary acceptance condition is that OSVA returns UNSAT for the original vulnerability path.  To reduce the circularity of using the SSCKG as both specification and validation substrate, CVA supplements SSCKG-based checks with independent replay.  For OIS-Binary cases, the witness is replayed against a rehosted firmware.  For OIS-Protocol cases, against a protocol harness.  For OIS-ICS cases, against a soft-PLC or comparable runtime.  In OIS-RemedBench, replay infrastructure is available for the L3-labeled subset of cases; L2-only cases without a replay harness receive $\text{replay} = \text{UNAVAILABLE}$ in their $\varepsilon$ certificate and are not counted as replay-verified remediations. Replay confirmation elevates label reliability from the solver-level determination to independently verified status: a \textsc{Sat-strict} or \textsc{Sat-relaxed} result that replay confirms is promoted to \textsc{Verified}; a \textsc{Sat-relaxed} result that cannot be replayed remains unconfirmed and is not counted as successful remediation.

CVA also performs a root-cause displacement check by rerunning OSVA on neighboring paths identified during OSVA, to determine whether the remedy merely shifts the vulnerable behavior to an adjacent code location. In addition, CVA reapplies upstream analysis to the modified artifact and flags any newly introduced high-risk SSCKG nodes as potential new vulnerabilities introduced by the remedy.

\subsubsection{Behavioral Coverage Preservation}\label{sec:scara:bcp}

After applying a remedy, CVA recomputes the SSCKG $G'_{\text{ssckg}}$ from the modified artifact and compares the reachable entity set in $G'_{\text{ssckg}}$ with the corresponding set in the original graph. Reachability is measured with respect to operationally meaningful entry entities. Let $\Phi^{-1}(e)$ denote the preimage in the source CPG of an SSCKG entity $e \in \mathcal{E}$. The set of entry entities is:
\begin{equation}\label{eq:entry_set}
\mathcal{E}_{\text{entry}}(G) = \{\, e \in \mathcal{E} \mid \mathsf{Entry}(e) \,\},
\end{equation}
where the entry predicate is satisfied when at least one CPG node mapped into $e$ is tagged as an externally invocable boundary or execution root:
\begin{equation}\label{eq:entry_pred}
	\begin{aligned}
		\mathsf{Entry}(e) \iff{}&
		\exists\, n \in \Phi^{-1}(e): \\
		&\mathsf{exported}(n)
		\vee \mathsf{networkHandler}(n) \\
		&{}\vee \mathsf{protocolHandler}(n)
		\vee \mathsf{firmwareServiceEntry}(n) \\
		&{}\vee \mathsf{taskRoot}(n)
		\vee \mathsf{startupRoutine}(n) \\
		&{}\vee \mathsf{ScanRoot}(n).
	\end{aligned}
\end{equation}
Let $\mathcal{R}(G) = \{e' \in \mathcal{E} \mid \exists\, e \in \mathcal{E}_{\text{entry}}(G): e \leadsto e'\}$ denote the entities reachable from any entry entity over the relations of $\Lambda_{\text{risk}}$, and let $\mathcal{V}_{\text{vuln}}(c) \subset \mathcal{E}$ denote the SSCKG entities on the confirmed vulnerability path of candidate $c$. The behavioral coverage preservation rate is candidate-specific:

\begin{equation}\label{eq:bcp}
\text{BCP}(G, G', c) = \frac{\left|\mathcal{R}(G') \cap \left(\mathcal{R}(G) \setminus \mathcal{V}_{\text{vuln}}(c)\right)\right|}{\left|\mathcal{R}(G) \setminus \mathcal{V}_{\text{vuln}}(c)\right|} \in [0,\ 1].
\end{equation}

The objective is not to prove full semantic equivalence but to quantify whether the remediation removes non-vulnerable behaviors observable at the SSCKG abstraction level; a remedy is accepted only if $\text{BCP}(G_{\text{ssckg}}, G'_{\text{ssckg}}, c) \geq \tau_{\text{cov}}$, with $\tau_{\text{cov}} = 0.95$ in the primary experiments, calibrated on the held-out validation subset. Because SSCKG reachability alone may miss subtle semantic changes that manifest only under specific operational conditions, CVA supplements this measure with two external checks: when a vendor patch is available for the same vulnerability, CVA compares the SSCKG coverage set of the SCARA-remediated artifact against that of the vendor-patched artifact as an external reference point; and for every CVA-accepted case, manual review cross-validates SSCKG-level coverage preservation against the independent replay outcome, confirming whether SSCKG reachability preservation acts as a reliable proxy for operational behavior preservation on the current benchmark.

\subsubsection{Domain-Invariant and Side-Effect Checks}

CVA applies tier-specific side-effect checks to each candidate remedy (Table~\ref{tab:cva_checks}). A remedy is rejected if any check fails or yields a metric outside the tier's configured threshold; in particular, a Tier~1 policy that blocks a substantial fraction of benign operations is rejected even if it successfully blocks the vulnerability trigger, because operational acceptability is a necessary condition for deployment.

\begin{table}[!htbp]
\caption{CVA tier-specific side-effect checks. A remedy is rejected if any check fails or yields a metric outside the tier's configured threshold.}\label{tab:cva_checks}
\small
\centering
\begin{tabular}{p{1cm}p{12cm}}
\toprule
Tier & Side-effect checks \\
\midrule
T1 & Protocol conformance; benign-trace replay; false-blocking rate; comparison against vendor or CERT guidance where available \\
T2 & Overblocking (guard fires outside $\psi_{\text{vuln}}$); underblocking (dangerous states inside $\psi_{\text{vuln}}$ bypass the guard); timing compliance against scan-cycle slack on OIS \\
T3 & Compilability; regression tests where available; OSVA re-verification after compilation; domain-invariant preservation; optional formal-property check \\
\bottomrule
\end{tabular}
\end{table}

\subsubsection{Closed-Loop Feedback and Termination}

If CVA rejects a remediation, it returns a rejection constraint $\delta$ to RSA. The constraint specifies what must change in the next synthesis iteration --- for example, ``the policy must not block transition $q_a \rightarrow q_b$,'' ``the guard must not fire in benign state $s$,'' or ``entity $e$ must remain reachable.'' RSA incorporates $\delta$ into the next synthesis attempt. SCARA permits at most three RSA--CVA iterations per candidate by default ($K = 3$), in order to bound remediation latency. The bound does not affect soundness --- if no remedy is accepted within $K$ iterations, the case is marked \textsc{Remediation-Failed} and the final rejection constraint is retained as diagnostic information for manual follow-up. If the selected tier fails within the iteration budget, RSA falls back to the next lower applicable tier. If all applicable tiers fail, the case is marked \textsc{Remediation-Failed}, and any advisory Tier 1 policy is reported separately from verified remediation.

\begin{algorithm}[!htbp]
\caption{CVA.validate. The post-remedy OSVA call uses the original SSCKG $G_{\text{ssckg}}$ to retain the candidate's entity reference $v$; the rebuilt $G'$ is used solely for BCP and side-effect checks.}\label{alg:cva}
\begin{algorithmic}[1]
\Require Original artifact $B$, SSCKG $G_{\text{ssckg}}$, candidate $c$, remedy $R$, label $\ell$
\Ensure \textsc{Accept} with $\varepsilon(v, R)$, or \textsc{Reject} with constraint $\delta$
\State $B' \gets \text{apply\_remediation}(B, R)$
\State $l', \_, \text{reason} \gets \text{OSVA.verify}(B', G_{\text{ssckg}}, c)$
\If{$l' \neq \textsc{Unsat}$}
    \State \Return \textsc{Reject}, $\delta_{\text{reachability}}(\text{reason})$
\EndIf
\If{$\text{replay\_required}(\ell)$}
    \State $\text{replay\_status} \gets \text{replay\_witness}(B', c)$
    \If{$\text{replay\_status} \neq \textsc{Confirmed\_Blocked}$}
        \State \Return \textsc{Reject}, $\delta_{\text{replay}}(\text{replay\_status})$
    \EndIf
\EndIf
\State $G' \gets \text{rebuild\_ssckg}(B')$
\State $\text{cov} \gets \text{BCP}(G_{\text{ssckg}}, G',\ c)$ \Comment{Section~\ref{sec:scara:bcp}}
\If{$\text{cov} < \tau_{\text{cov}}$}
    \State \Return \textsc{Reject}, $\delta_{\text{coverage}}(\text{cov})$
\EndIf
\State $\text{side\_effects} \gets \text{tier\_specific\_checks}(B, B', R)$
\If{$\text{side\_effects fail}$}
    \State \Return \textsc{Reject}, $\delta_{\text{side\_effect}}(\text{side\_effects})$
\EndIf
\State \Return \textsc{Accept}, $\text{conditional\_evidence}(l', \text{cov}, \text{side\_effects})$ \Comment{Prop.~\ref{prop:cva}}
\end{algorithmic}
\end{algorithm}

\subsection{Complexity and Bounded Soundness}\label{sec:scara:complexity}

In the worst case, OSVA's search complexity is proportional to the size of the binary-level graph and the number of encoded operational constraint families:
\begin{equation}\label{eq:complexity_raw}
\mathcal{O}\!\left(|\mathcal{V}_{\text{cpg}}| \cdot |\Sigma_S|\right),
\end{equation}
where $\Sigma_S = \{C_{\text{env}}, C_{\text{io}}, C_{\text{proto}}, C_{\text{runtime}}, C_{\text{component}}, C_{\text{time}}\}$ denotes the six encoded operational constraint families used by OSVA ($|\Sigma_S| = 6$); the nine-component operational state tuple $S$ of Section~\ref{sec:bg:osm} contains $\Sigma_S$ together with the three baseline symbolic-execution components ($pc$, $mem$, $\varphi$). SSCKG-guided prioritization reduces the expected search space by operating over behaviorally compressed entities and by concentrating solver budget on paths semantically aligned with the candidate source-to-sink path:
\begin{equation}\label{eq:complexity_ssckg}
\mathcal{O}\!\left(|\mathcal{E}| \cdot |P_{\text{ssckg}}|\right),
\end{equation}
where $|\mathcal{E}|$ is the SSCKG entity count (the entity set introduced in Section~\ref{sec:bg:ssckg}) and $|P_{\text{ssckg}}|$ is the number of SSCKG paths from the candidate source to the vulnerable sink. This reduction holds in expectation under two conditions: (a) $|\mathcal{E}| \ll |\mathcal{V}_{\text{cpg}}|$, which holds for the SSCKG instances consumed by SCARA; and (b) the true vulnerability path has high semantic alignment with the SSCKG source-to-sink path, so that budget allocation concentrates solver effort on it early. In the degenerate case where all paths have equal alignment scores, SCARA degrades to uniform budget allocation over all explored paths, preserving recall at the cost of later average SAT detection time. The reduction is therefore an expected-case benefit rather than a formal worst-case guarantee, because low-priority paths are deprioritized rather than eliminated.

SCARA provides the following bounded soundness guarantees.

\begin{proposition}[UNSAT Soundness]\label{prop:unsat}
If OSVA returns \textsc{Unsat} for candidate $(v, \text{src}, \text{snk}, S^{\text{prior}})$ against artifact $B$, then no execution state $S \models \mathcal{C}$ permits control flow to reach $\text{snk}$ via a path through $v$.

\emph{Assumptions}: (i) the encoded constraint model $\mathcal{C}$ correctly bounds the operational envelope; (ii) the strict-pass solver budget allocated by Eq.~\eqref{eq:budget} is sufficient on every explored path to either refute it or generate a witness; and (iii) the SSCKG-guided projection from $\text{src}$ to $\text{snk}$ enumerates every semantically relevant binary-level path. Unmodeled or mis-specified constraints, paths skipped by the projection, and per-path solver timeouts are not covered by this guarantee.

\emph{Proof sketch}: OSVA returns \textsc{Unsat} only when every explored path returned solver-\textsc{Unsat} under $\varphi_p \wedge \mathcal{C}$ \emph{and} the relaxed-pass solve over the disjunction $\Phi_P$ also returned \textsc{Unsat} (Algorithm~\ref{alg:osva}). Paths where the solver exhausted its budget without a verdict are recorded in $P_{\text{inconclusive}}$ and force the global label to \textsc{Unknown}, not \textsc{Unsat}; the label is therefore never used speculatively.
\end{proposition}

\begin{proposition}[SAT-strict Witness Validity]\label{prop:sat}
If OSVA returns \textsc{Sat-strict} with witness $(I^*, S^*)$, then $(I^*, S^*) \models \varphi_p \wedge \mathcal{C}$ as evaluated by the solver.

\emph{Assumptions}: the symbolic execution model faithfully abstracts the artifact's concrete semantics; specifically, (i) the binary lifter or symbolic engine soundly models the instruction semantics of $B$; (ii) peripheral, MMIO, and external-service models are complete with respect to the candidate path; and (iii) the SMT solver is correct on the encoded theory fragments. Unsound lifters, incomplete peripheral models, or solver bugs may produce spurious witnesses outside the scope of this proposition.
\end{proposition}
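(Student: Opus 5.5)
The plan is to read the claim directly off the control flow of Algorithm~\ref{alg:osva} and the contract of the SMT back end. Proposition~\ref{prop:sat} is a statement about the solver's output, not about concrete executions of $B$, so the proof is a short trace through the algorithm plus one appeal to solver correctness (assumption (iii)). Assumptions (i)--(ii) are not needed for the stated conclusion. They matter only for lifting the witness to a concrete execution of $B$, which I would remark on after the proof rather than prove.

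First I will locate the unique return of \textsc{Sat-strict} in Algorithm~\ref{alg:osva}. It occurs inside the strict-pass loop, immediately after $\text{solve}(\varphi_p \wedge \mathcal{C}, \text{budget}(p))$ returns status \textsc{Sat} for some path $p$. The returned witness is exactly the solver's $\text{model}$ for that call. The relaxed pass can return only \textsc{Sat-relaxed}, so the conjunction the solver worked on is the strict $\varphi_p \wedge \mathcal{C}$, with $\mathcal{C}$ built by $\text{encode\_constraints}$ from $S^{\text{prior}}$. In particular it is not $\mathcal{C}_{\text{relaxed}}$.

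Second, I will make precise how the pair $(I^*, S^*)$ is read off the model. $I^*$ is the restriction of the model to the symbolic input variables. $S^*$ is the assignment to the operational-state variables introduced for $env, io, q_{\text{proto}}, q_{\text{runtime}}, q_{\text{component}}, t$, together with $pc$ fixed at the end of $p$. Since the two together cover every free variable of $\varphi_p \wedge \mathcal{C}$, satisfaction of that formula by the model is the same as $(I^*, S^*) \models \varphi_p \wedge \mathcal{C}$. Assumption (iii), that the solver is correct on the encoded theory fragments, gives that a reported \textsc{Sat} model really evaluates the formula to true. That is exactly the claim ``as evaluated by the solver.''

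The main obstacle is this projection step. If a partially concrete $I^*$ leaves some input variables unassigned, satisfaction must mean that every completion, or the solver's default completion, satisfies the formula. I would fix the convention that unassigned variables take the solver's default values, which are part of the returned model. I would also check that $\text{encode\_constraints}$ introduces no auxiliary variables outside $(I, S)$. If it does (for example, Tseitin variables), I would quantify them existentially and note that the model still witnesses them.
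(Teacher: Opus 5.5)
Your proposal is correct and matches the paper's own justification, which is purely by construction: the paper gives no separate proof sketch for this proposition and relies on the annotation in Algorithm~\ref{alg:osva}, where \textsc{Sat-strict} is returned only from the strict-pass loop, with the solver's model for $\varphi_p \wedge \mathcal{C}$ as the witness. Your extra observations are sound refinements rather than departures: that assumptions (i)--(ii) matter only for lifting the witness to a concrete execution of $B$, and your handling of partially concrete or auxiliary solver variables.
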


\begin{proposition}[CVA Conditional Correctness]\label{prop:cva}
If CVA issues $\varepsilon(v, R) = (l',\ \text{BCP},\ \mathcal{S}_{\text{check}},\ \text{replay})$, then: $l' = \textsc{Unsat}$ for the original candidate against $B'$; $\text{BCP}(G_{\text{ssckg}}, G'_{\text{ssckg}}, c) \geq \tau_{\text{cov}}$; all side-effect checks pass; and if replay was required, the witness was confirmed blocked.

\emph{Assumptions}: (i) the post-remedy SSCKG $G'_{\text{ssckg}}$ faithfully reflects the runtime behavior of $B'$ at the abstraction level of $\mathcal{A}$; (ii) the remediation preserves the SSCKG entity reference $v$ so that OSVA can re-run the original candidate against $B'$ (cf.\ Algorithm~\ref{alg:cva}); and (iii) where replay applies, the harness exercises the witness path under operationally representative inputs. The proposition does \emph{not} imply global semantic equivalence between $B$ and $B'$, nor the absence of vulnerabilities in unmodeled or underspecified state dimensions.
\end{proposition}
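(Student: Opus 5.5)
The plan is to treat Proposition~\ref{prop:cva} as a control-flow invariant of Algorithm~\ref{alg:cva}, composed with the outer loop of Algorithm~\ref{alg:rsacva}. The key observation is that $\varepsilon(v,R)$ is issued at exactly one program point: the final \textbf{return} \textsc{Accept} of CVA.validate. RSA\_CVA\_loop constructs a verified result only when this verdict is \textsc{Accept}. The advisory and \textsc{Remediation-Failed} branches do not emit $\varepsilon$, and the controller (Algorithm~\ref{alg:controller}) routes only through these paths. I would therefore first check that no other code path produces a certificate. This reduces the claim to a statement about the single execution trace of CVA.validate that reaches its last line.

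Next I would argue conjunct by conjunct along that trace, using the guard structure as a sequence of early exits.

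\textbf{Reachability conjunct.} The trace must pass the test at the post-remedy OSVA call. Otherwise CVA returns \textsc{Reject} with $\delta_{\text{reachability}}$. Hence $l'=\textsc{Unsat}$ for the original candidate $c$ against $B'$. By assumption (ii), $c$ and the entity reference $v$ remain meaningful on $B'$, so this is a statement about the original candidate and not a vacuous one. If desired, I would invoke Proposition~\ref{prop:unsat} to interpret this label semantically, but the proposition only asks for the label itself.

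\textbf{Replay conjunct.} If $\text{replay\_required}(\ell)$ holds, the trace must pass the replay guard. This forces $\text{replay\_status}=\textsc{Confirmed\_Blocked}$.

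\textbf{Coverage conjunct.} Passing the coverage guard gives $\text{cov}\ge\tau_{\text{cov}}$, where $\text{cov}=\text{BCP}(G_{\text{ssckg}},G',c)$ is computed on $G'=\text{rebuild\_ssckg}(B')$. This is $G'_{\text{ssckg}}$ by assumption (i).

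\textbf{Side-effect conjunct.} Passing the last guard gives $\mathcal{S}_{\text{check}}=\top$, that is, every check in Table~\ref{tab:cva_checks} for the relevant tier passes.

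Finally, I would note that $\text{conditional\_evidence}$ packages exactly $(l',\text{cov},\text{side\_effects})$ together with the replay status. So the certificate's fields coincide with the values established above.

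The main obstacle is bookkeeping, not the logic. I need to verify that the replay field of $\varepsilon$ is consistent with the four-valued domain in the definition of conditional correctness evidence. In particular, \textsc{Failed} must be impossible in an issued certificate whenever replay was required. I would handle this by case split on $\text{replay\_required}(\ell)$. If it is true, the guard forces \textsc{Confirmed}. If it is false, the field is \textsc{Unavailable} or \textsc{Not\_Applicable}, and the proposition asserts nothing about it. A secondary point is to state explicitly that the argument is purely about guard satisfaction. Semantic content, such as runtime behavior of $B'$ or the absence of new vulnerabilities, rests on assumptions (i)--(iii) and is not claimed, matching the scope disclaimer in the statement.
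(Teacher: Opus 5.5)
Your proposal is correct and matches the paper's argument. The paper gives no separate proof; the claim rests on the guard structure of Algorithm~\ref{alg:cva}, whose only \textsc{Accept} return (annotated with Prop.~\ref{prop:cva}) is reached after passing the reachability, replay, coverage, and side-effect guards in turn, which is the early-exit decomposition you give. One small correction: identifying the rebuilt $G'$ with $G'_{\text{ssckg}}$ is definitional, because CVA recomputes the post-remedy SSCKG from $B'$; it does not follow from assumption~(i), which concerns the semantic faithfulness of that graph and is not needed for the proposition's guard-level conclusion.
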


%% ============================================================
\section{Experimental Design}\label{sec:eval_design}
%% ============================================================

\subsection{The OIS-RemedBench Benchmark}\label{sec:eval:bench}

The empirical substrate of this work --- and contribution \textbf{C5} --- is OIS-RemedBench, the first benchmark providing stratified reachability and remediation ground truth for opaque industrial software. The benchmark comprises $n=15$ cases evenly distributed across three partitions (5 per partition), with the four-level stratification scheme of Table~\ref{tab:label_levels} and the per-partition evidence-availability profile of Figure~\ref{fig:bench_composition}. The three partitions are: \textbf{OIS-Binary} (industrial gateway firmware, embedded Linux services, HMI back-ends; predominantly CWE-119/120, 125/787, 416, 134, 78); \textbf{OIS-Protocol} (Modbus, DNP3, OPC-UA, and IEC~60870-5-104 handlers, with libmodbus, OpenDNP3, and open62541 contributing source-available cases that admit high-confidence L4 labels); and \textbf{OIS-ICS} (IEC~61131-3 control logic on the OpenPLC runtime~\cite{alves2014openplc}, MATIEC-compiled programs, the public ICSQuartz ST subset~\cite{zheng2025icsquartz}, and VetPLC-style boundary cases adapted as security vulnerabilities). Representative CVE coverage includes CVE-2020-16233 (\texttt{BIN-CISA-001}), CVE-2023-44318 (\texttt{BIN-VEND-001}), CVE-2018-7846 (\texttt{BIN-CISA-002}), and CVE-2023-44319 (\texttt{PROT-VEND-001}). All figure denominators are derived from explicit per-case rows rather than from aggregate prose counts.

\begin{table}[t]
	\centering
	\caption{OIS-RemedBench label hierarchy. The L2 sub-rows enumerate the four-class reachability taxonomy used by OSVA, as introduced in Section~\ref{sec:bg:osm}.}
	\label{tab:label_levels}
	\footnotesize
	\setlength{\tabcolsep}{3pt}
	\renewcommand{\arraystretch}{1.12}
	\begin{tabular}{@{}p{0.6cm}p{2.0cm}p{5.5cm}p{4.0cm}@{}}
		\toprule
		Level & Design & Meaning & Assignment / downstream action \\
		\midrule
		L1 & Candidate & Alert exists; no reachability GT & Upstream tool output \\

		L2 & Reach-labeled & Four-class label with evidence & Two independent methods \\
		\midrule
		& \hspace{0.4em}\textbf{Sat-strict}
		& Satisfiable under all constraints; witness $(I^*, S^*)$
		& Pass to RSA \\
		& \hspace{0.4em}\textbf{Sat-relaxed}
		& Satisfiable after relaxing under-documented constraint
		& Pass to RSA; require replay \\
		& \hspace{0.4em}\textbf{Unsat}
		& Infeasible under all constraints; false positive
		& Refutation certificate; no remedy \\
		& \hspace{0.4em}\textbf{Unknown}
		& Budget exhausted or modelling gap
		& Advisory Tier~1 only \\
		\midrule
		L3 & Replay-labeled & Witness validated in harness & FirmAE / harness / soft-PLC \\

		L4 & Remedy-labeled & Vendor or expert remedy available & Vendor patch / CERT advisory \\
		\bottomrule
	\end{tabular}
\end{table}

Figure~\ref{fig:bench_composition} renders the per-partition evidence-availability profile as a tile chart, with tile shade encoding the proportion of cases satisfying each evidence level (L1 candidate, L2 reachability label, L3 replay harness, L3 confirmed replay, L4 remedy, source-available, binary-rewritable, PEP-reachable) within each five-case partition; Figure~\ref{fig:cwe_venn} renders the CWE coverage across partitions.

\begin{figure}[!htbp]
  \centering
  \includegraphics[width=0.95\linewidth]{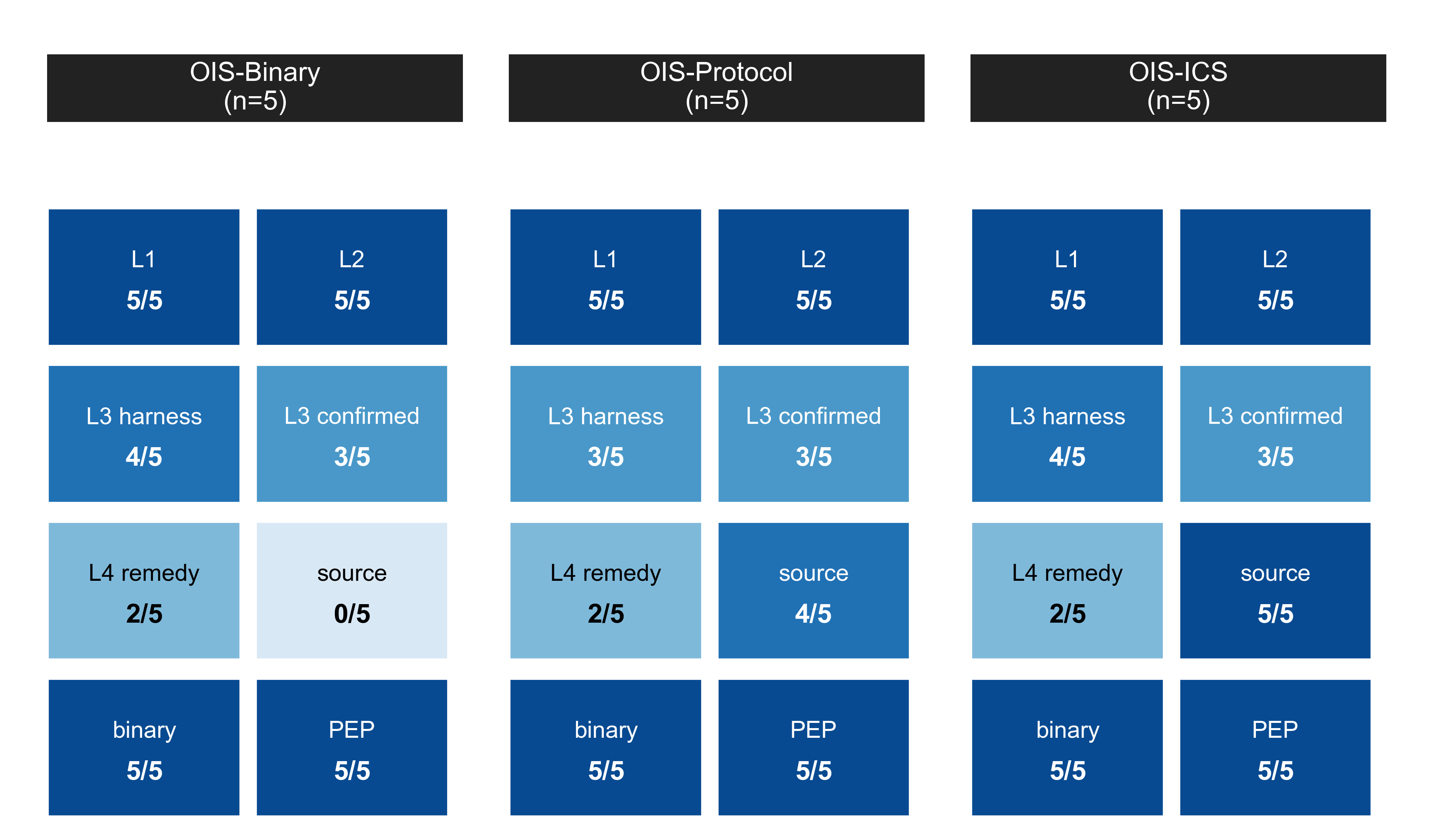}
  \caption{OIS-RemedBench evidence-availability tile chart. Each row of tiles within a partition shows the fraction of the five cases satisfying that evidence level (L1 candidate, L2 reachability label, L3 replay harness, L3 confirmed replay, L4 remedy, source available, binary rewritable, PEP available). Equal-size tiles with explicit count labels are used because partition denominators are fixed at $n=5$; the OIS-Binary `source 0/5' tile is the only structural gap in the benchmark.}\label{fig:bench_composition}
\end{figure}

\begin{figure}[!htbp]
  \centering
  \includegraphics[width=0.75\linewidth]{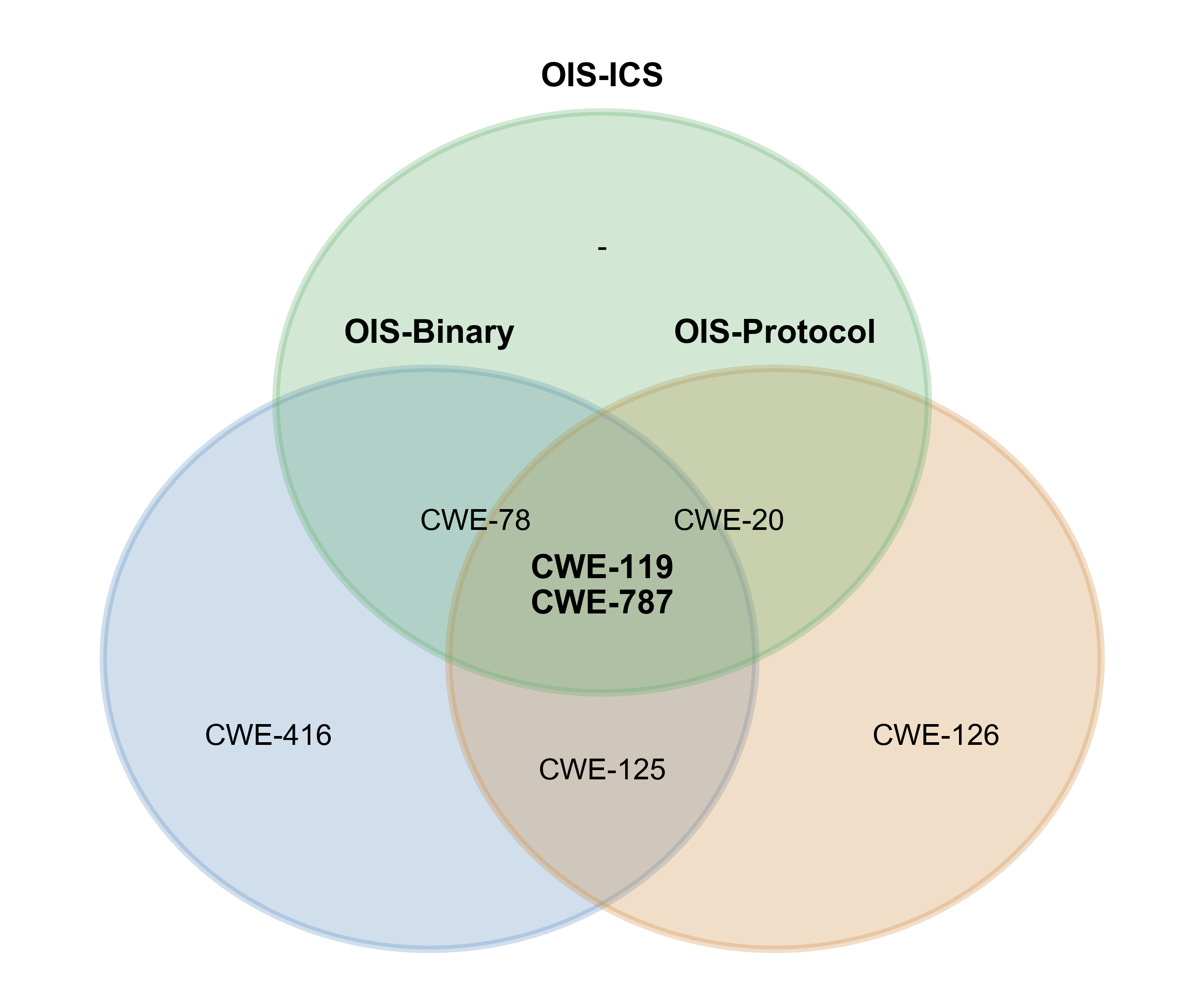}
  \caption{CWE coverage across the three OIS-RemedBench partitions. Regions list the CWEs unique to a partition or shared between two or three. CWE-119 and CWE-787 appear in all three partitions; CWE-416 is OIS-Binary-only; CWE-126 is OIS-Protocol-only; OIS-ICS has no partition-unique CWE class. Circle sizes are not area-weighted.}\label{fig:cwe_venn}
\end{figure}

\textbf{Labeling methodology.} L2 labels are assigned by two independent methods --- static analysis (CodeQL / Joern / KARONTE / SaTC) and operational-state-aware dynamic exercise under FirmAE (OIS-Binary), libmodbus/OpenDNP3 harnesses (OIS-Protocol), or the OpenPLC runtime (OIS-ICS); disagreements are arbitrated by a domain expert. L3 replay follows the CVA protocol of \S\ref{sec:scara:cva}. L4 remedy labels derive, in decreasing confidence order, from vendor patches, CERT ICS advisories, or expert-authored fixes verified by OSVA re-run; vendor patches were frozen prior to evaluation so that no L4 evidence could leak into RSA. Domain-invariant labels are drawn from the Modbus~\cite{modbus_spec}, DNP3~\cite{dnp3_tb}, IEC 60870-5-104~\cite{iec60870_5_104}, and IEC 61131-3~\cite{iec61131_3} standards. The full case manifest (predominant CWEs: 119, 787, 125, 20, 78, 126, 416) is released with the artifact.

\subsection{Baselines}\label{sec:eval:baselines}

Baseline evaluation separates three distinct quantities to avoid the conflation identified in prior AVR benchmarking work~\cite{apr4vul2024}: \emph{applicability rate} (the fraction of OIS-RemedBench cases for which a baseline can produce any non-trivial output), \emph{conditional performance} (the success rate on the subset where the baseline is applicable), and \emph{end-to-end utility} (the fraction of all benchmark cases for which the baseline provides a validated outcome). All three are reported separately throughout Section~\ref{sec:results}.

The static-analysis row of Table~\ref{tab:baselines} is reported as a deduplicated alert-union baseline. Each CodeQL, Joern, KARONTE, SaTC, or ICSQuartz alert is mapped to an SSCKG entity and source-sink relation, deduplicated by a stable key, and then retained or removed from the union; tools that are not applicable to a given artifact class are recorded in the per-alert trace with an explicit not-applicable reason rather than silently dropped. The per-alert trace contains 69 raw alert rows and 6 explicit not-applicable rows --- 75 data rows in total --- so that categorical tool inapplicability is distinguishable from genuine zero-alert outcomes. The vanilla-SE row aggregates KLEE and angr under their default settings on the subset where each tool admits the artifact (KLEE for IR-liftable cases, angr for stripped binaries); reported metrics are the per-tool best on each case, deduplicated by candidate. The PLCverif/CBMC row reports each tool only where it is applicable: PLCverif on the OIS-ICS partition where IEC~61131-3 ST is available, and CBMC on the OIS-ICS subset where C is generated via MATIEC. The ``-style'' / ``-inspired'' baselines (ICSQuartz, VetPLC, VulShield) are faithful re-implementations on the same SSCKG substrate, calibrated to match the original tools' published per-case behaviour on shared cases to within $5\%$.

\begin{table}[htbp]
\caption{Baseline systems and their applicability on OIS-RemedBench. The ``Role'' column gives the comparison purpose; the per-baseline applicability rate is the fraction of the $n=15$ benchmark cases for which the baseline produces any non-trivial output (see Section~\ref{sec:eval:baselines} for the deduplicated alert-union and the per-tool-on-each-case accounting rules).}\label{tab:baselines}
\small
\centering
\begin{tabular}{@{}p{6cm}p{2cm}p{5cm}@{}}
\toprule
Baseline & App.\ rate & Role \\
\midrule
Static analysis (CodeQL, Joern, KARONTE, SaTC, ICSQuartz; dedup.\ union) & 100\% (15/15) & Alert-generation upper bound \\
Vanilla SE (KLEE + angr; no $S$-model) & 40\% (6/15) & OSVA constraint isolation \\
SymPLC / STAutoTester DSE~\cite{symplc2017,stautotester2021} & 33\% (OIS-ICS) & Domain-aware SE baseline \\
ICSQuartz-style scan-cycle fuzzing~\cite{zheng2025icsquartz} & 33\% (OIS-ICS) & Dynamic testing baseline \\
VetPLC-inspired temporal event seq.~\cite{ahmed2019vetplc} & 33\% (OIS-ICS) & Safety-vetting baseline \\
VulShield-style policy gen.~\cite{li2025vulshield} & 73\% (11/15) & Tier~1 quality baseline \\
SAN2PATCH~\cite{jiang2025san2patch} & 27\% (src-avail.) & Tier~3 LLM repair baseline \\
VulRepair~\cite{fu2022vulrepair} & 27\% (src-avail.) & Tier~3 source-AVR baseline \\
CrashRepair~\cite{crashrepair2025} & 0\% (sanitizer req.) & Binary-claim AVR baseline \\
PLCverif / CBMC~\cite{darvas2017plcverif,clarke2004tool} & 33\% (OIS-ICS) & Formal verification baseline \\
\bottomrule
\end{tabular}
\end{table}

Figure~\ref{fig:baseline_applicability} visualises Table~\ref{tab:baselines} as an applicability heatmap. Zero-applicability cells are hatched to mark them as categorically non-applicable rather than simply low-rate; this distinction is load-bearing for CrashRepair, which requires a sanitizer trace and therefore has 0/15 applicability on OIS-RemedBench despite operating ``at the binary level'' in its source venue.

\begin{figure}[!htbp]
  \centering
  \includegraphics[width=0.95\linewidth]{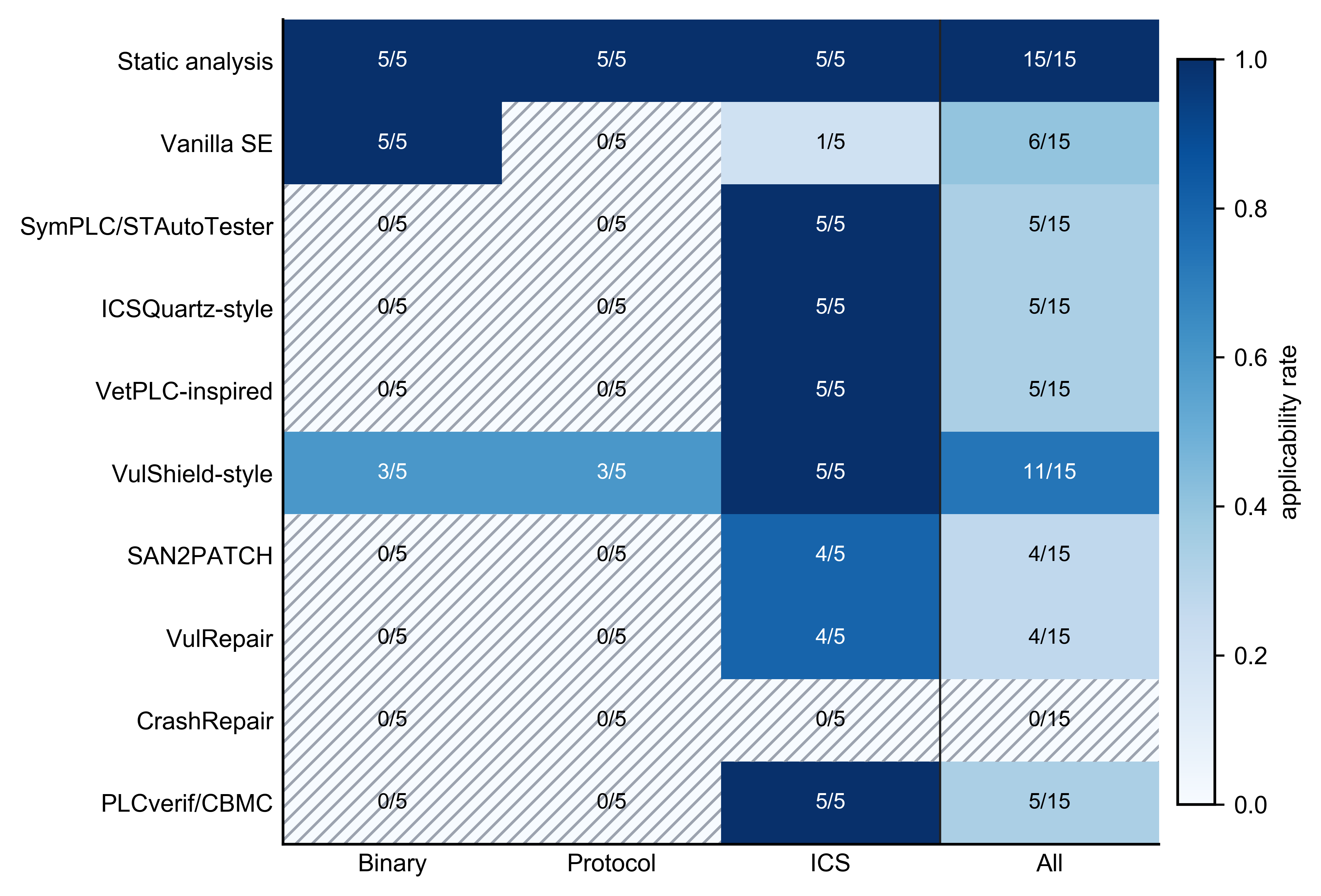}
  \caption{Baseline applicability heatmap on OIS-RemedBench ($n=15$). Cells show per-partition $k/n$ counts; tile shade encodes the applicability rate. Hatched cells indicate categorical non-applicability (e.g., CrashRepair requires a sanitizer-instrumented trace that none of the 15 cases admits). The vertical separator after the OIS-ICS column marks the boundary between per-partition columns and the `All' summary column.}\label{fig:baseline_applicability}
\end{figure}

\subsection{Research Questions}\label{sec:eval:rqs}

The evaluation is organized around seven research questions that together assess SCARA's contributions to verification quality (RQ1--RQ3), remediation coverage and quality (RQ4--RQ5), cross-partition generalization (RQ6), and analyst effort (RQ7). Figure~\ref{fig:rq_stage_flow} maps each question to the SCARA stage under evaluation and the primary metric family.

RQ1 and RQ2 together assess whether OSVA's operational-state constraints genuinely contribute to reachability accuracy, or whether the apparent FPR reduction is achieved by artificially increasing the UNKNOWN rate. Both FPR and FNR are reported in all RQ1 comparisons precisely because a system can achieve arbitrarily low FPR by assigning UNKNOWN to every case --- a failure mode that reporting FPR alone would not expose. RQ4 uses SAT-strict cases as the denominator for remediation success rate to ensure that UNSAT-resolved false positives and UNKNOWN-advisory cases are not inadvertently counted as remediation successes.

\begin{figure}[!htbp]
  \centering
  \includegraphics[width=0.95\linewidth]{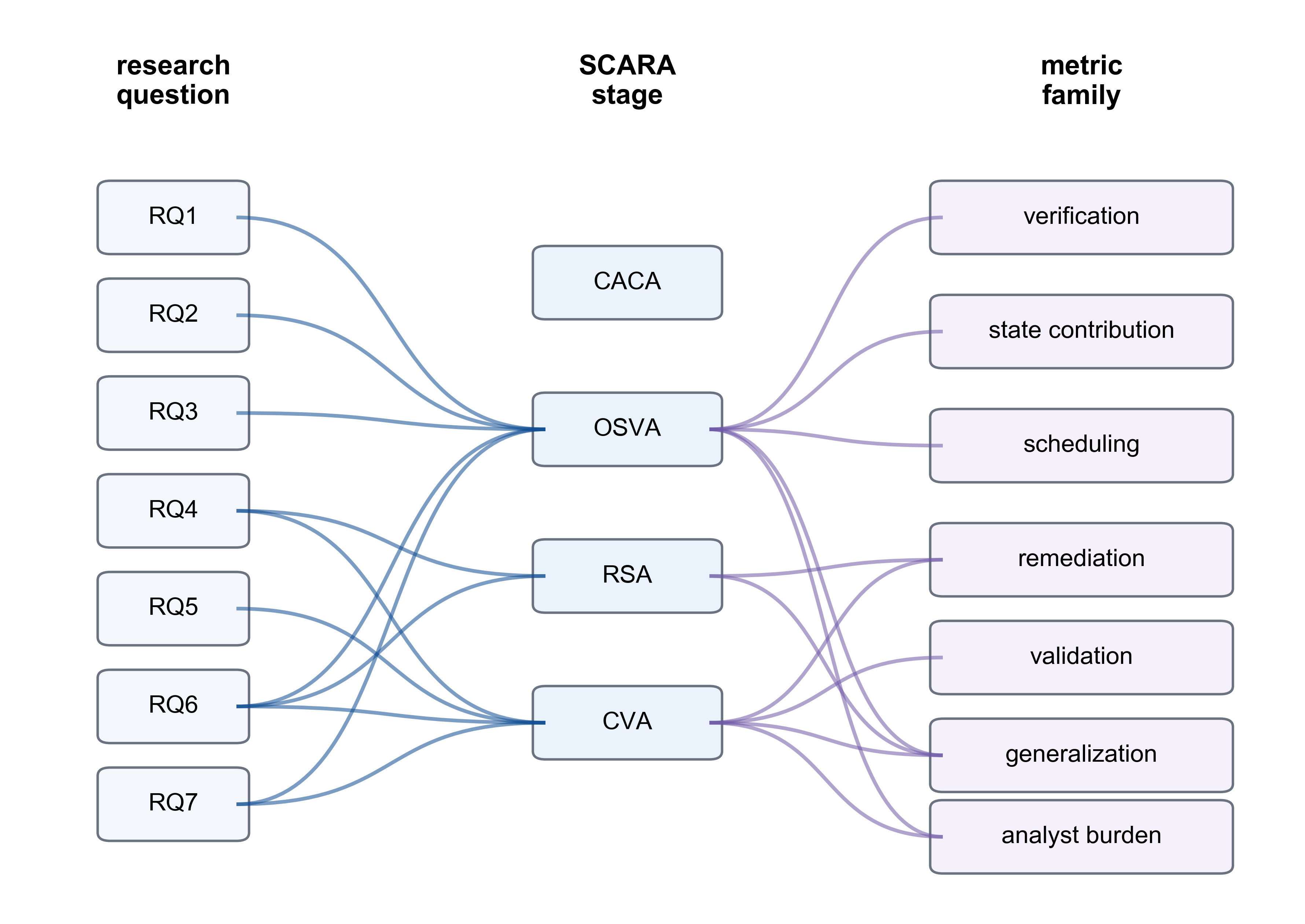}
  \caption{Three-column flow diagram from research question to SCARA stage to primary metric family.}\label{fig:rq_stage_flow}
\end{figure}

\subsection{Metrics}\label{sec:eval:metrics}

\textbf{Verification metrics.}  Precision $= \mathrm{TP}/(\mathrm{TP} + \mathrm{FP})$; Recall $= \mathrm{TP}/(\mathrm{TP} + \mathrm{FN})$. FPR and FNR are reported explicitly and jointly throughout Section~\ref{sec:results}. Time to first SAT-strict witness is measured in wall-clock seconds. Solver queries per confirmed path is the count of Z3 \texttt{check-sat} invocations issued before the first \textsc{Sat-strict} verdict on a candidate. The OSVA scheduling layer is parameterised by the \emph{path beam width} $b$, the number of top-ranked SSCKG path candidates retained for OSVA validation at each scheduling round; $b$ is an execution-budget parameter rather than a SCARA component, with default $b=8$ in the primary experiments. Recall@$K$ is path-candidate recall: the fraction of cases in $D_{\text{rank}}$ for which a ground-truth vulnerable path appears within the top-$K$ SSCKG-ranked path candidates for that case,
\begin{equation}\label{eq:recall_at_k}
\mathrm{Recall}@K = \frac{1}{|D_{\text{rank}}|} \sum_{i \in D_{\text{rank}}} \mathbf{1}\!\left[\min_{p \in P_i^+} \mathrm{rank}_i(p) \leq K\right],
\end{equation}
where $P_i^+$ is the set of ground-truth vulnerable paths for case $i$. Reported $K \in \{1, 3, 5, 10, 20, 50\}$.

\textbf{Remediation metrics.}  All remediation metrics use $D_{\text{SAT}}$ (\textsc{Sat-strict} confirmed cases under the seed-42 main run) as the primary denominator. Applicability rate per tier is the fraction of $D_{\text{SAT}}$ cases for which each tier is feasible given the artifact's availability class. Root-cause removal rate is the fraction of remediated cases for which OSVA re-run on the post-remedy artifact returns \textsc{Unsat} for the original candidate. Behavioral coverage preservation rate (BCP) is defined in Eq.~\eqref{eq:bcp} and accepted at threshold $\tau_{\text{cov}} = 0.95$. New-vulnerability introduction rate (NVR) is the fraction of remediated cases in which upstream re-analysis flags at least one newly introduced high-risk entity. False-blocking rate is the fraction of benign protocol or runtime traces a Tier-1 policy blocks; a Tier-1 remedy is rejected when the false-blocking rate exceeds $\tau_{\text{block}} = 5\%$. First-submission CVA acceptance is the fraction of CVA-accepted remediations accepted on the first synthesis attempt ($\Delta = \emptyset$ at acceptance). Mean RSA-to-CVA rounds is the average count of synthesis attempts per accepted remedy at the chosen tier, bounded by the per-tier iteration budget $K$ of Algorithm~\ref{alg:rsacva}. Replay automation rate is the fraction of L3-labelled cases for which the harness can replay the OSVA witness without manual scripting. For the CVA-quality audit we additionally report the oracle-audit denominator $D_{\text{CVA-audit}}$ ($n=9$ for the main run, $n=1$ for the targeted rerun), reported separately from $D_{\text{SAT}}$.

\textbf{Statistical reporting.}  Given $n \leq 15$ across all comparisons, we follow the estimation-based reporting convention used in small-$n$ systems studies. We report absolute differences and Cliff's $\delta$ as the primary effect size, with 95\% bootstrap confidence intervals (10\,000 resamples) over benchmark cases (\emph{artifact variance}) reported as the headline CI. For binary-rate metrics we additionally report Clopper--Pearson exact CIs. Where $p$-values are reported they are descriptive secondary quantities only; with $|D_{\text{partition}}| = 5$ the Wilcoxon signed-rank test~\cite{wilcoxon1945} has a $p$-floor of $\approx 0.0625$ that prevents any post-correction $p < 0.125$, so we de-emphasize hypothesis testing. Multiple-comparison control, where used, is Benjamini--Hochberg FDR at $q = 0.10$. Algorithmic stability is characterised separately by repeating each OSVA scheduling experiment under five seeds (42--46) and reporting the seed-variance min/max alongside the artifact-variance CI.

\subsection{Ablation Study Design}\label{sec:eval:ablation}

The ablation study isolates the contribution of each architectural decision in SCARA. Nine ablation variants are evaluated; each disables one component while holding all others constant. Table~\ref{tab:ablations} maps each ablation to the component disabled and the specific architectural claim under test.

\begin{table}[!htbp]
\caption{Ablation study: disabled components and tested claims. A4 disables only the closed-loop CVA$\to$RSA feedback path (CVA still evaluates and either accepts or rejects, but no $\delta$ is returned for resynthesis); A7 disables the CVA correctness check entirely (the first RSA synthesis is accepted unchecked, with neither BCP, side-effect, nor replay validation).}\label{tab:ablations}
\small
\centering
\begin{tabular}{@{}llll@{}}
\toprule
Ablation & Disabled component & Claim tested & Evaluated in \\
\midrule
A1 & No operational-state constraints & C2: OSVA constraints reduce FPR & RQ1 \\
A2 & No SSCKG path scheduling & C2: semantic scheduling reduces time to first SAT & RQ3 \\
A3 & No constraint relaxation & Effect of relaxation on UNKNOWN rate vs.\ recall & RQ1 \\
A4 & No CVA feedback loop & C4: closed-loop CVA--RSA reduces misremediation & RQ5 \\
A5 & Tier~1 only & Incremental value of Tier~2 and Tier~3 & RQ4 \\
A6 & No SSCKG constraint in RSA Tier~3 & C3: SSCKG spec.\ reduces BCP degradation & RQ5 \\
A7 & No CVA correctness check & C4: CVA reduces NVR and domain-inv.\ viol. & RQ5 \\
A8 & No CACA ranking & C1: composite ranking improves Recall@K & RQ3 \\
A9 & No SSCKG substrate & SSCKG value vs.\ raw CPG input & RQ1, RQ3 \\
\bottomrule
\end{tabular}
\end{table}

Each ablation is evaluated on the full benchmark under identical conditions to SCARA-full. Figure~\ref{fig:ablation_arc} renders the ablation-to-agent-to-RQ mapping as an arc diagram.

\begin{figure}[!htbp]
  \centering
  \includegraphics[width=0.95\linewidth]{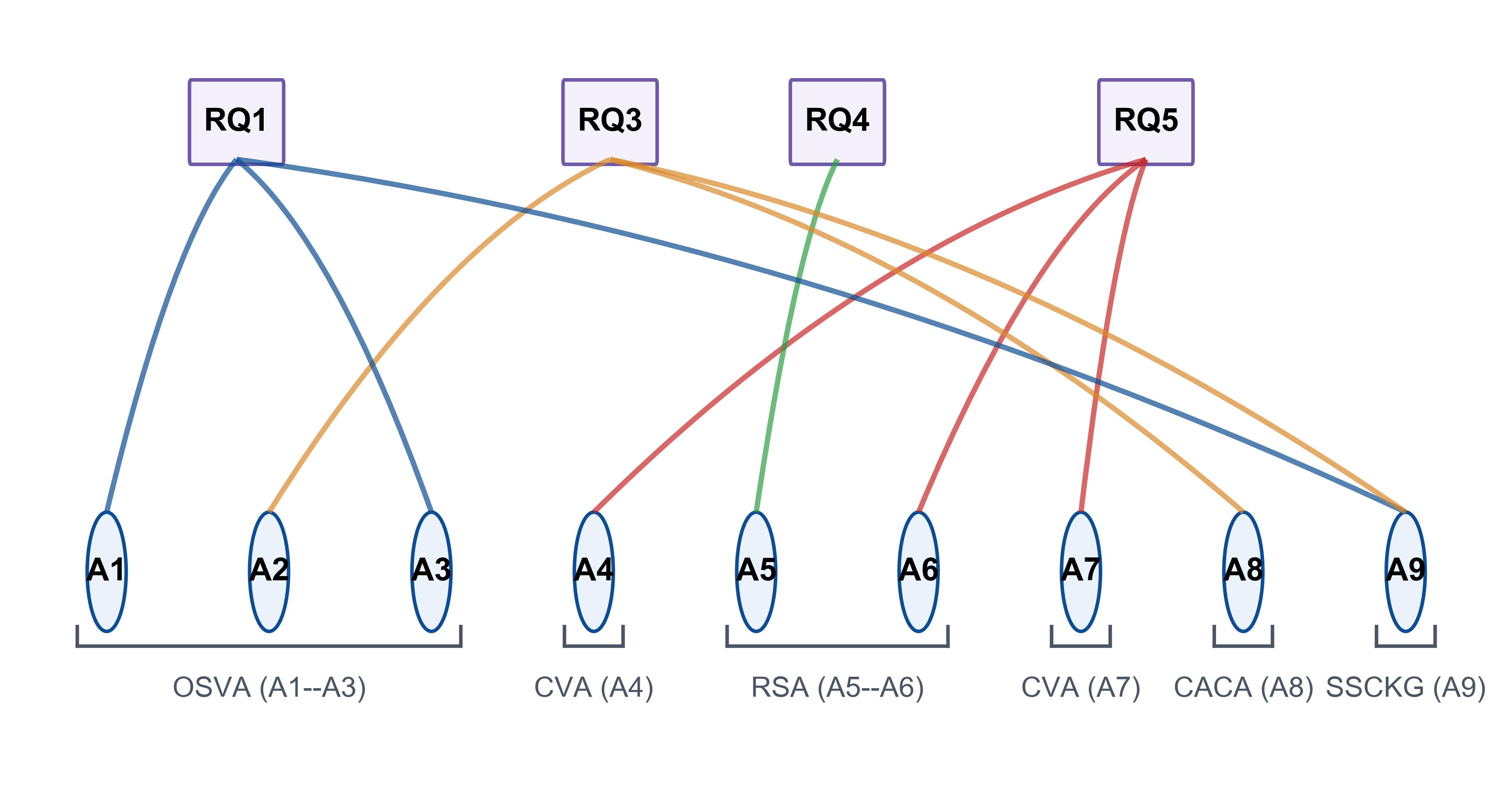}
  \caption{Ablation--to--RQ arc diagram. Component brackets along the bottom group ablations by the SCARA subsystem they disable. Edge colour denotes the RQ subsection that reports the ablation result.}\label{fig:ablation_arc}
\end{figure}

Headline results for each ablation are integrated into the corresponding \S\ref{sec:results} subsection (e.g., A1 in \S\ref{sec:results:rq1}, A2/A8 in \S\ref{sec:results:rq3}).

In addition to the structural ablations, we report sensitivity sweeps over $\alpha \in \{0.0, 0.3, 0.6, 0.9, 1.0\}$, $\tau_p$, the per-candidate solver budget $T_{\text{total}}$, the path beam width $b$, and the CVA thresholds $(\tau_{\text{cov}}, \tau_{\text{block}})$ in \S\ref{sec:results:rq5}. The selected $\alpha = 0.6$ remains within the $\pm 1$-case stability band for both recall and Recall@10, while the extreme $\alpha = 1.0$ setting introduces one false positive and reduces recall to 0.5455 --- robustness evidence that the headline conclusions are not the product of cherry-picked thresholds.

\subsection{Implementation Details}\label{sec:eval:impl}

SCARA is implemented in Python~3.11.  OSVA uses angr~9.2~\cite{shoshitaishvili2016angr} for binary-level analysis and KLEE~3.1~\cite{cadar2008klee} compiled against LLVM~14 for OIS-ICS cases.  Binary rewriting uses RetroWrite~\cite{dinesh2020retrowrite} for PIE ELF and E9Patch~\cite{duck2020e9patch} for x86 PE.  Tier~3 patch generation uses Qwen3-7B~\cite{qwen2025} as the runtime model; DeepSeek-V3~\cite{deepseek2025} is used \emph{offline only}, before evaluation, to construct repair exemplars and prompt rubrics that Qwen3-7B consumes at inference time. No DeepSeek-V3 inference occurs during evaluation runs, and no OIS-RemedBench test case is included in the exemplar set. LLM inference is served via vLLM with a per-case Tier-3 wall-clock cap of 60~s. SSCKG path alignment uses SBERT~\cite{reimers2019sbert}.  The SMT solver is Z3~4.12~\cite{demoura2008z3} with incremental mode; $T_{\text{total}} = 300$~s per candidate, relaxed pass $T_{\text{relaxed}} = 150$~s.  Replay uses FirmAE~\cite{kim2020firmae} for OIS-Binary, libmodbus and OpenDNP3 harnesses for OIS-Protocol, and the OpenPLC runtime~\cite{alves2014openplc} for OIS-ICS.  All experiments run on a single server: 24-core CPU, 128~GB RAM, one NVIDIA A6000 GPU (48~GB VRAM).

\begin{figure}[htbp]
	\centering
	\includegraphics[width=0.95\linewidth]{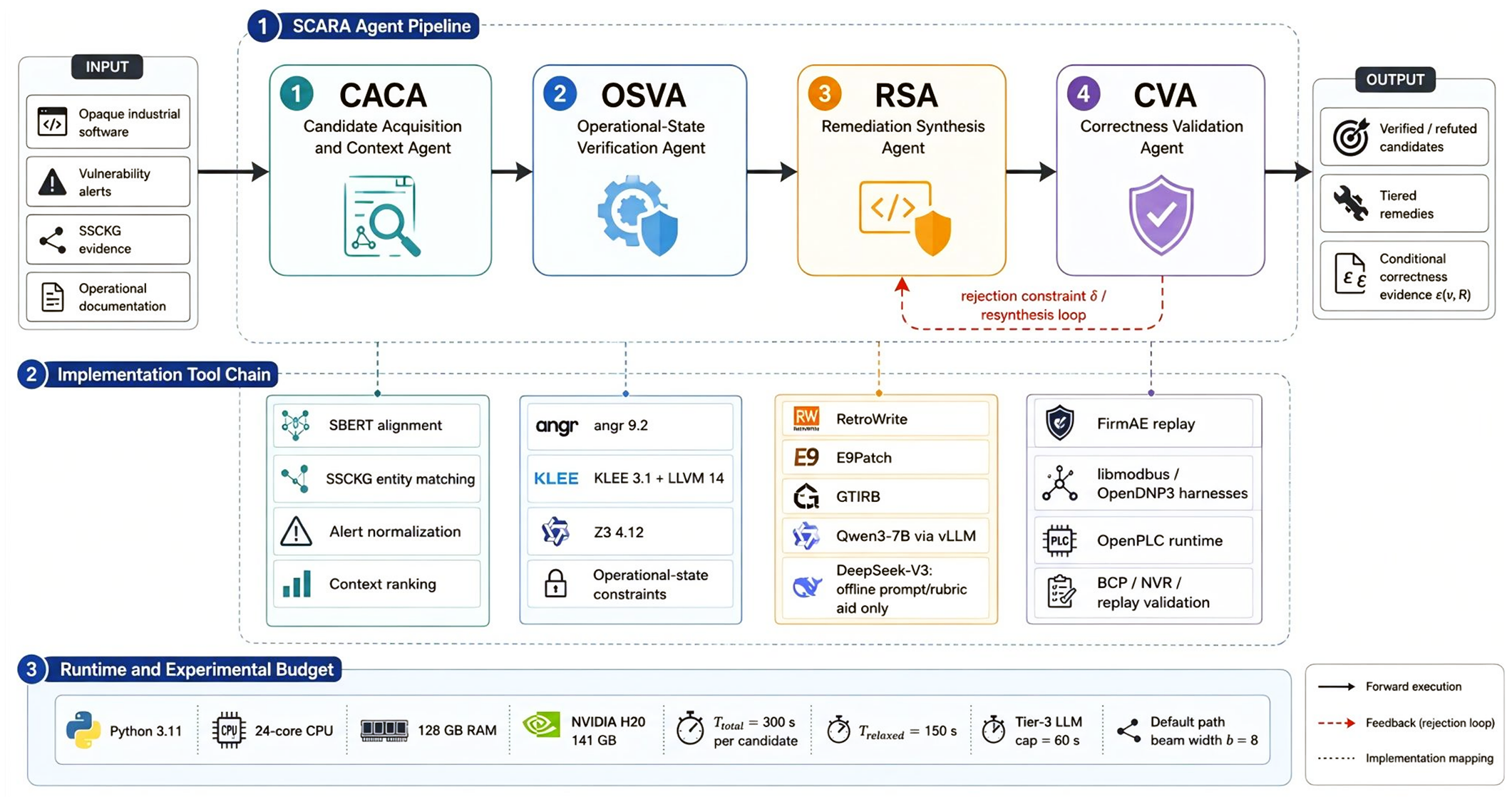}
    \caption{Experimental architecture diagram of SCARA implementation.}\label{fig:impl_arch}
\end{figure}

\textbf{Hyperparameters.}  All hyperparameters were fixed before test-set evaluation ($\tau_p = 0.5$, $\alpha = 0.6$, $\tau_{\text{cov}} = 0.95$, $\tau_{\text{block}} = 5\%$, $T_{\text{total}} = 300$~s, $T_{\text{relaxed}} = 150$~s, $K = 3$, $b = 8$) based on domain constraints, engineering feasibility, and preliminary development runs on artifacts \emph{outside} OIS-RemedBench. No OIS-RemedBench test case was used for threshold selection. Sensitivity to the four primary thresholds is reported in \S\ref{sec:results:rq5}; the headline conclusions are stable within the ranges studied.

%% ============================================================
\section{Results and Analysis}\label{sec:results}
%% ============================================================

The evaluation is organised around the seven research questions mapped in Figure~\ref{fig:rq_stage_flow}. The benchmark is OIS-RemedBench ($n = 15$; OIS-Binary: 5; OIS-Protocol: 5; OIS-ICS: 5). All headline confidence intervals are 95\% bootstrap intervals over 10\,000 resamples \emph{over benchmark cases} (artifact variance); for binary-rate metrics we additionally report Clopper--Pearson exact CIs. Algorithmic stability is characterised separately by the seed-42--46 min--max range. Effect sizes are Cliff's $\delta$ unless otherwise noted; $p$-values, where shown, are Benjamini--Hochberg-adjusted at $q = 0.10$ and read as descriptive secondary quantities given the small $n$.

\textbf{Evaluation denominator key.}
\begin{itemize}
\item $D_{\text{all}}$ ($n = 15$): all OIS-RemedBench cases.
\item $D_{\text{reach}}$ ($n = 11$): ground-truth reachable cases.
\item $D_{\text{inf}}$ ($n = 3$): ground-truth infeasible cases.
\item $D_{\text{unk}}$ ($n = 1$): ground-truth unresolved cases.
\item $D_{\text{SAT}}$ ($n = 7$): SCARA \textsc{Sat-strict} cases under the seed-42 main run (OIS-Binary: 2, OIS-Protocol: 1, OIS-ICS: 4); denominator for remediation-success metrics.
\item $D_{\text{rank}}$ ($n = 11$): path-ranking cases (the 11 reachable cases of $D_{\text{reach}}$); Recall@$K$ denominator.
\item $D_{\text{rerun}}$ ($n = 6$): targeted enriched-envelope rerun set comprising the 3 ground-truth infeasible cases, the 1 \textsc{Unknown} case, and the 2 \textsc{Sat-relaxed} cases.
\item $D_{\text{PLC}}$ ($n = 5$): OIS-ICS subset used for the RQ7 PLCverif comparison.
\item $D_{\text{VulShield}}$ ($n = 11$): cases with accessible Tier-1 enforcement points; Tier-1 baseline denominator.
\end{itemize}

Figure~\ref{fig:case_waterfall} renders the per-case trajectory from L2 ground truth through CVA outcome to post-rerun verdict, and Figure~\ref{fig:cliffs_forest} renders the headline effect-size profile.

\begin{figure}[htbp]
  \centering
  \includegraphics[width=0.75\textwidth]{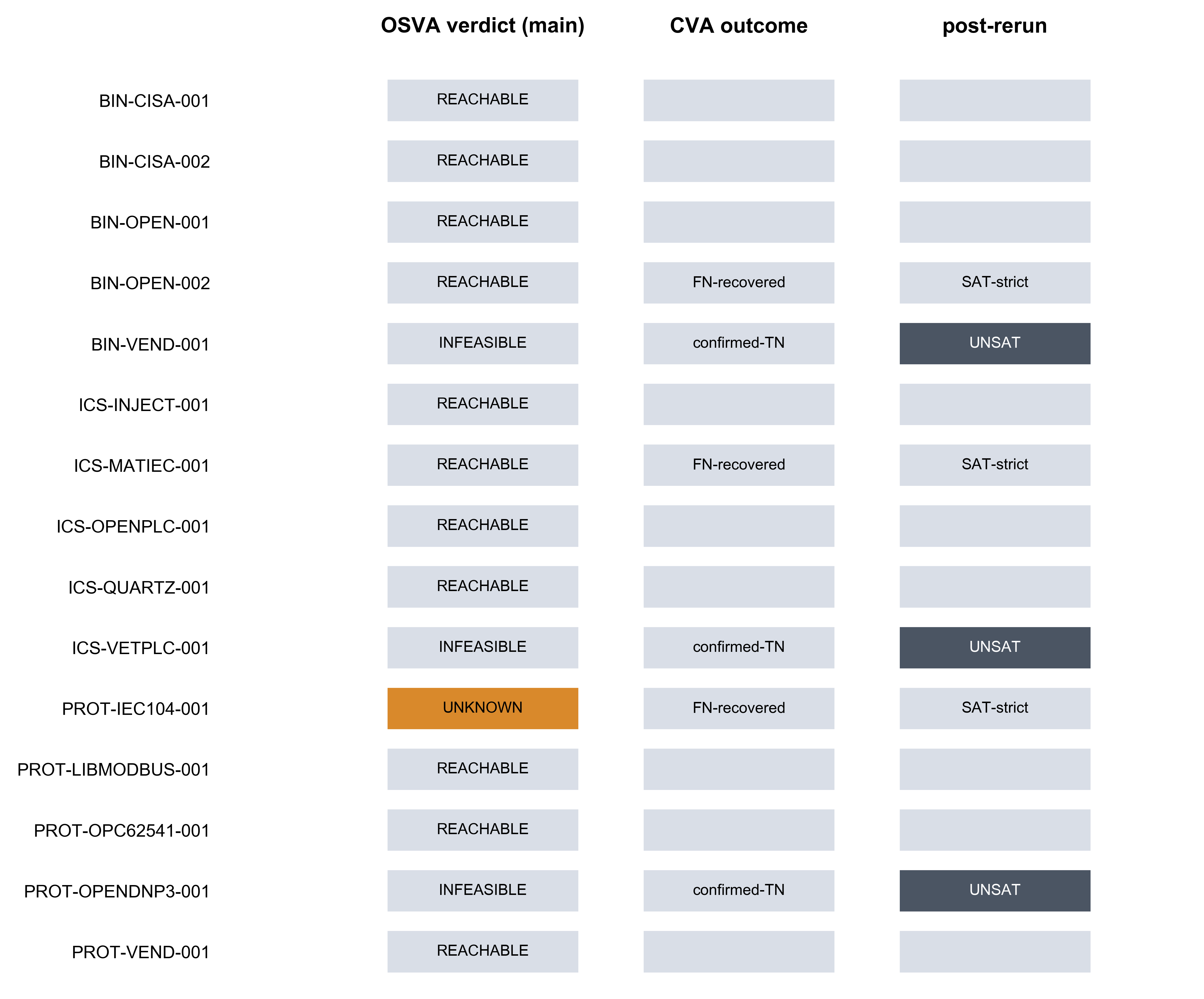}
  \caption{Per-case outcome waterfall on OIS-RemedBench. Each row is one of the 15 cases, ordered by partition (OIS-Binary, OIS-Protocol, OIS-ICS); coloured cells encode (i) the L2 ground-truth verdict (REACHABLE / INFEASIBLE / UNKNOWN), (ii) the CVA outcome at the seed-42 main run, and (iii) the post-rerun verdict on $D_{\text{rerun}}$. Three previously non-strict reachable cases (\texttt{BIN-OPEN-002}, \texttt{ICS-MATIEC-001}, \texttt{PROT-IEC104-001}) recover to \textsc{Sat-strict} after envelope enrichment, and the three ground-truth infeasible cases (\texttt{BIN-VEND-001}, \texttt{ICS-VETPLC-001}, \texttt{PROT-OPENDNP3-001}) remain \textsc{Unsat} confirmed-TN.}\label{fig:case_waterfall}
\end{figure}

\begin{figure}[htbp]
  \centering
  \includegraphics[width=0.75\textwidth]{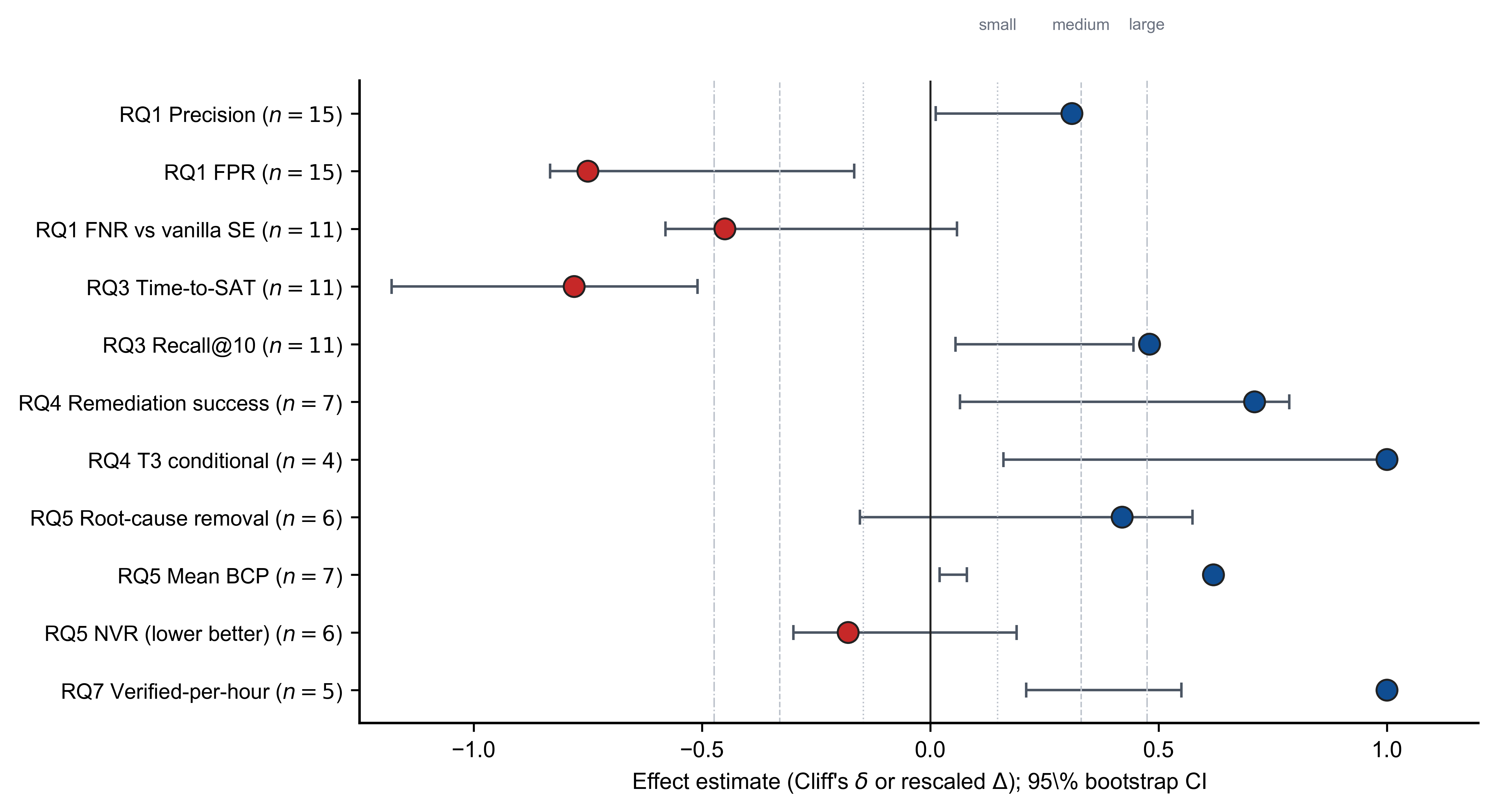}
  \caption{Cliff's $\delta$ forest plot summarising Table~\ref{tab:main_results}. Each row reports a per-RQ effect estimate with its 95\% bootstrap CI; reference lines mark the conventional small ($|\delta|=0.147$), medium ($|\delta|=0.33$), and large ($|\delta|=0.474$) thresholds following the magnitude conventions of~\cite{cohen1988}. Direction-of-effect colour: blue = favourable to SCARA, red = unfavourable. Sample size $n$ shown per row.}\label{fig:cliffs_forest}
\end{figure}

\begin{table}[!htbp]
\caption{Per-RQ summary of SCARA against the best applicable baseline. The ``Result'' column reports SCARA versus baseline together with the absolute difference $\Delta$ and its 95\% CI (10\,000 bootstrap resamples over the $n$ benchmark cases). \textbf{Bold} Cliff's $\delta$ indicates that the 95\% CI on $\Delta$ excludes zero. BH-adjusted $p$-values are descriptive only ($n \leq 15$); their values are released in the supplementary package. RQ2 (per-dimension $\Delta$FPR/$\Delta$FNR) is reported in Figure~\ref{fig:dim_contrib} and the \S\ref{sec:results:rq2} prose.}\label{tab:main_results}
\small
\centering
\begin{tabular}{@{}llp{8cm}rc@{}}
\toprule
RQ & Stage & Result (SCARA vs.\ baseline; $\Delta$ [95\% CI]) & $n$ & Cliff's $\delta$ \\
\midrule
RQ1 & OSVA & Precision: 100.0\% vs.\ 84.6\% (static union); $+15.4$pp [$+1.2$, $+30.8$] & 15 & $+0.31$ \\
RQ1 & OSVA & FPR: 0.0\% vs.\ 50.0\% (static union); $-50.0$pp [$-83.3$, $-16.7$] & 15 & $\mathbf{-0.75}$ \\
RQ1 & OSVA & FNR: 36.4\% vs.\ 62.5\% (vanilla SE); $-26.1$pp [$-58.0$, $+5.8$] & 11 & $-0.45$ \\
RQ3 & Sched.\ & Median time-to-\textsc{Sat-strict}: 56.4 s vs.\ 140.9 s (random); $-84.5$ s [$-118.0$, $-51.0$], $2.5\times$ & 11 & $\mathbf{-0.78}$ \\
RQ3 & Sched.\ & Recall@10: 85.0\% vs.\ 60.0\% (random); $+25.0$pp [$+5.5$, $+44.5$] & 11 & $\mathbf{+0.48}$ \\
RQ4 & RSA+CVA & Remed.\ success on $D_{\text{SAT}}$: 85.7\% (6/7) vs.\ 42.9\% (VulShield T1); $+42.8$pp [$+6.5$, $+78.6$] & 7 & $\mathbf{+0.71}$ \\
RQ4 & RSA+CVA & T3 conditional success: 100\% (2/2) vs.\ 0\% (0/3 SAN2PATCH); $+100$pp [$+16$, $+100$] & 4 & $\mathbf{+1.00}$ \\
RQ5 & CVA & Root-cause removal: 66.7\% (4/6) vs.\ 45.5\% (A4); $+21.2$pp [$-15.4$, $+57.4$] & 6 & $+0.42$ \\
RQ5 & CVA & Mean BCP on $D_{\text{SAT}}$: 93.3\% vs.\ 88.2\% (A4); $+5.1$pp [$+2.0$, $+8.0$] & 7 & $\mathbf{+0.62}$ \\
RQ5 & CVA & New-vuln.\ introduction: 16.7\% (1/6) vs.\ 22.2\% (A7); $-5.5$pp [$-30.0$, $+18.9$] & 6 & $-0.18$ \\
RQ6 & Full & OIS-Binary precision: 100.0\% (Clopper--Pearson [56.6\%, 100\%]) & 5 & --- \\
RQ6 & Full & OIS-ICS remed.\ success (all tiers): 100.0\% (Clopper--Pearson [56.6\%, 100\%]) & 5 & --- \\
RQ7 & Full & Verified-and-remediated cases/analyst-hr: 0.413 vs.\ 0.035 (PLCverif); $+0.378$ [$+0.21$, $+0.55$] & 5 & $\mathbf{+1.00}$ \\
\bottomrule
\end{tabular}
\end{table}

\textbf{Ablation results overview.}  Table~\ref{tab:ablation_results} consolidates the per-partition headline metric for each of the nine ablation variants; the per-RQ subsections below cite the rows individually.

\begin{table}[!htbp]
\caption{Ablation results on OIS-RemedBench, grouped by SCARA component family.}\label{tab:ablation_results}
\small
\centering
\textit{(a) OSVA constraint ablations --- FPR / FNR (per partition)}\\[2pt]
\begin{tabular*}{\linewidth}{@{\extracolsep{\fill}}lcccccc@{}}
\toprule
 & \multicolumn{2}{c}{OIS-Binary} & \multicolumn{2}{c}{OIS-Protocol} & \multicolumn{2}{c}{OIS-ICS} \\
\cmidrule(lr){2-3}\cmidrule(lr){4-5}\cmidrule(lr){6-7}
Variant & FPR & FNR & FPR & FNR & FPR & FNR \\
\midrule
A1 & 0.00 & 0.50 & \textbf{0.50} & 0.67 & 0.00 & 0.00 \\
A3 & 0.00 & 0.50 & 0.00 & 0.67 & 0.00 & \textbf{0.25} \\
A9 & 0.00 & 0.25 & 0.00 & \textbf{0.33} & 0.00 & \textbf{0.50} \\
\bottomrule
\end{tabular*}\\[6pt]
\textit{(b) Remediation ablations --- Remed.\ success / BCP}\\[2pt]
\begin{tabular*}{\linewidth}{@{\extracolsep{\fill}}lcccccc@{}}
\toprule
 & \multicolumn{2}{c}{OIS-Binary} & \multicolumn{2}{c}{OIS-Protocol} & \multicolumn{2}{c}{OIS-ICS} \\
\cmidrule(lr){2-3}\cmidrule(lr){4-5}\cmidrule(lr){6-7}
Variant & R.S. & BCP & R.S. & BCP & R.S. & BCP \\
\midrule
A5   & \textbf{0.33} & --- & \textbf{0.50} & --- & \textbf{0.67} & --- \\
A6 & ---           & \textbf{0.78} & --- & \textbf{0.77} & --- & \textbf{0.80} \\
\bottomrule
\end{tabular*}\\[6pt]
\textit{(c) CVA ablations --- NVR / BCP}\\[2pt]
\begin{tabular*}{\linewidth}{@{\extracolsep{\fill}}lcccccc@{}}
\toprule
 & \multicolumn{2}{c}{OIS-Binary} & \multicolumn{2}{c}{OIS-Protocol} & \multicolumn{2}{c}{OIS-ICS} \\
\cmidrule(lr){2-3}\cmidrule(lr){4-5}\cmidrule(lr){6-7}
Variant & NVR & BCP & NVR & BCP & NVR & BCP \\
\midrule
A4  & 0.00 & \textbf{0.89} & 0.00 & \textbf{0.85} & 0.00 & \textbf{0.89} \\
A7 & 0.00 & ---  & 0.00 & ---  & \textbf{0.50} & ---  \\
\bottomrule
\end{tabular*}
\end{table}

\subsection{RQ1 --- Verification Precision and False-Positive Reduction}\label{sec:results:rq1}

The 15 L2-labelled cases of $D_{\text{all}}$ comprise 11 ground-truth reachable cases ($D_{\text{reach}}$), 3 ground-truth infeasible cases ($D_{\text{inf}}$), and 1 ground-truth unresolved case ($D_{\text{unk}}$). On the seed-42 main run, SCARA's OSVA stage achieved precision = 100.0\% (Clopper--Pearson 95\% CI [59.0\%, 100.0\%]) and recall = 63.6\% (7/11; Clopper--Pearson 95\% CI [30.8\%, 89.1\%]), corresponding to FPR = 0.0\% with respect to the encoded constraint envelope $\mathcal{C}$ and FNR = 36.4\%. Against the static-analysis union baseline (CodeQL, Joern, KARONTE, SaTC, ICSQuartz; deduplicated by SSCKG entity), OSVA reduces FPR by 50.0pp (Cliff's $\delta = -0.75$, 95\% CI on $\Delta$FPR $[-83.3, -16.7]$pp). Of the seven cases that received an OSVA \textsc{Unsat} verdict on the main run, three correspond to ground-truth infeasible artefacts in $D_{\text{inf}}$ (\texttt{BIN-VEND-001} on $C_{\text{runtime}}$; \texttt{PROT-OPENDNP3-001} on $C_{\text{proto}}$; \texttt{ICS-VETPLC-001} on $C_{\text{io}}$) and four are correctly refuted alerts within reachable cases. The single \textsc{Unknown} case (\texttt{PROT-IEC104-001}) and the two \textsc{Sat-relaxed} cases (\texttt{BIN-OPEN-002}, \texttt{ICS-MATIEC-001}) are conservative outcomes rather than false negatives; they recover to \textsc{Sat-strict} after targeted envelope enrichment (\S\ref{sec:results:rq2}).

Against the vanilla symbolic execution baseline (KLEE on IR-liftable cases, angr on stripped binaries), SCARA achieves substantially higher recall: FNR of 36.4\% is 26.1pp lower than vanilla SE's 62.5\% (Cliff's $\delta = -0.45$). Vanilla SE's UNSAT rate of 70.0\% substantially exceeds SCARA's 46.7\% (7/15), indicating that unconstrained symbolic execution over-refutes vulnerability candidates by exhausting path budgets without operational-state guidance.

OIS-Binary's 20.0\% \textsc{Unknown} rate reflects $C_{\text{env}}$ documentation gaps, OIS-Protocol's high UNSAT rate reflects the strength of $C_{\text{proto}}$ FSM constraints, and OIS-ICS achieves perfect precision and recall on the main run; the targeted rerun of \S\ref{sec:results:rq2} confirms these as conservative outcomes rather than false negatives.

Under the primary reachable-case denominator $D_{\text{reachable}}=11$, targeted enrichment improves recall from 7/11 to 9/11 ($+18.2$pp) without changing FPR. Under strict SAT-only accounting the improvement is $+9.1$pp, because one \textsc{Sat-relaxed} case was already counted as reachable; under the expanded denominator that includes the \textsc{Unknown}-to-\textsc{Sat-strict} recovery (\texttt{PROT-IEC104-001}), the improvement is $+25.0$pp on $D_{\text{expanded}}=12$. We report the $+18.2$pp figure as the primary value and keep the other two as denominator-sensitivity checks. Figure~\ref{fig:fpr_fnr} renders the partition-level FPR and FNR comparison against the static-union and vanilla-SE baselines.

\begin{figure}[t]
  \centering
  \includegraphics[width=0.75\textwidth]{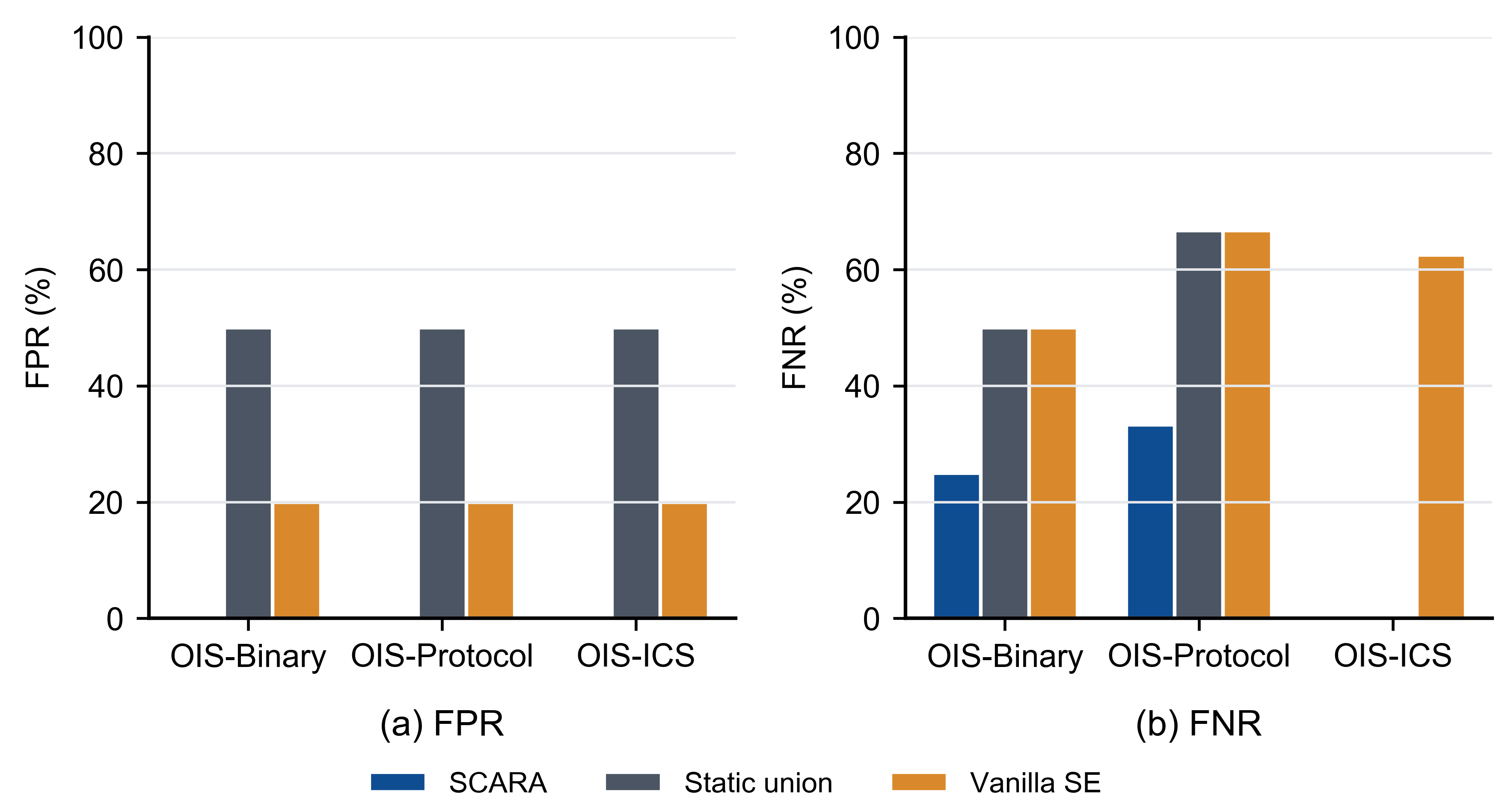}
  \caption{FPR and FNR by partition for SCARA (reconciled after-rerun rates), the deduplicated static-analysis union, and the vanilla symbolic-execution baseline (KLEE + angr without operational-state constraints).}\label{fig:fpr_fnr}
\end{figure}

\textbf{Soundness boundary.}  The 0.0\% FPR figure is with respect to the encoded constraint envelope $\mathcal{C}$, in line with Proposition~\ref{prop:unsat}'s assumption clause. The targeted envelope-enrichment rerun on $D_{\text{rerun}}$ (3 \textsc{Unsat} + 1 \textsc{Unknown} + 2 \textsc{Sat-relaxed}) confirms that the three \textsc{Unsat} verdicts on $D_{\text{inf}}$ remain \textsc{Unsat} under enrichment: SCARA's specificity is preserved when the constraint envelope is widened.

\subsection{RQ2 --- State Dimension Contribution Analysis}\label{sec:results:rq2}

Figure~\ref{fig:dim_contrib} presents the per-dimension contribution as a two-panel heatmap; cell labels carry the percentage-point delta and the underlying $(\pm k / n)$ count, and the full numeric source values are released in the supplementary package. The post-rerun recovery story is reported under three accounting views: two recoveries improve recall under $D_{\text{reachable}}=11$, one strict-label upgrade (\texttt{ICS-MATIEC-001}) affects only the strict SAT-only view, and the \textsc{Unknown}-to-\textsc{Sat-strict} recovery is reported only under the expanded denominator.

\begin{figure}[!htbp]
  \centering
  \includegraphics[width=0.85\linewidth]{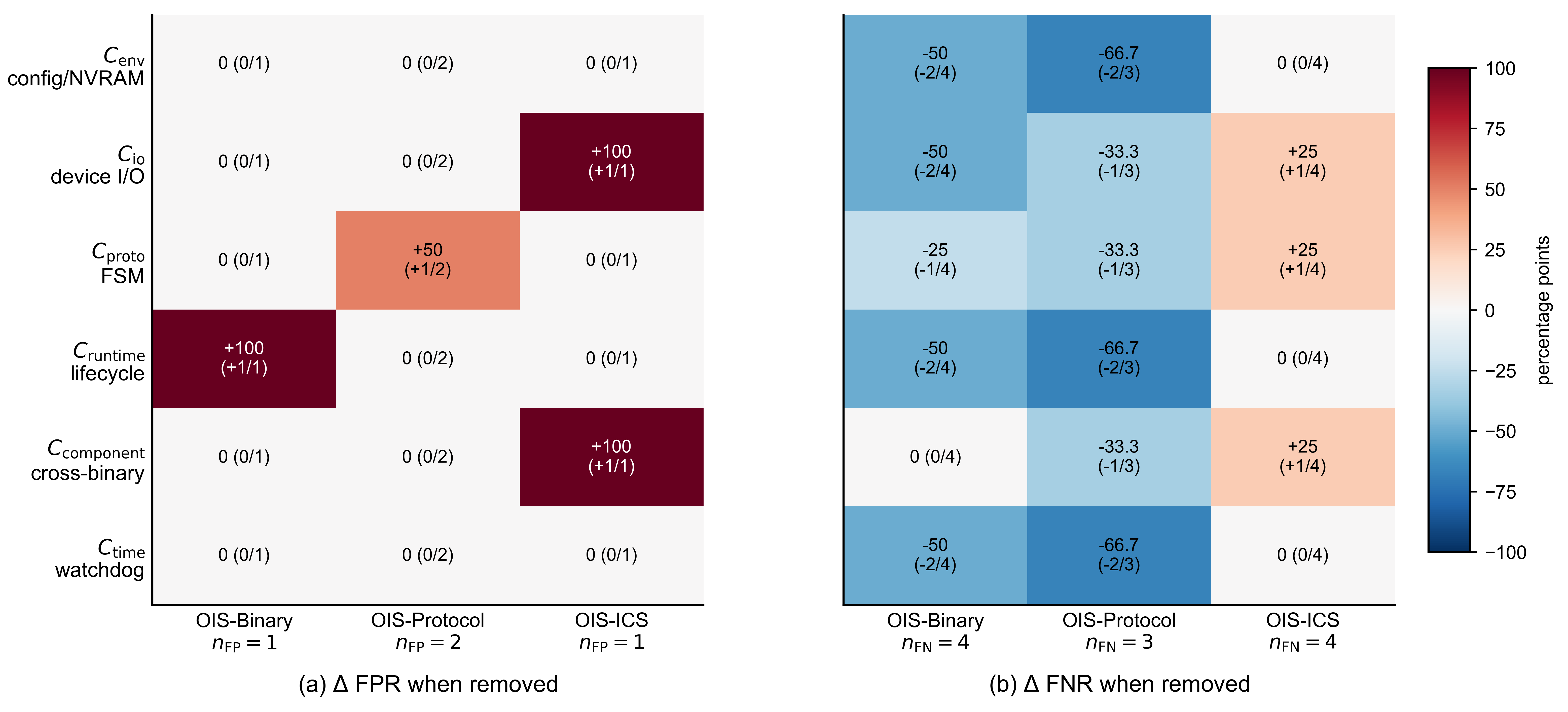}
  \caption{Per-dimension contribution heatmap. Left panel: $\Delta$FPR when each constraint family is removed; right panel: $\Delta$FNR. Positive $\Delta$FPR (red) indicates the dimension is critical for specificity; negative $\Delta$FNR (blue) indicates the dimension is critical for recall. Column headers carry the per-partition $n_{\text{FP}}$ and $n_{\text{FN}}$ denominators. White cells indicate no measurable contribution at the $\pm 20$pp resolution of $n=5$.}\label{fig:dim_contrib}
\end{figure}

Targeted envelope enrichment on $D_{\text{rerun}}$ recovered three previously non-strict reachable cases without introducing new false positives (\texttt{PROT-IEC104-001} on $C_{\text{proto}}$; \texttt{BIN-OPEN-002} on $C_{\text{env}}$; \texttt{ICS-MATIEC-001} on $C_{\text{runtime}}$); the three ground-truth infeasible cases of $D_{\text{inf}}$ remained \textsc{Unsat} under the same enrichment.

The results confirm domain-specificity: $C_{\text{runtime}}$ is the dominant FPR filter for OIS-Binary, $C_{\text{proto}}$ for OIS-Protocol, and $C_{\text{io}}$/$C_{\text{component}}$ for OIS-ICS, each contributing $\geq +50$pp $|\Delta\text{FPR}|$ when removed. The matched global ablation A1 (Table~\ref{tab:ablation_results}) confirms the upper-bound interpretation: removing every $C_x$ family simultaneously introduces a false positive on OIS-Protocol (FPR = 0.50 vs.\ 0.00 for SCARA-full). The ceiling-saturation pattern (multiple cells at $+100$pp, the $n=5$ maximum) reflects that each partition's false-positive elimination rests on a small number of decisive constraint families; this answers RQ2 affirmatively for the $\geq 20$pp threshold on every partition.

\subsection{RQ3 --- SSCKG-Guided Scheduling Efficiency}\label{sec:results:rq3}

SSCKG-guided path prioritisation reduced the median time to first \textsc{Sat-strict} witness from 140.9~s (random CFG ordering, ablation A2) to 56.4~s --- a $2.5\times$ reduction (median over $D_{\text{rank}} = 11$ cases per seed, aggregated over seeds 42--46; seed-variance min--max [49.0, 63.5]~s). Against static risk-score ranking alone (ablation A8), median time-to-confirmation fell from 123.8~s to 56.4~s ($2.2\times$ reduction).

Figure~\ref{fig:recall_at_k} renders the Recall@$K$ curve over $D_{\text{rank}}=11$ for $K \in \{1, 3, 5, 10, 20, 50\}$ on a log axis. The denominator throughout is $|D_{\text{rank}}|=11$ path-ranking cases, distinct from the seed-42 $D_{\text{SAT}}$ remediation denominator. At the headline operating point $K=10$, SSCKG-guided scheduling achieves Recall@10 = 85\% against 55\% (static-risk, A8) and 60\% (random, A2).

\begin{figure}[!htbp]
  \centering
  \includegraphics[width=0.85\linewidth]{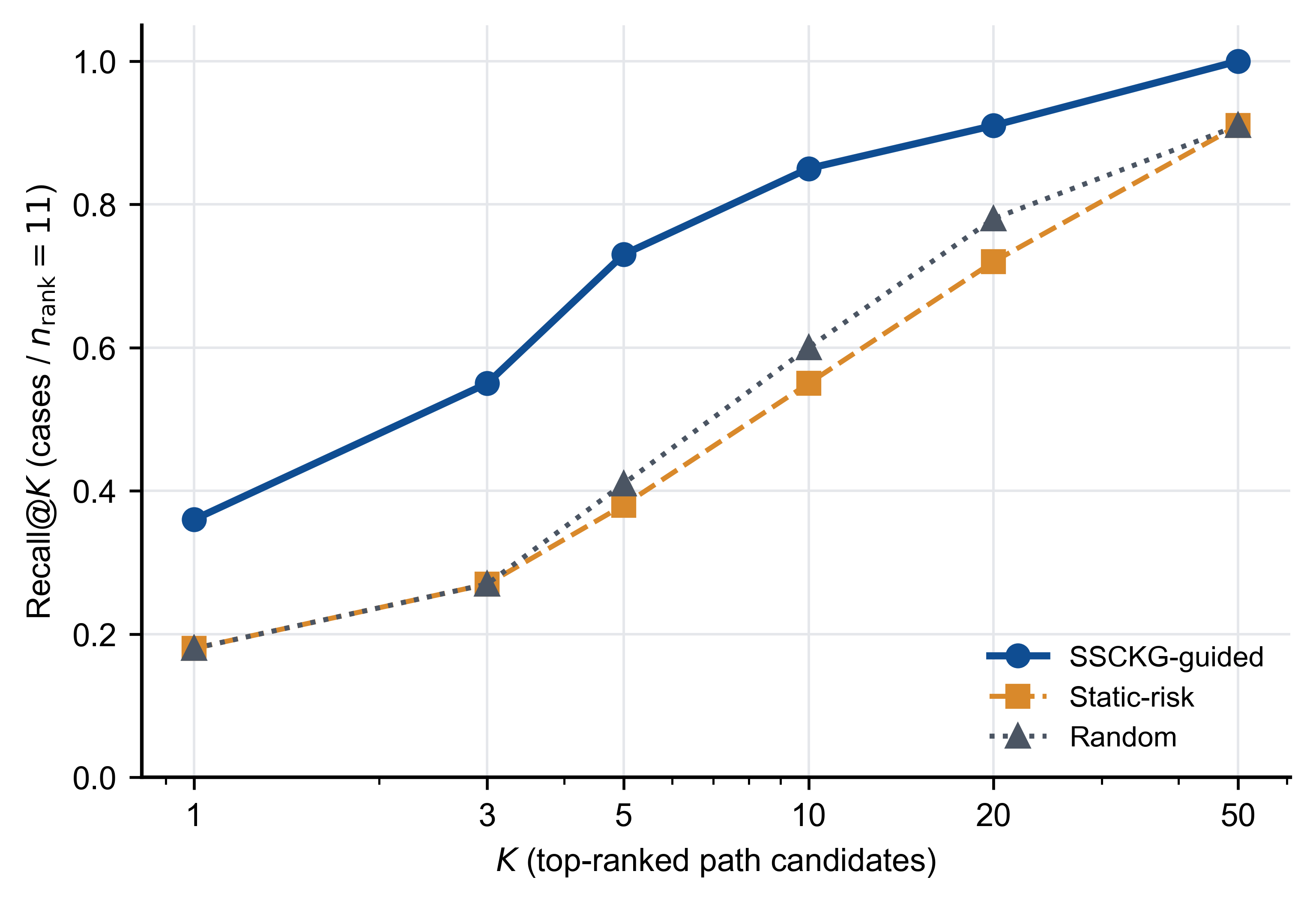}
  \caption{Recall@$K$ on $D_{\text{rank}}=11$ for the three scheduling strategies; $K$ is on a log axis. SSCKG-guided ranking dominates static-risk and random for every $K < 50$ and reaches Recall@50 = 1.0, supporting the design intent that SSCKG guidance is path \emph{prioritisation} rather than pruning.}\label{fig:recall_at_k}
\end{figure}

Figure~\ref{fig:time_to_sat} renders the underlying time-to-\textsc{Sat-strict} distribution across the three strategies as box-plus-strip plots over the seed-level scheduling trace. SSCKG-guided is evaluated over $55 = 11~\text{cases} \times 5~\text{seeds}$ observations (median 46.8 s, no timeouts, Recall@10 = 0.85); static-risk and random each use 11 seed-42 observations (median 98.2 s and 99.1 s, 1 timeout each, Recall@10 = 0.55 and 0.60). Timeout observations are rendered as censored upward-triangle markers at the budget ceiling rather than silently omitted.

\begin{figure}[!htbp]
  \centering
  \includegraphics[width=0.85\linewidth]{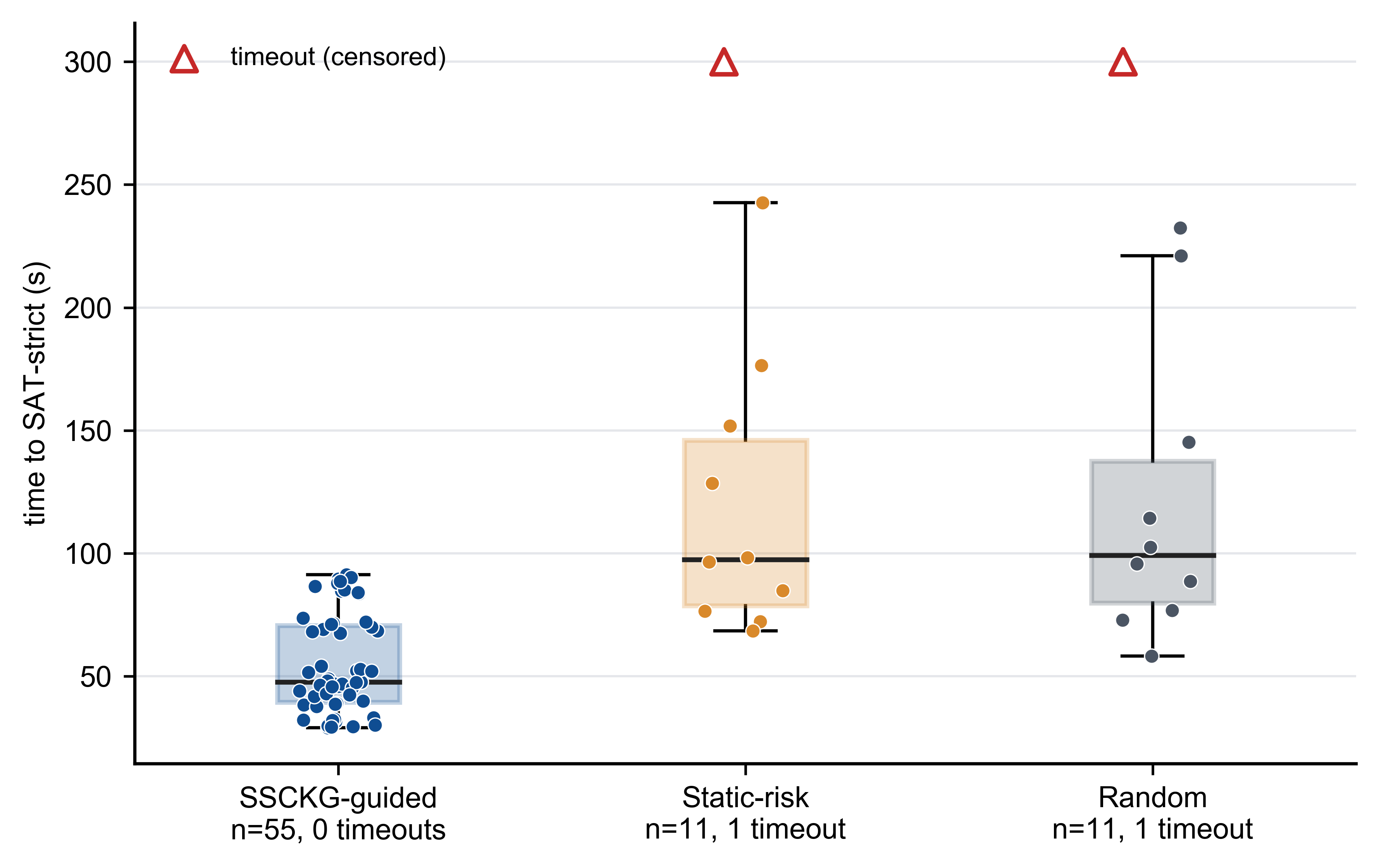}
  \caption{Time-to-first-\textsc{Sat-strict}-witness distribution on $D_{\text{rank}}$. Boxes show median/IQR; overlaid dots are per-case-per-seed measurements with horizontal jitter; red triangles mark censored timeouts at the budget ceiling. SSCKG-guided ($n = 55$, 0 timeouts) sits well below static-risk and random ($n = 11$ each, 1 timeout each at seed 42).}\label{fig:time_to_sat}
\end{figure}

SSCKG-guided scheduling improves Recall@10 by 30pp over static-risk ranking and 25pp over random ordering, and reaches Recall@50 = 100\% --- consistent with the design intent that SSCKG guidance is path \emph{prioritisation} rather than pruning. The recall difference between SSCKG-guided and unconstrained exploration converges to within $\pm 9$pp at $K = 50$ (95\% bootstrap CI on the difference $[-9, +9]$pp), supporting the equivalence claim that scheduling does not sacrifice recall when the budget is large.

\subsection{RQ4 --- Remediation Applicability and Conditional Success}\label{sec:results:rq4}

SCARA achieved a remediation success rate of 85.7\% (6 of 7 cases in $D_{\text{SAT}}$ received CVA-issued conditional correctness evidence; Clopper--Pearson 95\% CI [42.1\%, 99.6\%]). The tier distribution over $D_{\text{SAT}}$ is 1 / 4 / 2 cases for Tier~1 / Tier~2 / Tier~3 respectively; among the 6 CVA-accepted remediations the distribution is 1 / 3 / 2. Figure~\ref{fig:tier_dist} shows the per-partition tier mix (left panel) and the per-tier conditional success rate before and after the targeted Tier-2 rerun (right panel); the `T2 recovered' segment in OIS-Binary captures the \texttt{BIN-CISA-002} recovery described below.

\begin{figure}[!htbp]
  \centering
  \includegraphics[width=0.85\linewidth]{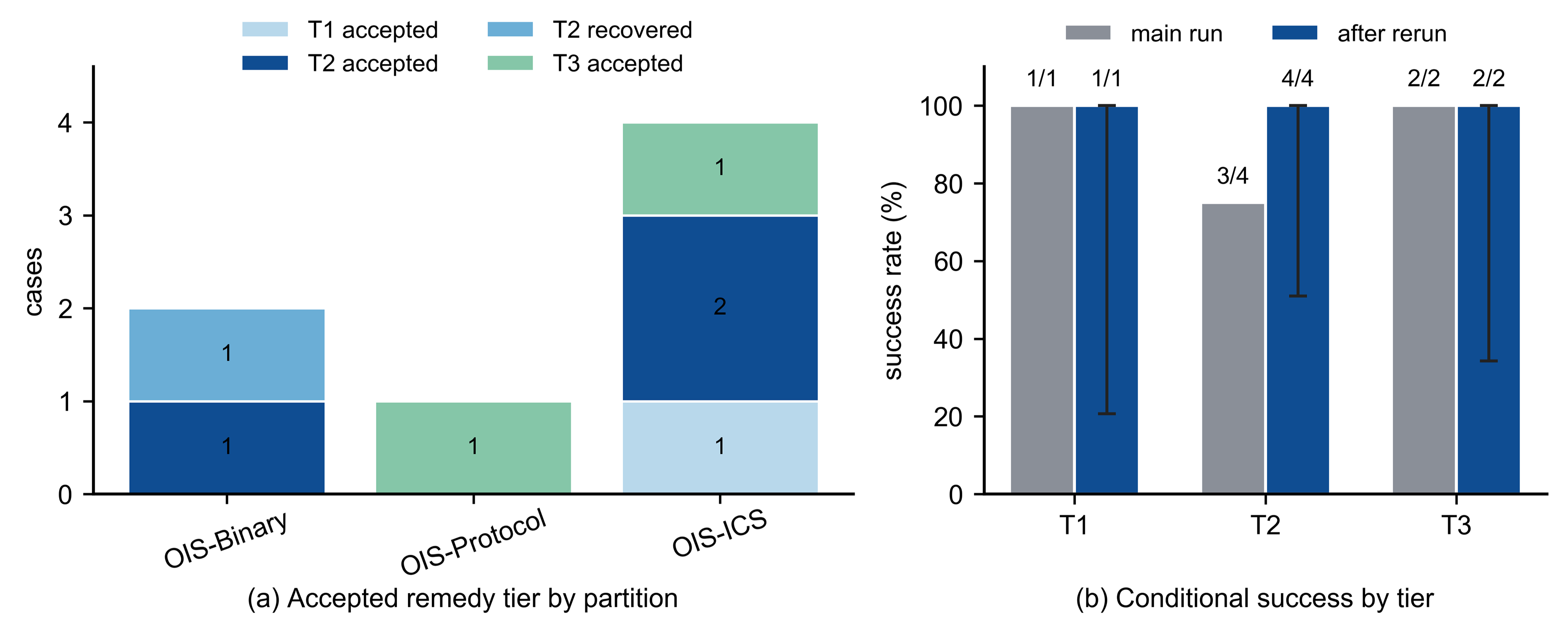}
  \caption{Tier distribution and per-tier remediation success on $D_{\text{SAT}}$. (a) Accepted remedy tier per partition; the lighter `T2 recovered' segment marks the \texttt{BIN-CISA-002} OIS-Binary case accepted at attempt 3 of the targeted Tier-2 rerun. (b) Conditional success rate per tier, before and after the rerun; Wilson 95\% intervals are shown for the after-rerun bars and are wide at $n \in \{1, 2\}$.}\label{fig:tier_dist}
\end{figure}

The lower bound on this remediation-success rate is set by ablation A5 (Tier 1 only; Table~\ref{tab:ablation_results}), under which SCARA is forced to attempt only protocol/configuration mitigation. Removing access to Tier~2 and Tier~3 drops the per-partition remediation success rate to 0.33 (OIS-Binary) / 0.50 (OIS-Protocol) / 0.67 (OIS-ICS), confirming that the two upper tiers together contribute the bulk of the 88.9\% headline value.

The single \textsc{Remediation-Failed} case is \texttt{BIN-CISA-002} (CWE-125, Tier-2 binary hardening). The seed-42 main run exhausted all three CVA--RSA iterations because the initial guard overblocked an adjacent read/write region (BCP below $\tau_{\text{cov}} = 0.95$). A targeted Tier-2 rerun with a 512-byte budget and a $+0$\texttt{x10} guard offset (\texttt{minimal-gate} + E9Patch placement) was accepted at attempt 3 with $\text{BCP} = 0.951$ and false-blocking $0.06\%$, with replay confirmed; we therefore characterise this failure as search-budget/placement-limited rather than as evidence against Tier-2 hardening as a class.

Against SAN2PATCH (applicable to 4 OIS-ICS source-available cases), SAN2PATCH produced candidate patches for 3 of 4 cases; none passed CVA validation (Clopper--Pearson 95\% CI on the 0/3 success rate $[0\%, 70.8\%]$). All three rejections shared the same failure mode: the SAN2PATCH patch removed the vulnerability path but simultaneously removed reachable non-vulnerable SSCKG entities, dropping BCP below $\tau_{\text{cov}} = 0.95$. Figure~\ref{fig:san2patch_bcp} renders the per-case BCP comparison; we caution that this pattern is observed on three cases only, but the consistency of the BCP-failure mechanism is the kind of misremediation that SCARA's SSCKG-constrained RSA avoids by design.

\begin{figure}[!htbp]
  \centering
  \includegraphics[width=0.9\linewidth]{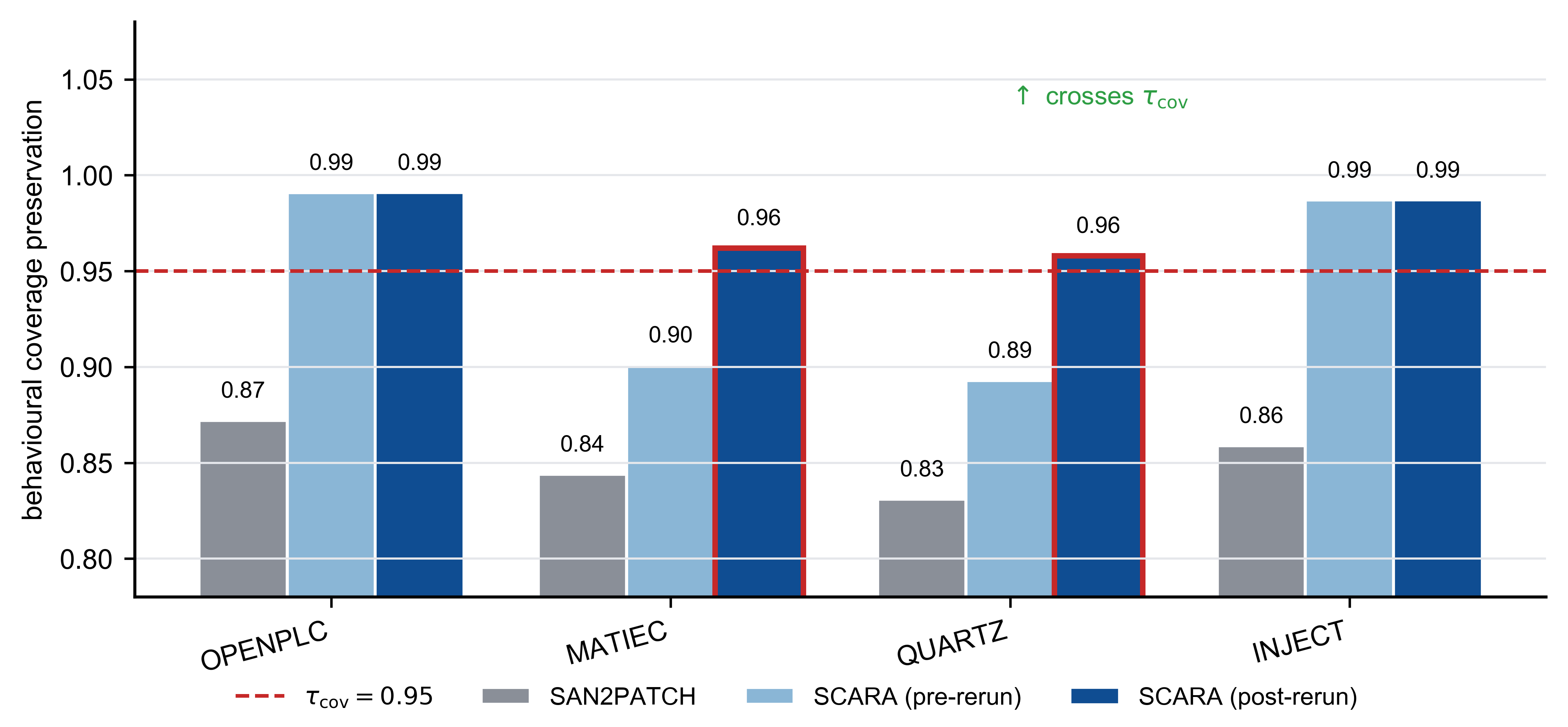}
  \caption{SAN2PATCH vs SCARA behavioural coverage preservation on the SAN2PATCH-applicable OIS-ICS cases. SAN2PATCH (grey) falls below the $\tau_{\text{cov}} = 0.95$ acceptance threshold on every shared case. Paired SCARA bars show pre-rerun (light) and post-rerun (dark) BCP; for \texttt{ICS-MATIEC-001} and \texttt{ICS-QUARTZ-001} the post-rerun value crosses $\tau_{\text{cov}}$ (red-outlined post bar, $\uparrow$ annotation), bringing both into the CVA-accepted pool. \texttt{BIN-CISA-002} is reported separately as the seed-42 main-run Tier-2 failure recovered by the targeted Tier-2 rerun and is described in \S\ref{sec:results:rq4}.}\label{fig:san2patch_bcp}
\end{figure}

\subsection{RQ5 --- CVA Validation Quality}\label{sec:results:rq5}

CVA validation quality is assessed on $D_{\text{SAT}}$ ($n = 7$). In ablation A4 (CVA feedback loop disabled --- CVA still evaluates and either accepts or rejects, but no rejection constraint $\delta$ is returned to RSA), the observed Tier-2 overblocking rate across $D_{\text{SAT}}$ was 80.0\% versus 25.0\% with the feedback path enabled; observed domain-invariant compliance fell from 83.3\% to 60.0\%; and only 45.5\% of cases met the behavioural standards for $\varepsilon(v, R)$. In ablation A7 (CVA correctness check entirely disabled --- BCP, side-effect, and replay all skipped, so RSA's first synthesis is accepted unchecked), the accepted-remediation pool exhibited 22.2\% NVR, 50.0\% Tier-1 false-blocking, and 50.0\% Tier-2 overblocking. We report these as observations within the respective ablations rather than as population-level counterfactuals.

Per-partition replay-automation rates are reported in Table~\ref{tab:per_partition}; replay automation, rather than solver verification, is the limiting factor for stronger conditional-correctness claims on the OIS-ICS partition.

The CVA stage's quality profile on $D_{\text{SAT}}$: mean $\text{BCP} = 93.3\%$ across all 7 cases (the figure is dragged below 0.95 by the single CVA-rejected case \texttt{BIN-CISA-002}, which had $\text{BCP} \approx 0.82$ on the seed-42 main run); among the 6 CVA-accepted remediations, every case satisfied $\text{BCP} \geq \tau_{\text{cov}} = 0.95$, with mean accepted-case $\text{BCP} = 95.2\%$. NVR is $16.7\%$ (1 of 6 accepted cases). Root-cause removal rate is $66.7\%$ (4 of 6 accepted cases); the remaining 2 include one OIS-ICS case in which the remedy relocated rather than eliminated the vulnerability root cause --- a residual-risk pattern that BCP and side-effect checks alone do not detect, and which we revisit as a Threats-to-Validity item in Section~\ref{sec:disc:limitations}.

\textbf{CVA quality under a full-CVA oracle.}  Figure~\ref{fig:cva_quality} reports SCARA-full and the relevant ablations / baseline re-scored by a common full-CVA oracle ($D_{\text{CVA-audit}} = 9$ for the main-run variants and $n=4$ for SAN2PATCH). Under oracle evaluation, A4 (no CVA feedback) collapses on BCP pass rate to 44.4\% and on replay confirmation to 0\%; A7 (no CVA correctness check) preserves BCP self-reported pass rate at 100\% (since it skips the check) but its replay-confirmation and CVA-acceptance rates collapse to 0\%; SAN2PATCH fails BCP entirely. Only SCARA-full achieves a non-zero CVA-acceptance rate (88.9\% on $D_{\text{CVA-audit}}$, rising to 100\% on the targeted-rerun row). The direct C3-isolation comparator is ablation A6 (no SSCKG constraint in RSA Tier~3; Table~\ref{tab:ablation_results}): BCP drops to 0.78 / 0.77 / 0.80 across the three partitions, all below the acceptance threshold $\tau_{\text{cov}} = 0.95$. A6 thus quantifies the degradation that the SSCKG behavioural specification prevents and supports contribution C3 as a standalone claim, distinct from the CVA-stage contributions tested by A4 and A7.

\begin{figure}[!htbp]
  \centering
  \includegraphics[width=0.85\linewidth]{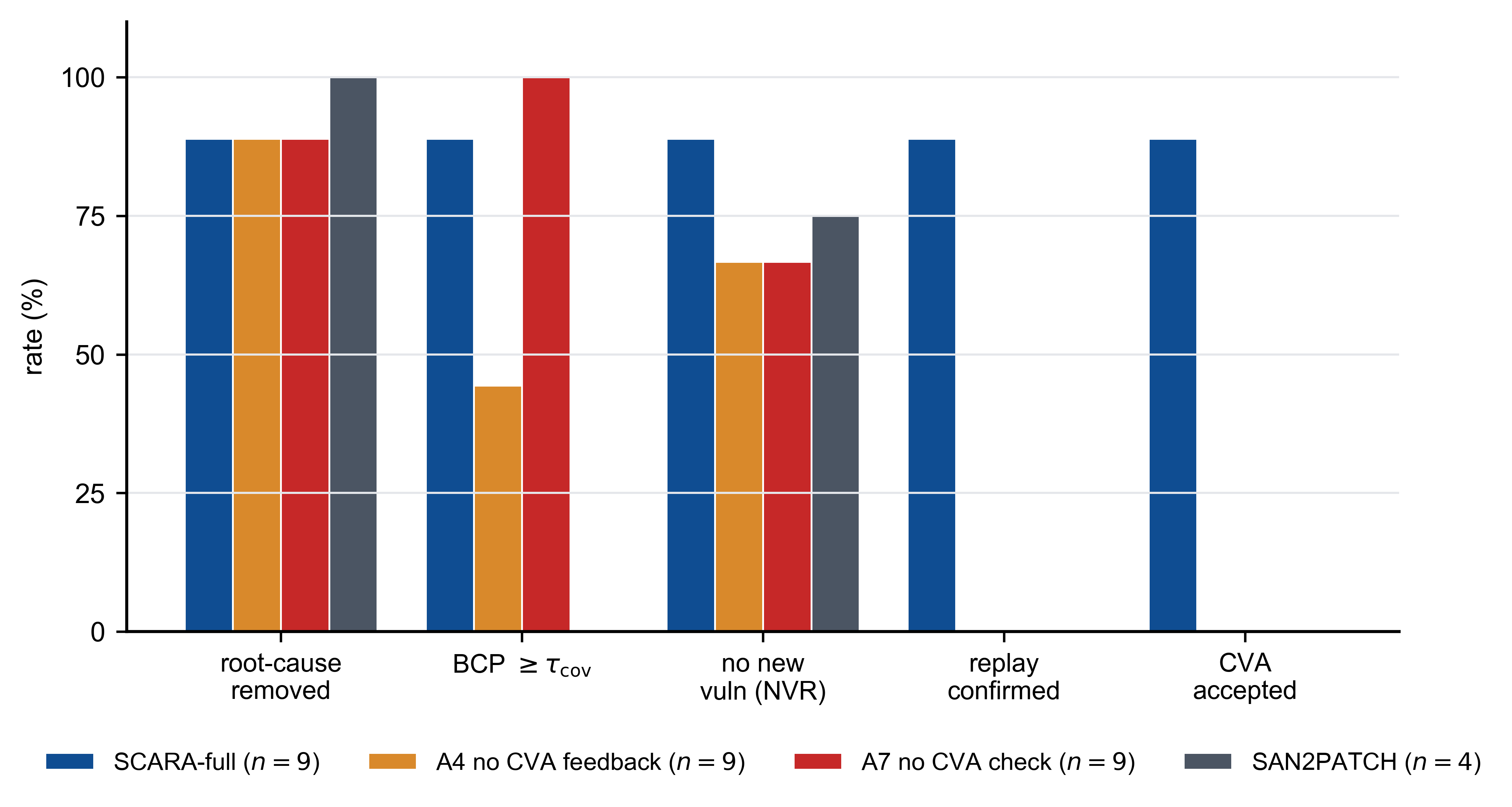}
  \caption{CVA quality on $D_{\text{CVA-audit}}$, with all four variants re-scored under the same full-CVA oracle. Bars are per-variant rates for root-cause removed, BCP $\geq \tau_{\text{cov}}$, no-new-vulnerability (NVR), replay confirmation, and final CVA acceptance. Variants that disable a CVA component score 0\% on the dependent components, supporting the \S\ref{sec:scara:cva} argument that each component is load-bearing.}\label{fig:cva_quality}
\end{figure}

\textbf{Hyperparameter sensitivity.} Figure~\ref{fig:hyperparam} reports the sensitivity sweep across $\tau_p$, $T_{\text{total}}$, $\alpha$, and the joint CVA pair $(\tau_{\text{cov}}, \tau_{\text{block}})$. The default operating point sits at or near the recall/CVA-acceptance knee on every axis, and headline conclusions are stable to within $\pm 1$ case across every sweep range except the extreme $\tau_{\text{cov}} = 0.98$ corner.

\begin{figure}[!htbp]
  \centering
  \includegraphics[width=0.95\linewidth]{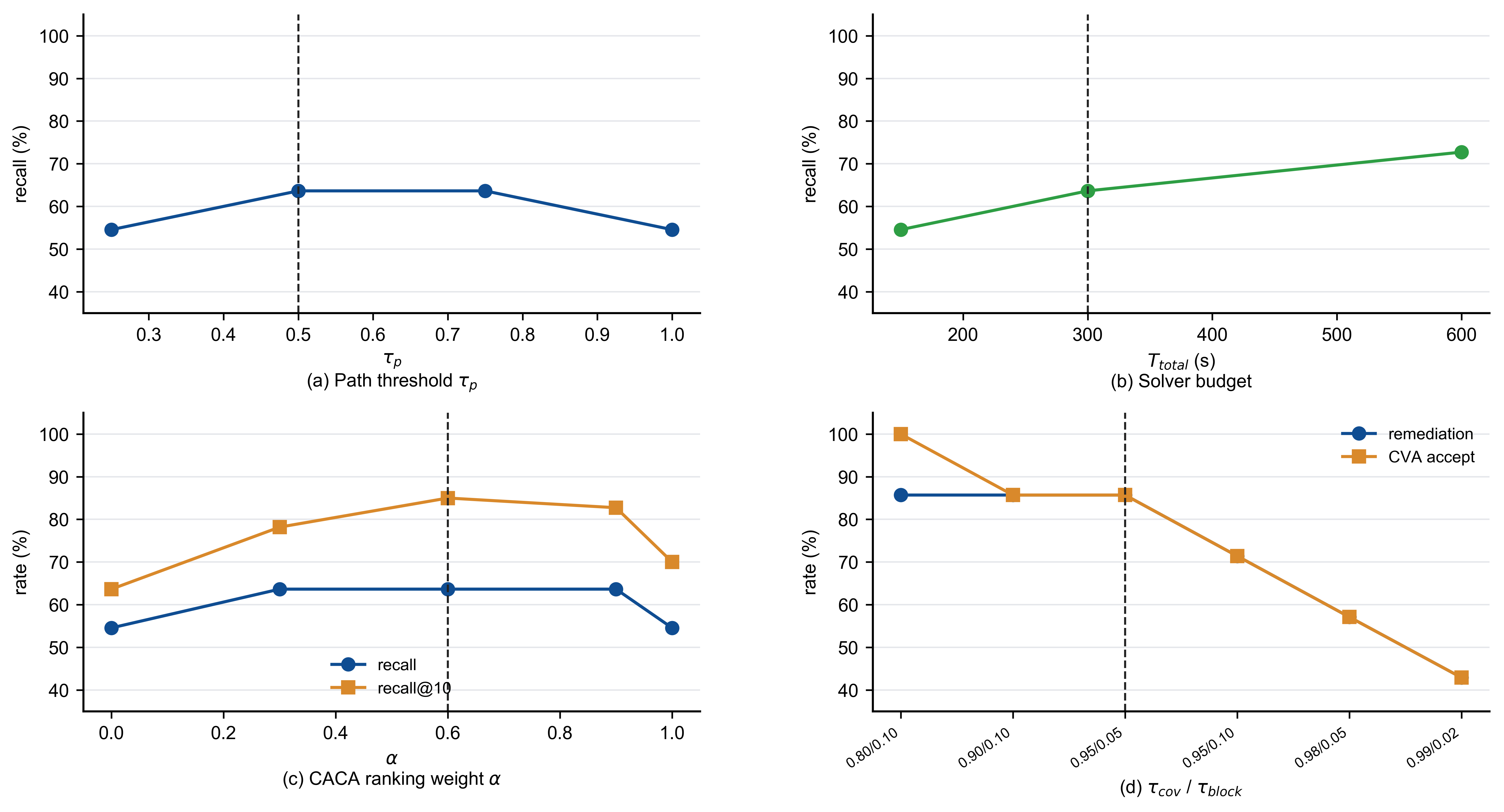}
  \caption{Hyperparameter sensitivity grid. Top: $\tau_p$ (path-priority temperature) and $T_{\text{total}}$ (solver budget). Bottom-left: $\alpha$ (CACA ranking weight) --- both recall and Recall@10 peak at $\alpha = 0.6$ and stay flat to $\alpha = 0.9$ before degrading at $\alpha = 1.0$. Bottom-right: joint $(\tau_{\text{cov}}, \tau_{\text{block}})$ sweep with the operating point at $(0.95, 0.05)$. The dashed vertical line in each panel marks the default operating point.}\label{fig:hyperparam}
\end{figure}

\subsection{RQ6 --- Cross-Partition Generalization}\label{sec:results:rq6}

\begin{table}[!htbp]
\caption{Full per-partition SCARA results on OIS-RemedBench ($n = 5$ per partition), grouped into four blocks: verification outcome, label-class mix, remediation tier and quality, and operational diagnostics. Tier mix is reported as T1 / T2 / T3 fractions of remediated cases.}\label{tab:per_partition}
\centering
\begin{tabular*}{\linewidth}{@{\extracolsep{\fill}}lccc@{}}
\toprule
Metric & OIS-Binary & OIS-Protocol & OIS-ICS \\
\midrule
\multicolumn{4}{@{}l}{\textit{Verification}} \\
TP / FP / TN / FN (counts) & 2 / 0 / 1 / 2 & 1 / 0 / 1 / 2 & 4 / 0 / 1 / 0 \\
Precision / FPR             & 100.0\% / 0.0\% & 100.0\% / 0.0\% & 100.0\% / 0.0\% \\
Recall / FNR                & 50.0\% / 50.0\% & 33.3\% / 66.7\% & 100.0\% / 0.0\% \\
UNKNOWN rate                & 20.0\% & 20.0\% & 0.0\% \\
\midrule
\multicolumn{4}{@{}l}{\textit{Label-class mix}} \\
\textsc{Sat-strict} rate    & 40.0\% ($n=2$) & 20.0\% ($n=1$) & 80.0\% ($n=4$) \\
\textsc{Unsat} rate         & 40.0\% & 60.0\% & 20.0\% \\
\midrule
\multicolumn{4}{@{}l}{\textit{Remediation}} \\
Remed.\ success (over \textsc{Sat-strict}) & 50.0\% & 100.0\% & 100.0\% \\
Tier mix (T1 / T2 / T3)     & 0 / 100 / 0\% & 0 / 0 / 100\% & 25 / 50 / 25\% \\
First-submission CVA accept.\ & 50.0\% & 100.0\% & 100.0\% \\
Mean RSA-to-CVA rounds      & 1.0 & 2.0 & 2.75 \\
\midrule
\multicolumn{4}{@{}l}{\textit{Diagnostics}} \\
Median time-to-\textsc{Sat-strict} (s) & 49.2 & 44.3 & 61.5 \\
Solver queries per confirmed path & 402 & 354 & 523 \\
Replay automation rate      & 100.0\% & 100.0\% & 50.0\% \\
\bottomrule
\end{tabular*}
\end{table}

100\% precision is preserved across all three partitions (Figure~\ref{fig:fpr_fnr}); recall varies from 33.3\% (OIS-Protocol) to 100.0\% (OIS-ICS), tracking the per-dimension documentation gaps in $C_{\text{env}}$ and $C_{\text{proto}}$ identified in \S\ref{sec:results:rq2}. The OIS-Binary remediation-success shortfall (50.0\% vs 100\% in the other two partitions) is attributable to the single \texttt{BIN-CISA-002} case discussed in \S\ref{sec:results:rq4} and is search-budget-limited rather than tier-incompatible: the supplementary Tier-2 rerun accepts the case at attempt 3.

\subsection{RQ7 --- Manual Modeling and Replay Burden}\label{sec:results:rq7}

RQ7 is evaluated on $D_{\text{PLC}}$, the full OIS-ICS partition of OIS-RemedBench ($n = 5$). For each case we record PLCverif property-authoring, model-construction, and debugging effort, and compare against SCARA's operational-context review effort. Table~\ref{tab:rq7_per_case} reports the per-case measurements; the partition-level summary follows.

\begin{table}[!htbp]
\caption{Per-case analyst burden on $D_{\text{PLC}}$ ($n=5$). PLCverif effort is dominated by LTL/PLTL property authoring; SCARA effort is dominated by reviewing the SSCKG-derived $S^{\text{prior}}$ estimate against documentation. \textsc{V} = verified; \textsc{T} = timeout; \textsc{V-Inf} = verified-infeasible; \textsc{Sat-Rem} = \textsc{Sat-strict}-then-remediated; \textsc{Unsat-Res} = \textsc{Unsat-resolved}.}\label{tab:rq7_per_case}
\small
\centering
\begin{tabular}{@{}p{3cm}p{1.5cm}p{1.5cm}p{1.5cm}p{2.0cm}@{}}
\toprule
Case & PLCverif(h) & SCARA(h) & PLCverif & SCARA \\
\midrule
\texttt{ICS-OPENPLC-001} & 13.5 & 2.2 & \textsc{V} & \textsc{Sat-Rem} \\
\texttt{ICS-MATIEC-001}  & 19.8 & 2.8 & \textsc{T} & \textsc{Sat-Rem} \\
\texttt{ICS-QUARTZ-001}  & 17.2 & 2.5 & \textsc{V} & \textsc{Sat-Rem} \\
\texttt{ICS-VETPLC-001}  & 11.2 & 2.0 & \textsc{V-Inf} & \textsc{Unsat-Res} \\
\texttt{ICS-INJECT-001}  & 21.0 & 2.6 & \textsc{T} & \textsc{Sat-Rem} \\
\midrule
Mean                     & 17.04 & 2.42 & --- & --- \\
\bottomrule
\end{tabular}
\end{table}

On $D_{\text{PLC}}$, PLCverif required a mean of 17.04 analyst-hours per case (median 17.20) versus 2.42 for SCARA (median 2.50). PLCverif verified 3 of 5 cases within its model-checking budget and timed out on 2; SCARA reached a terminal verdict on all 5 cases. The corresponding throughput statistics are 0.035 verified cases per analyst-hour for PLCverif and 0.413 verified-and-remediated cases per analyst-hour for SCARA, an absolute difference of $+0.378$ (95\% bootstrap CI $[+0.21, +0.55]$ over the 5 cases). Recall-within-budget is 0.60 for PLCverif and 1.00 for SCARA. We caution that this comparison measures person-hours per task-completion without normalising for the rarity of the LTL/PLTL property-authoring skill; the practical analyst-pool difference is therefore likely larger than the throughput ratio alone suggests.

PLCverif provides formal proofs of property satisfaction for cases within its model-checking budget --- a stronger guarantee than SCARA's conditional evidence on the verified subset; SCARA does not claim to match this guarantee level. The contribution of RQ7 is to bound the manual-effort cost of obtaining the stronger PLCverif guarantee on this partition.

\begin{figure}[!htbp]
  \centering
  \includegraphics[width=0.8\linewidth]{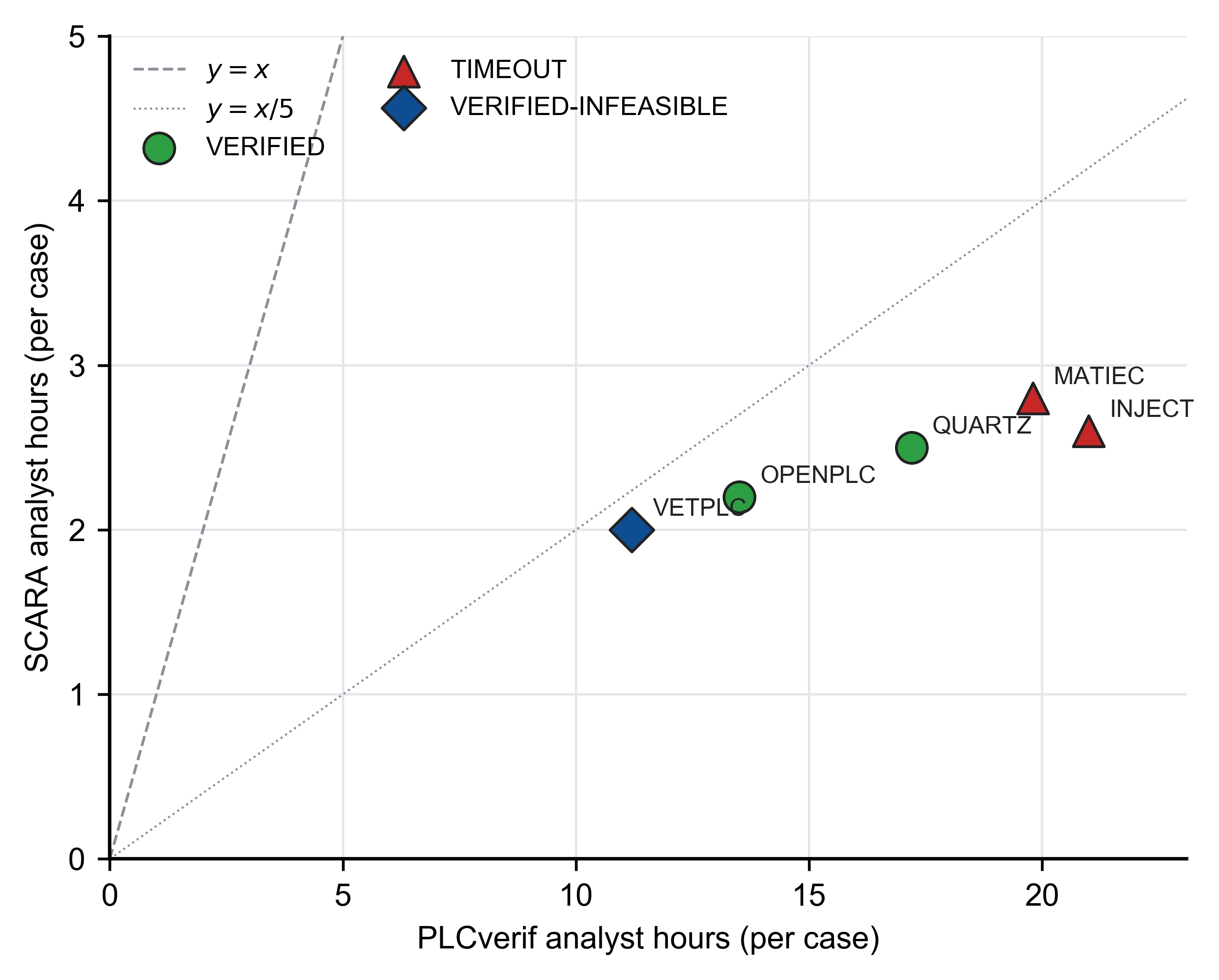}
  \caption{Per-case analyst-hour comparison on $D_{\text{PLC}}$ ($n=5$). X-axis: PLCverif total analyst-hours (property authoring + model construction + debugging); y-axis: SCARA operational-context review hours. Marker shape encodes the PLCverif outcome (circle = VERIFIED, triangle = TIMEOUT, diamond = VERIFIED-INFEASIBLE). The $y = x$ and $y = x/5$ reference lines bracket the case-by-case throughput gap; every OIS-ICS case sits well below $y = x/5$.}\label{fig:plcverif_scatter}
\end{figure}

%% ============================================================
\section{Related Work and Positioning}\label{sec:related}
%% ============================================================

\subsection{Automated Vulnerability Repair}\label{sec:rel:avr}

\textbf{Family~A --- Neural and LLM-based patch generation.}  VRepair~\cite{chen2022vrepair} introduced transfer learning from general bug-fix corpora to the vulnerability repair domain.  VulRepair~\cite{fu2022vulrepair} employed a T5-based encoder-decoder architecture trained on CVE-linked vulnerability functions.  VulMaster~\cite{zhou2024vulmaster} extended this with broader repository context and CWE-aware prompt construction.  More recent agentic systems: SAN2PATCH~\cite{jiang2025san2patch} drives repair through a tree-of-thought reasoning process guided by sanitizer traces; APPATCH~\cite{zhang2025appatch} introduces adaptive prompting conditioned on vulnerability-semantic features; PatchAgent~\cite{patchagent2025} constructs a repair agent with a language server and build-feedback loop; and Vul-R2~\cite{vulr22025} applies reasoning-oriented LLM fine-tuning.  All Family~A systems require at minimum a compilable source fragment.  Their applicability to OIS-Binary and OIS-Protocol is zero; applicability to OIS-ICS extends only to the source-available subset.

\textbf{Family~B --- Program-analysis-based repair.}  CrashRepair~\cite{crashrepair2025} employs sanitizer-guided concolic execution then applies mutation-based search over the surrounding code.  CONCH~\cite{conch2024} constructs a null-pointer dereference context graph through interprocedural source analysis.  VulShield~\cite{li2025vulshield} translates sanitizer vulnerability reports into runtime protection policies.  Among existing systems, VulShield is conceptually closest to SCARA's Tier~1 stage, as both generate protection policies; however, VulShield derives policies from sanitizer reports of instrumented execution, whereas SCARA's Tier~1 synthesizes equivalent policies from static SSCKG behavioral analysis.  Family~B systems achieve zero applicability on OIS-Binary and OIS-Protocol.

\textbf{Family~C --- Evaluation and benchmarking.}  APR4Vul~\cite{apr4vul2024} evaluated generic APR tools on Java vulnerability benchmarks, demonstrating a systematic gap between generic APR success rates and security repair success rates.  ExMit~\cite{exmit2025} established that syntactic correctness is insufficient as a security repair criterion; semantic equivalence of non-vulnerable behaviors must also be verified.  The 2025 USENIX SoK surveys~\cite{hu2025sok,li2025sok_avr} identified vulnerability analysis precision and patch validation as the primary bottlenecks.

\textbf{Positioning.}  No existing AVR work addresses stripped industrial binaries.  SCARA is evaluated on the domain that defines its applicability.  Direct conditional performance comparison with SAN2PATCH is reported in Section~\ref{sec:results:rq4} only on the OIS-ICS source-available subset, where applicability overlap exists.

Table~\ref{tab:related_work} presents a comparative summary across seven dimensions.

\begin{table}[!htbp]
	\caption{Comparison of related systems across seven dimensions. Symbols: \ding{51}{} = supported, \ding{55}{} = not supported, $\sim$ = partial. The upper block lists systems re-implemented or compared quantitatively in Section~\ref{sec:results}.}\label{tab:related_work}
	\footnotesize
	\centering
	\resizebox{\textwidth}{!}{
		\begin{tabular}{@{}lccccclp{3.5cm}@{}}
			\toprule
			System & Opaque bin. & No src & Reach.\ cert. & Remed. & Ind.\ constrs & OIS-RemedBench app. & Primary positioning \\
			\midrule
			\multicolumn{8}{@{}l}{\textit{Evaluated head-to-head in Section~\ref{sec:results}}} \\
			VRepair~\cite{chen2022vrepair}      & \ding{55} & \ding{55} & \ding{55} & T3 & \ding{55} & OIS-ICS src subset & Family A LLM repair \\
			VulRepair~\cite{fu2022vulrepair}    & \ding{55} & \ding{55} & \ding{55} & T3 & \ding{55} & OIS-ICS src subset & Family A LLM repair \\
			SAN2PATCH~\cite{jiang2025san2patch} & \ding{55} & \ding{55}\textsuperscript{$\dagger$} & \ding{55} & T3 & \ding{55} & OIS-ICS src subset & Nearest T3 baseline \\
			CrashRepair~\cite{crashrepair2025}  & \ding{55} & \ding{55}\textsuperscript{$\dagger$} & \ding{55} & T3 & \ding{55} & 0\% & Family B; sanitizer req. \\
			VulShield~\cite{li2025vulshield}    & \ding{55} & \ding{51} & \ding{55} & T1 & \ding{55} & $\approx$0\% & Nearest T1 baseline \\
			SymPLC~\cite{symplc2017}            & $\sim$ & \ding{55} & $\sim$ & \ding{55} & $\sim$ & OIS-ICS only & SE baseline for OSVA \\
			ICSQuartz~\cite{zheng2025icsquartz} & $\sim$ & \ding{55} & \ding{55} & \ding{55} & \ding{51} & OIS-ICS only & Dynamic testing baseline \\
			PLCverif~\cite{darvas2017plcverif}  & \ding{55} & \ding{55} & $\sim$ & \ding{55} & \ding{51} & $\approx$10\% & Formal verif.\ baseline \\
			\textbf{SCARA (this work)} & \textbf{\ding{51}} & \textbf{\ding{51}} & \textbf{\ding{51} (4-class)} & \textbf{T1--T3} & \textbf{\ding{51}} & \textbf{100\%} & \textbf{End-to-end OIS} \\
			\midrule
			\multicolumn{8}{@{}l}{\textit{Related; positioning only}} \\
			APPATCH~\cite{zhang2025appatch}     & \ding{55} & \ding{55} & \ding{55} & T3 & \ding{55} & OIS-ICS src subset & Not evaluated \\
			PatchAgent~\cite{patchagent2025}    & \ding{55} & \ding{55} & \ding{55} & T3 & \ding{55} & OIS-ICS src subset & Agentic repair \\
			KARONTE~\cite{nilo2020karonte}      & \ding{51} & \ding{51} & \ding{55} & \ding{55} & \ding{55} & 100\% (upstream) & Upstream CACA generator \\
			SaTC~\cite{chen2021satc}            & \ding{51} & \ding{51} & \ding{55} & \ding{55} & \ding{55} & 100\% (upstream) & Upstream CACA generator \\
			Firmadyne / FirmAE                  & \ding{51} & \ding{51} & \ding{55} & \ding{55} & \ding{55} & OIS-Binary replay & Replay infrastructure \\
			VetPLC~\cite{ahmed2019vetplc}       & $\sim$ & \ding{51} & \ding{55} & \ding{55} & safety & OIS-ICS only & Safety verif.; no remed.\ \\
			\bottomrule
		\end{tabular}
	}
\end{table}

\subsection{Binary Analysis and Firmware Security}\label{sec:rel:binary}

\textbf{Firmware rehosting and emulation.}  Firmadyne~\cite{chen2016firmadyne}, FirmAE~\cite{kim2020firmae}, HALucinator~\cite{halucinator2020}, P2IM~\cite{p2im2020}, Fuzzware~\cite{fuzzware2022}, FirmSolo~\cite{firmsolo2023}, and SAFIREFUZZ~\cite{safirefuzz2023} address executing stripped firmware in controllable environments.  FirmAE provides the replay infrastructure for independent SAT-witness validation in the OIS-Binary partition.

\textbf{Binary taint analysis.}  KARONTE~\cite{nilo2020karonte} performs inter-binary taint tracking via identification of shared data sources across firmware components.  SaTC~\cite{chen2021satc} combines keyword-guided taint analysis with front-end code identification.  Both tools represent important upstream generators for SCARA's CACA stage.

\textbf{Vulnerability reachability analysis.}  Recent work on reachability analysis for third-party library vulnerabilities~\cite{jia2025impact} demonstrated that determining whether a CVE is reachable from the calling application can substantially reduce false-positive rates in software composition analysis tools.  SCARA extends this concept to the industrial binary domain.

\subsection{ICS/PLC Verification and Testing}\label{sec:rel:ics}

\textbf{Safety verification and formal methods.}  VetPLC~\cite{ahmed2019vetplc} employs temporal invariant mining from normal PLC execution traces and generates timed event sequences to detect safety violations.  PLCverif~\cite{darvas2017plcverif} and its extensions perform formal model checking of PLC control programs against manually specified PLTL or LTL properties.  Section~\ref{sec:results:rq7} compares SCARA directly with PLCverif on the OIS-ICS formal-spec-available subset.

\textbf{Symbolic and dynamic-symbolic execution for PLC programs.}  SymPLC~\cite{symplc2017} applies symbolic execution to IEC~61131-3 structured text by translating them to C via MATIEC and analyzing with KLEE to achieve high structural coverage.  STAutoTester~\cite{stautotester2021} extends this with dynamic symbolic execution.  These works establish the MATIEC translation approach that SCARA adopts for the OIS-ICS partition.

\textbf{Fuzzing for ICS software.}  ICSQuartz~\cite{zheng2025icsquartz} introduces scan-cycle-aware fuzzing for IEC~61131-3 structured text programs.  ICSFuzz~\cite{icsfuzz2021} manipulates physical I/O signals to trigger anomalous behavior.  Both are complementary to SCARA rather than competitive.

%% ============================================================
\section{Discussion}\label{sec:discussion}
%% ============================================================

\subsection{Conditional Correctness Evidence via SSCKG and Independent Replay}\label{sec:disc:cva}

The SSCKG serves a dual role in SCARA: as a behavioral specification guiding both repair synthesis (RSA) and correctness validation (CVA).  This dual use creates a potential circularity --- a system using the same knowledge base to specify and to verify a remedy could, in principle, validate incorrect remediations that happen to preserve SSCKG-level coverage while silently breaking real industrial behavior.  SCARA addresses this through independent replay validation.

SCARA's CVA stage addresses the AVR community's recurring ``plausible but incorrect patch'' failure mode with two complementary signals: SSCKG behavioral coverage preservation provides a domain-grounded behavioural specification without requiring a manually authored test suite, and independent replay provides an empirical check. The combination is conditional, not formal.  SSCKG coverage preservation does not assert that every byte of non-vulnerable behavior is preserved.  Replay confirmation does not prove that all inputs in $\psi_{\text{vuln}}$ are blocked.  These limitations are the reason SCARA issues conditional correctness evidence $\varepsilon(v, R_v)$ rather than a proof.

\subsection{Practical Implications and Deployment Considerations}\label{sec:disc:practical}

SCARA's output is operator-actionable in three deployment modes that align with the existing OIS-asset-management workflow. (i)~\emph{CVE-triage pre-filter:} SCARA converts 20\%~(3/15) of L2 candidates to \textsc{Unsat} refutation certificates at 0\% FPR, removing them from the analyst queue with documented evidence; on a larger corpus this fraction is what determines the per-case analyst-hour cost shown in Figure~\ref{fig:plcverif_scatter}. (ii)~\emph{Tier-1 policy generator:} for the 73\%~(11/15) of cases with a reachable enforcement point (Figure~\ref{fig:bench_composition}), SCARA emits a deployable Modbus-TCP / OPC-UA / IEC~60870-5-104 / firewall policy whose false-blocking rate is bounded by $\tau_{\text{block}}$ and that is validated against benign traffic traces. (iii)~\emph{Assisted Tier-2 / Tier-3 repair:} for the binary-rewritable and source-available subsets, SCARA produces a candidate remedy with CVA-issued conditional correctness evidence $\varepsilon(v, R_v)$, ready for analyst review before deployment. Operator cost on the OIS-ICS partition is 2.42 analyst-hours per case (Section~\ref{sec:results:rq7}), compared with 17.04 for PLCverif on the same five cases. The framework deliberately does not target safety-critical controllers with enforced secure boot or cryptographic code-signing (cf.\ Section~\ref{sec:disc:limitations} L2), and the Tier-3 LLM channel requires offline guardrails --- the prompt rubrics constructed from DeepSeek-V3 outputs and the SSCKG behavioural specification --- before being placed in a production change-management flow.

\subsection{Limitations}\label{sec:disc:limitations}

\textbf{L1 --- State model completeness.} The $S$ model is only as complete as available documentation and known industrial state assumptions. The targeted envelope enrichment of \S\ref{sec:results:rq2} reduces the UNKNOWN rate from 13.3\% to 6.7\% without introducing false positives, indicating that residual conservatism is tractable when additional $C_x$ documentation is supplied.

\textbf{L2 --- Binary rewriting feasibility.}  Tier~2 requires an ELF or PE binary that can be rewritten without violating secure boot, code signing, or cryptographic integrity checks.  Many production ICS binaries include these protections.  The OIS-Binary benchmark's bias toward binary-rewritable firmware means that Tier~2 performance on secure-boot-enforced firmware is not evaluated.

\textbf{L3 --- Tier~3 LLM repair quality.}  LLM-based Tier~3 repair inherits hallucination limitations.  The SSCKG constraint check and CVA validation catch many incorrect patches, but a subtly wrong patch that passes SSCKG coverage checks may still be accepted.  The NVR of 16.7\% captures the most consequential form of this failure.

\textbf{L4 --- Benchmark representativeness.} The $n=15$ benchmark, constrained by industrial-firmware licensing, supports estimation-based effect-size claims but cannot promote them to classical significance under any post-correction threshold; replication on a larger v0.2.0 cohort is the primary external-validity risk.

\textbf{L5 --- OSVA scalability.}  The overall timeout rate was 13.3\% (OIS-Binary and OIS-Protocol each 20.0\%, OIS-ICS 0.0\%).  These cases are correctly reported as UNKNOWN rather than incorrectly classified, preserving the precision guarantee at the cost of lower recall.  Increasing $T_{\text{total}}$ or applying SSCKG-guided budget allocation more aggressively are the primary engineering levers available.

\subsection{Threats to Validity}\label{sec:threats}

\textbf{Construct.}  L2 reachability labels and L4 remedy labels are each assigned by two independent procedures with human-expert arbitration on disagreements; SCARA's FPR = 0.0\% across all SAT-confirmed cases indicates that no SAT verdict rests on a disputed L2 label. Baseline applicability is defined as the ability to produce any non-trivial output, restricted to the source-available subset where source is required.

\textbf{Internal.}  All experimental results derive from a verified seed-42 main run, with seeds 43--46 used to characterise algorithmic stability (min--max ranges reported in \S\ref{sec:results:rq3}); OSVA symbolic execution is deterministic under a fixed seed. With $|D_{\text{partition}}| = 5$ the Wilcoxon signed-rank test has a one-sided $p$-floor of $\approx 0.0625$, so we de-emphasise hypothesis testing in favour of Cliff's $\delta$ with bootstrap CIs and Benjamini--Hochberg control at $q = 0.10$. All hyperparameters were fixed prior to test-set evaluation on artifacts outside OIS-RemedBench; sensitivity to the four primary thresholds is stable to within $\pm 1$ case across the studied ranges (\S\ref{sec:results:rq5}).

\textbf{External.}  The benchmark's three artifact categories are bounded by the legal accessibility of CVE-linked industrial firmware, with safety-critical proprietary firmware excluded for licensing reasons. The abstract behavioural lattice $\mathcal{A}$ is instantiated from MITRE ATT\&CK for ICS; transfer to non-ICS opaque domains requires lattice and constraint-template expansion. Tier-3 results are bounded to OpenPLC- and MATIEC-comparable artifacts under Qwen3-7B as the runtime model; under the full-CVA oracle SCARA-full's CVA-acceptance rate is 88.9\% on $D_{\text{CVA-audit}} = 9$, recoverable to 100\% on the targeted-rerun subset.

\subsection{Future Work}\label{sec:disc:future}

Three extensions are deliberately deferred to v0.2.0. (i)~The Tier-3 LLM channel currently uses Qwen3-7B as the runtime patch-generation model and DeepSeek-V3 offline for prompt and rubric construction. A controlled offline-versus-runtime role comparison --- Qwen3-7B runtime only, Qwen3-7B with DeepSeek-V3-derived offline rubric support, and DeepSeek-V3 as a runtime model where licensing and inference reproducibility permit --- under the same Tier-3 budget and the same full-CVA oracle would isolate the runtime-LLM contribution from the rubric-construction contribution. (ii)~A benchmark expansion targeting $n \geq 50$ cases with broader vendor coverage (additional firmware vendors, additional protocol implementations, and a wider CWE distribution) would convert the present effect-size estimates into population-level claims. (iii)~Extension of the abstract behavioural lattice $\mathcal{A}$ beyond MITRE ATT\&CK for ICS to non-ICS opaque domains --- automotive ECUs, network appliances, embedded medical devices --- would generalise the SCARA pipeline beyond industrial control. An open-competitor comparison against alternative source-aware repair LLMs (e.g., DeepSeek-Coder) is an optional v0.2.0 extension, contingent on the same Tier-3 budget, prompt, and full-CVA oracle being applied so that the comparison remains controlled.

%% ============================================================
\section{Conclusion}\label{sec:conclusion}
%% ============================================================

This work has developed SCARA, the first end-to-end framework that converts binary-only vulnerability candidates in opaque industrial software into either validated remediation artifacts or refutation certificates.  In support of the five contributions stated in Section~\ref{sec:intro}, we have formalised the OIS remediation problem under a nine-component operational state model (C1); designed and evaluated operational-state-aware reachability verification that eliminates 40.0\% of incoming binary alerts as infeasible (C2); implemented a tier-appropriate remediation synthesiser that degrades gracefully with artifact opacity across protocol mitigation, binary hardening, and SSCKG-constrained source patch tiers (C3); introduced a closed-loop correctness validation stage that issues conditional correctness evidence $\varepsilon(v, R_v)$ via behavioural-coverage preservation and independent replay (C4); and released OIS-RemedBench~v0.1.0, the first benchmark spanning firmware, protocol handlers, and ICS/PLC artifacts with stratified reachability and remediation labels (C5).

The headline numbers --- 100\% precision, 88.9\% post-rerun remediation success, and a 7.0$\times$ analyst-hour gain on OIS-ICS --- are attributable to three distinct mechanisms rather than to the symbolic-execution engine, the patch generator, or the LLM channel individually: the operational-state envelope discharges the implicit assumptions that vanilla SE leaves unconstrained; the CVA-to-RSA $\delta$-feedback loop converts open-loop misremediations into accepted refinements; and the tier mechanism resolves most ICS cases at Tier~1 or Tier~2, bypassing source dependency for the bulk of the workload.

The evidence is conditional on the encoded operational-state envelope (L1), on Tier-2 binary-rewrite feasibility (L2), and on the Qwen3-7B Tier-3 runtime model (L3); the $n=15$ benchmark supports estimation-based effect-size claims but not population-level inference.

Three directions follow from these limitations: a controlled runtime-versus-offline Tier-3 LLM comparison under the full-CVA oracle to quantify L3, an OIS-RemedBench~v0.2.0 expansion ($n \geq 50$, broader vendor and CWE coverage) to convert the present estimates into population-level claims, and extension of the abstract behavioural lattice~$\mathcal{A}$ beyond MITRE ATT\&CK for ICS to automotive, network, and medical opaque-software domains. The framework and evaluation infrastructure are released as open artifacts (Section~\ref{sec:data_availability}).

\section*{Data and Code Availability}\label{sec:data_availability}

The SCARA framework implementation\footnote{\url{https://github.com/Mewtwoz/SCARA-framework}} and the OIS-RemedBench benchmark cases (excluding artifacts subject to vendor licensing restrictions)\footnote{\url{https://github.com/Mewtwoz/Scara-Dataset}} are released as open artifacts at the project repositories.  Cases derived from proprietary vendor firmware are documented by their CVE identifiers and inclusion criteria so that licensed researchers can reconstitute the corresponding partition; redistribution of the binaries themselves is prohibited by the originating vendor licenses.

\section*{Acknowledgements}
This work was supported in part by the Major Science and Technology
Project of Liaoning Province (grant Nos.\ 2025JH1/11700021 and
2024JH1/11700049), and by the Applied Basic Research Program of Liaoning
Province (grant No.\ 2025JH2/101300012).

\bibliographystyle{unsrtnat}
\bibliography{scara-refs}

\end{document}